\newcommand{\bfm}[1]{\ensuremath{\mathbf{#1}}}
\def\ba{\bfm a}     \def\bA{\bfm A}     \def\cA{{\cal  A}}     
     \def\bB{\bfm B}          
     \def\bD{\bfm D}     \def\cD{{\cal  D}}     
\def\be{\bfm e}     \def\bE{\bfm E}          
         \def\cF{{\cal  F}}     
     \def\bG{\bfm G}          
     \def\bI{\bfm I}     \def\cI{{\cal  I}}     
          \def\cJ{{\cal  J}}     
     \def\bK{\bfm K}          
     \def\bL{\bfm L}          
          \def\cN{{\cal  N}}     
     \def\bO{\bfm O}          
     \def\bP{\bfm P}          
     \def\bQ{\bfm Q}          
     \def\bR{\bfm R}          
\def\bs{\bfm s}     \def\bS{\bfm S}     \def\cS{{\cal  S}}     
\def\bt{\bfm t}          \def\cT{{\cal  T}}     
\def\bu{{\bfm u}}     \def\bU{\bfm U}          
\def\bv{\bfm v}     \def\bV{\bfm V}          
\def\bw{\bfm w}     \def\bW{\bfm W}          
\def\bx{\bfm x}     \def\bX{\bfm X}          
\def\by{\bfm y}     \def\bY{{\bfm Y}}          
\def\bz{\bfm z}     \def\bZ{\bfm Z}     \def\cZ{{\cal  Z}}     
\def\bzero{\bfm 0}
\newcommand{\bfsym}[1]{\ensuremath{\boldsymbol{#1}}}
\def \balpha   {\bfsym{\alpha}}       \def \bbeta    {\bfsym{\beta}}
\def \betta    {\bfsym{\eta}}
         \def \bomega   {\bfsym{\omega}}
\def \bGamma   {\bfsym{\Gamma}}       
\def \bTheta   {\bfsym{\Theta}}       
\def \bSigma   {\bfsym{\Sigma}}       
\def \bOmega   {{\bfsym{\Omega}}}
\DeclareMathOperator*{\argmin}{argmin}
\DeclareMathOperator{\var}{Var}
\DeclareMathOperator{\diag}{diag}
\newcommand{\E}{\mathbb{E}}
\DeclareMathOperator{\rank}{rank}
\DeclareMathOperator \Tr {\mathrm{tr}}
\DeclareMathOperator \col {\text{Col}}
\def \soft{\text{soft}}
\def \bone   {\bfsym{1}}
\def \RR	{\mathbb{R}}
\def \PP {\mathbb{P}}
\def \SS {\mathbb{S}}
\def \OO {\mathbb{O}}
\def \NN {\mathbb{N}}
\def \KL {\text{KL}}
\newcommand{\beq}  {\begin{equation}}
\newcommand{\eeq}  {\end{equation}}
\newcommand{\beqn} {\begin{eqnarray}}
\newcommand{\eeqn} {\end{eqnarray}}
\newcommand{\beqnn}{\begin{eqnarray*}}
\newcommand{\eeqnn}{\end{eqnarray*}}
\theoremstyle{definition}
\theoremstyle{plain}
\newtheorem{lem}{Lemma}
\newtheorem{prop}{Proposition}
\newtheorem{thm}{Theorem}
\newcommand{\lonenorm}[1]{\lVert#1\rVert_1}
\newcommand{\ltwonorm}[1]{\lVert#1\rVert_2}
\newcommand{\linfinity}[1]{\lVert#1\rVert_{\infty}}
\newcommand{\opnorm}[1]{\|#1\|_{\mathrm{op}}}
\newcommand{\fnorm}[1]{\|#1\|_{\mathrm{F}}}
\newcommand{\nnorm}[1]{\lVert#1\rVert_*}
\newcommand{\twotoinf}[1]{\lVert#1\rVert_{2\rightarrow \infty}}
\newcommand{\onetoone}[1]{\lVert#1\rVert_{1\rightarrow 1}}
\newcommand{\inftoinf}[1]{\lVert#1\rVert_{\infty \rightarrow\infty}}
\def \bell {\bfsym{\ell}}
\title{High-dimensional principal component analysis with heterogeneous missingness}
\author{Ziwei Zhu$^\ast$, Tengyao Wang$^{\ast,\dagger}$ and Richard J. Samworth$^{\ast}$ \\
 \normalsize
 $^\ast$Statistical Laboratory, University of Cambridge \\ \normalsize
 $^\dagger$Department of Statistical Science, University College London \\
\date{\today} }
\begin{document}

\maketitle

\begin{abstract}
  We study the problem of high-dimensional Principal Component Analysis (PCA) with missing observations.  In simple, homogeneous missingness settings with a noise level of constant order, we show that an existing inverse-probability weighted (IPW) estimator of the leading principal components can (nearly) attain the minimax optimal rate of convergence. 
  However, deeper investigation reveals both that, particularly in more realistic settings where the missingness mechanism is heterogeneous, the empirical performance of the IPW estimator can be unsatisfactory, and moreover that, in the noiseless case, it fails to provide exact recovery of the principal components.  Our main contribution, then, is to introduce a new method for high-dimensional PCA, called \texttt{primePCA}, that is designed to cope with situations where observations may be missing in a heterogeneous manner.  Starting from the IPW estimator, \texttt{primePCA} iteratively projects the observed entries of the data matrix onto the column space of our current estimate to impute the missing entries, and then updates our estimate by computing the leading right singular space of the imputed data matrix.  It turns out that the interaction between the heterogeneity of missingness and the low-dimensional structure is crucial in determining the feasibility of the problem.  We therefore introduce an incoherence condition on the principal components and prove that in the noiseless case, the error of \texttt{primePCA} converges to zero at a geometric rate when the signal strength is not too small.  An important feature of our theoretical guarantees is that they depend on average, as opposed to worst-case, properties of the missingness mechanism.  Our numerical studies on both simulated and real data reveal that \texttt{primePCA} exhibits very encouraging performance across a wide range of scenarios.

\end{abstract}

\section{Introduction}
One of the ironies of working with Big Data is that missing data play an ever more significant role, and often present serious difficulties for analysis.  For instance, a common approach to handling missing data is to perform a so-called \emph{complete-case analysis} \citep{LRu14}, where we restrict attention to individuals in our study with no missing attributes.  When relatively few features are recorded for each individual, one can frequently expect a sufficiently large proportion of complete cases that, under an appropriate missing at random hypothesis, a complete-case analysis may result in only a relatively small loss of efficiency.  On the other hand, in high-dimensional regimes where there are many features of interest, there is often such a small proportion of complete cases that this approach becomes infeasible.  As a very simple illustration of this phenomenon, imagine an $n \times p$ data matrix in which each entry is missing independently with probability $0.01$.  When $p = 5$, a complete-case analysis would result in around 95$\%$ of the individuals (rows) being retained, but even when we reach $p=300$, only around 5$\%$ of rows will have no missing entries.

The inadequacy of the complete-case approach in many applications has motivated numerous methodological developments in the field of missing data over the past 60 years or so, including imputation \citep{For83, Rub04}, factored likelihood \citep{And57} and Expectation-Maximisation approaches \citep{DLR77}; see, e.g., \citet{LRu14} for an introduction to the area.  Recent years have also witnessed increasing emphasis on understanding the performance of methods for dealing with missing data in a variety of high-dimensional problems, including sparse regression \citep{LMa12,BRT17}, classification \citep{Cai18}, sparse principal component analysis \citep{Elsener18} and covariance and precision matrix estimation \citep{Lou14,LT18}.


In this paper, we study the effects of missing data in one of the canonical problems of high-dimensional data analysis, namely dimension reduction via Principal Component Analysis (PCA).  This is closely related to the topic of \emph{matrix completion}, which has received a great deal of attention in the literature over the last decade or so \citep[e.g.][]{CRe09, CPl10, KMO10, MHT10, KLT11, CLM11, NWa12}.  There, the focus is typically on accurate recovery of the missing entries, subject to a low-rank assumption on the signal matrix; by contrast, our focus is on estimation of the principal eigenspaces.  Previously proposed methods for low-dimensional PCA with missing data include non-linear iterative partial least squares \citep{WoldLyttkens1969}, iterative PCA \citep{Kiers1997,JosseHusson2012} and its regularised variant \citep{JPH2009}; see \citet{DrayJosse2015} for a nice survey and comparative study.  More broadly, the \texttt{R-miss-tastic} website \texttt{https://rmisstastic.netlify.com/} provides a valuable resource on methods for handling missing data.

The importance of the problem of high-dimensional PCA with missing data derives from its manifold applications.  For instance, in many commercial settings, one may have a matrix of customers and products, with entries recording the number of purchases.  Naturally, there will typically be a high proportion of missing entries.  Nevertheless, PCA can be used to identify items that distinguish the preferences of customers particularly effectively, to make recommendations to users of products they might like and to summarise efficiently customers' preferences.  Later, we will illustrate such an application, on the Million Song Dataset, where we are able to identify particular songs that have substantial discriminatory power for users' preferences as well as other interesting characteristics of the user database.  Other potential application areas include health data, where one may seek features that best capture the variation in a population, and where the corresponding principal component scores may be used to cluster individuals into subgroups (that may, for instance, receive different treatment regimens).  

To formalise the problem we consider, suppose that the (partially observed) matrix $\bY \in \mathbb{R}^{n\times d}$ is of the form 
\begin{equation}
\label{Eq:ModelA}
\bY = \bX + \bZ, 
\end{equation}
for independent random matrices $\bX$ and $\bZ$, where $\bX$ is a low-rank matrix and $\bZ$ is a noise matrix with independent and identically distributed subgaussian entries having zero mean and unit variance. The low-rank property of $\bX$ is encoded through the assumption that it is generated via 
\begin{equation}
\label{Eq:ModelB}
	\bX = \bU\bV_K^\top,
\end{equation}
where $\bV_K\in\mathbb{R}^{d\times K}$ has orthonormal columns and $\bU$ is a random $n \times K$ matrix (with $n > K$) having independent and identically distributed rows with mean zero.

We are interested in estimating the column space of $\bV_K$, denoted by $\col(\bV_K)$, which is also the $K$-dimensional leading eigenspace of $\bSigma_{\by} := n^{-1}\mathbb{E} \bY^\top \bY$.  \citet{CKR15} considered a different but related model where $\bU$ in~\eqref{Eq:ModelB} is deterministic, and is not necessarily centred, so that $\bV_K$ is the top $K$ right singular space of $\mathbb{E}(\bY)$.  (By contrast, in our setting, $\mathbb{E}(\bY) = \bzero$, so the mean structure is uninformative for recovering $\bV_K$.)  In the context of \emph{$p$-homogeneous missingness}, where each entry of $\bY$ is observed independently with probability $p \in (0,1)$, \citet{CKR15} proposed to estimate $\col(\bV_K)$ by $\col(\widehat{\bV}_K)$, where $\widehat{\bV}_K$ is a simple estimator formed as the top $K$ eigenvectors of an inverse-probability weighted (IPW) version of the sample covariance matrix (here, the weighting is designed to achieve approximate unbiasedness).  Our first contribution, in Section~\ref{Sec:NoIncoherence}, is to provide a detailed, finite-sample analysis of this estimator in the model given by~\eqref{Eq:ModelA} and~\eqref{Eq:ModelB}, with a noise level of constant order.  The differences between the settings necessitate completely different arguments, and reveal in particular a new phenomenon in the form of a phase transition in the attainable risk bound for the $\sin\Theta$ loss function, i.e.~the Frobenius norm of the diagonal matrix of the sines of the principal angles between $\widehat{\bV}_K$ and $\bV_K$.  Moreover, we also provide a minimax lower bound in the case of estimating a single principal component, which reveals that this estimator achieves the minimax optimal rate up to a poly-logarithmic factor.

While this appears to be a very encouraging story for the IPW estimator, it turns out that it is really only the starting point for a more complete understanding of high-dimensional PCA.  For instance, in the noiseless case, the IPW estimator fails to provide exact recovery of the principal components.  Moreover, it is the norm rather than the exception in applications that the missingness mechanism is \emph{heterogeneous}, in the sense that the probability of observing entries of $\bY$ varies (often significantly) across columns.  For instance, in recommendation systems, some products will typically be more popular than others, and hence we observe more ratings in those columns.  As another example, in meta-analyses of data from several studies, it is frequently the case that some covariates are common across all studies, while others appear only in a reduced proportion of them.  In Section~\ref{sec:2.2}, we present an example to show that PCA algorithms can break down entirely for such heterogeneous missingness mechanisms when individual coordinates in $\bV_K$ may be large in absolute value.  Intuitively, if we do not observe the interaction between the $j$th and $k$th columns of $\bY$, then we cannot hope to estimate the $j$th or $k$th rows of $\bV_K$, and this will cause substantial error if these rows of $\bV_K$ contain significant signal.  This example illustrates that it is only possible to handle heterogeneous missingness in high-dimensional PCA with additional structure, and indicates that it is natural to assume \emph{incoherence} among the entries of $\bV_K$; i.e.~no single coordinate of $\bV_K$ is too large in absolute value.

Our main contribution, then, is to propose a new, iterative algorithm, called \texttt{primePCA} (short for \underline{p}rojected \underline{r}efinement for \underline{i}mputation of \underline{m}issing \underline{e}ntries in \underline{P}rincipal \underline{C}omponent \underline{A}nalysis), in Section~\ref{Sec:WithIncoherence}, to estimate $\bV_K$ under this incoherence assumption, even with heterogeneous missingness.  The initialiser for this algorithm is a modified version of the simple estimator discussed above, where the modification accounts for potential heterogeneity.  Each iteration of \texttt{primePCA} projects the observed entries of $\bY$ onto the column space of the current estimate of $\bV_K$ to impute missing entries, and then updates our estimate of $\bV_K$ by computing the leading right singular space of the imputed data matrix.  We show that in the noiseless setting, i.e., $\bZ = \bzero$, \texttt{primePCA} achieves exact recovery of the principal eigenspaces (with a geometric convergence rate) when the initial estimator is close to the truth and a sufficiently large proportion of the data are observed.  Moreover, we also provide a performance guarantee for the initial estimator, showing that it satisfies the desired requirement with high probability, conditional on any observed missingness pattern.  Code for our algorithm is available in the \texttt{R} package \texttt{primePCA} \citep{ZWS19}. 

To the best of our knowledge, \texttt{primePCA} is the first method for high-dimensional PCA that is designed to cope with settings where the missingness mechanism is heterogeneous.  Indeed, the previously mentioned works on high-dimensional PCA and other high-dimensional statistical problems  with missing data have either focused on a uniform missingness setting or have imposed a lower bound on entrywise observation probabilities, which reduces to this uniform case.  A key contribution of our work is to account explicitly for the the effect of a heterogeneous missingness mechanism, where the estimation error depends on average entrywise missingness rather than worst-case missingness; see the discussions after Theorem~\ref{thm:init_2toinf} and Proposition~\ref{prop:init_f} below. In Section~\ref{Sec:Simulations}, the empirical performance of \texttt{primePCA} is compared both with that of the initialiser, and with a popular method for matrix completion called \texttt{softImpute} \citep{MHT10, HML15}, which solves a nuclear-norm regularised optimisation problem, and which can be adapted to provide an estimate of $\bV_K$.  It turns out that in many settings, \texttt{primePCA} outperforms the \texttt{softImpute} algorithm, even when the latter is allowed access to the oracle choice of regularisation parameter for each dataset.  Our analysis of the Million Song Dataset is given in Section~\ref{Sec:RealData}.  In Section~\ref{Sec:Discussion}, we illustrate how some of the ideas in this work may be applied to other high-dimensional statistical problems involving missing data.  Proofs of our main results are deferred to Section~\ref{Sec:Proofs}; auxiliary results and their proofs are given in Section~\ref{Sec:Auxiliary}.

\subsection{Notation}
For a positive integer $T$, we write $[T] := \{1, \ldots, T\}$. For $\bv = (v_1,\ldots,v_d)^\top \in \RR^d$ and $p \in [1, \infty)$, we define $\|\bv\|_p := \bigl(\sum_{j = 1}^d |v_j|^p\bigr)^{1/p}$ and $\|\bv\|_{\infty} := \max_{j \in [d]} |v_j|$. We let $\mathcal{S}^{d-1} := \{\bu \in \mathbb{R}^d: \|\bu\|_2 = 1\}$ denote the unit Euclidean sphere in $\mathbb{R}^d$.  Given $\bu = (u_1,\ldots,u_d)^\top, \bv=(v_1,\ldots,v_d)^\top \in \RR^d$, we denote their Hamming distance by $d_{\mathrm{H}}(\bu, \bv) := \sum_{j=1}^d \mathds{1}_{\{u_j \neq v_j\}}$. Also, we write $\be_j\in\mathbb{R}^d$, $j\in[d]$, for the standard basis vector along the $j$th coordinate axis and $\mathbf{1}_{d}$ for the all-one vector in $\mathbb{R}^d$.

Given $\bu = (u_1,\ldots,u_d)^\top \in \mathbb{R}^d$, we write $\mathrm{diag}(\bu) \in \mathbb{R}^{d \times d}$ for the diagonal matrix whose $j$th diagonal entry is $u_j$. We let $\SS^{d\times d}$ denote the set of symmetric matrices in $\mathbb{R}^{d \times d}$ and let $\OO^{d_1\times d_2}$ denote the set of matrices in $\mathbb{R}^{d_1 \times d_2}$ with orthonormal columns.  For a matrix $\bA = (A_{ij}) \in \RR^{d_1 \times d_2}$, and $p,q\in[1,\infty]$, we write  $\|\bA\|_p:=\bigl(\sum_{i,j} |A_{ij}|^p\bigr)^{1/p}$ if $1\leq  p <\infty$ and $\|\bA\|_\infty:=\max_{i,j}|A_{ij}|$ for its entrywise $\ell_p$ norm, and $\|\bA\|_{p\to q} := \sup_{\|\bv\|_p = 1} \|\bA \bv\|_q$ for its $p$-to-$q$ operator norm, where $\bA$ is viewed as a representation of a linear map from $(\mathbb{R}^{d_1},\|\cdot\|_p)$ to $(\mathbb{R}^{d_2},\|\cdot\|_q)$. We provide special notation for the (Euclidean) operator norm and the Frobenius norm by writing $\|\bA\|_{\text{op}} := \|\bA\|_{2\to 2}$ and $\|\bA\|_{\text{F}}:=\|\bA\|_2$, respectively, and also write $\|\bA\|_*$ for the nuclear norm.  If $\bA \in \mathbb{S}^{d \times d}$ has the eigendecomposition $\bA = \bQ \, \mathrm{diag}(\mu_1,\ldots,\mu_d) \bQ^\top$ for some $\bQ\in\mathbb{O}^{d\times d}$ and $\mu_1\geq \cdots\geq \mu_d$, we write $\lambda_k(\bA):=\mu_k$ for its $k$th largest eigenvalue and abuse terminology slightly to refer to the leftmost $k$ columns of $\bQ$ as the top $k$ eigenvectors of $\bA$ when $\mu_k>\mu_{k+1}$. Also, we write $|\bA| := \bQ \, \mathrm{diag}(|\mu_1|,\ldots,|\mu_d|) \bQ^\top$. For any $\bA \in \RR^{n \times d}$ with singular value decomposition $\bA = \bU\diag(\mu_1, \ldots, \mu_r) \bV^\top$, where $\bU \in \OO^{n \times r}$, $\bV \in \OO^{d \times r}$ and $\mu_1 \geq \ldots \geq \mu_r > 0$, we write $\sigma_k(\bA):=\mu_k$ for its $k$th largest singular value and refer to the leftmost $k$ columns of $\bU$ (resp.\ $\bV$) as the top $k$ left (resp.\ right) singular vectors of $\bA$.  The Moore--Penrose pseudoinverse of $\bA$ is defined as $\bA^\dag := \bV\diag(\mu^{-1}_1, \ldots, \mu^{-1}_r)\bU^\top$.  If $S \subseteq [n]$, we write $\bA_S \in \mathbb{R}^{|S| \times d}$ for the matrix obtained by extracting the rows of $\bA$ that are in $S$.

For two matrices $\bA,\bB \in \SS^{d\times d}$, we write $\bA\preceq \bB$ if $\bB-\bA$ is positive semidefinite.  The Kronecker product of two matrices $\bA = (A_{ij})\in\mathbb{R}^{d_1\times d_2}$ and $\bB = (B_{ij}) \in\mathbb{R}^{d_1'\times d_2'}$ is defined as the block matrix
\[
\bA\otimes \bB := \begin{pmatrix} A_{11}\bB & \cdots & A_{1d_2}\bB\\ \vdots & \ddots & \vdots\\ A_{d_11}\bB & \cdots & A_{d_1d_2}\bB\end{pmatrix}\in\mathbb{R}^{d_1d_1'\times d_2d_2'}.
\]
When $d_1' = d_1$ and $d_2' = d_2$, the Hadamard product of $\bA$ and $\bB$, denoted $\bA \circ \bB$, is defined such that $(\bA \circ \bB)_{ij} = A_{ij} B_{ij}$ for any $i \in [d_1]$ and $j \in [d_2]$. Moreover, we say that $\bB$ is the \emph{Hadamard inverse} of $\bA$ if $\bA\circ \bB = \mathbf{1}_{d_1}\mathbf{1}_{d_2}^\top$. 
	
If $\bU, \bV\in\mathbb{O}^{d\times K}$, then the matrix of principal angles between $\col(\bU)$ and $\col(\bV)$ is given by $\Theta(\bU,\bV):= \diag\bigl(\cos^{-1}(\sigma_1),\ldots,\allowbreak \cos^{-1}(\sigma_K)\bigr)$, where $\sigma_j = \sigma_j(\bU^\top\bV)$; we let $\sin\Theta(\bU, \bV)$ be defined entrywise.  We define the loss function
\[
  L(\bU,\bV) := \|\sin\Theta(\bU, \bV)\|_{\mathrm{F}}.
\]
For a real-valued random variable $X$ and $r\in\mathbb{N}$, we define its (Orlicz) $\psi_r$-norm by 
 \[
 	\|X\|_{\psi_r} := \sup_{q \in \mathbb{N}}\, q^{-1/r} \bigl(\E|X|^q\bigr)^{1/q}. 
 \]
 For a random vector $\mathbf{x}$ taking values in $\mathbb{R}^d$ and $r \geq 1$, we define its (Orlicz) $\psi_r$-norm by 
\[
\|\mathbf{x}\|_{\psi_{r}} := \sup_{\bu \in \mathcal{S}^{d-1}} \| \bu^\top \bx \|_{\psi_r},
\]
and define a version that is invariant to invertible affine transformations by
\[
  	\|\mathbf{x}\|_{\psi^*_{r}} := \sup_{\bu\in \cS^{d -1}} \frac{\|\bu^{\top}(\bx-\mathbb{E}\bx)\|_{\psi_r}}{\var^{1/2}(\bu^\top \bx)}.
\]
We say that a $d$-dimensional random vector $\bx$ is \emph{sub-Gaussian} if $\|\bx\|_{\psi^*_2} < \infty$. For two distributions $P_1$ and $P_2$ defined on the same measurable space $(\mathcal{X},\mathcal{A})$ and such that $P_1$ is absolutely continuous with respect to $P_2$, the Kullback--Leibler divergence from $P_2$ to $P_1$ is given by
\[	
 	\KL(P_1, P_2) := \int_{\mathcal{X}} \log\frac{d P_1}{d P_2} \, dP_1. 
      \]
      Finally, for $a,b \geq 0$, we write $a \lesssim b$ if there exists a universal constant $C > 0$ such that $a \leq Cb$, and, where $a$ and $b$ may depend on an additional variable $x$, say, we write $a \lesssim_x b$ if there exists $C > 0$, depending only on $x$, such that $a \leq Cb$.

\section{The inverse-probability weighted estimator}
\label{Sec:NoIncoherence}
	 
For notational simplicity, we will write $\lambda_j := \lambda_j(\bSigma_{\bu})$ throughout the paper.  Let $\cA_{ij}$ denote the event that the $(i,j)$th entry $Y_{ij}$ of $\bY$ is observed. We define the revelation matrix $\bOmega = (\omega_{ij}) \in \RR^{n \times d}$ by $\omega_{ij} := \mathds{1}_{\cA_{ij}}$ and the partially observed data matrix 
\begin{equation}
\label{Eq:ModelC}
\bY_{\bOmega} := \bY \circ \bOmega. 
\end{equation}
In this section, we consider the simple case, where entries of the data matrix $\bY$ are observed independently and completely at random (i.e., independent of $(\bU,\bZ)$) with $p$-homogeneous missingness probability.  Thus, $\PP(\cA_{ij}) = p \in (0,1)$ for all $i \in [n], j \in [d]$, and $\cA_{ij}$ and $\cA_{i' j'}$ are independent for $(i,j) \neq (i', j')$.

For $i \in [n]$, let $\by_i^\top$ and $\bomega_i^\top$ denote the $i$th rows of $\bY$ and $\bOmega$ respectively, and define $\widetilde \by_i:=\by_i\circ \bomega_i$. Writing $\bP := \mathbb{E} \bomega_1\bomega_1^\top$ and $\bW$ for its Hadamard inverse, we have that under the $p$-homogeneous missingness mechanism, $\bP = p^2  \bigl\{\mathbf{1}_d\mathbf{1}_d^\top - (1 - p^{-1})\bI_d\bigr\}$ and $\bW = p^{-2}  \bigl\{\mathbf{1}_d\mathbf{1}_d^\top - (1 - p)\bI_d\bigr\}$. Following \citet{CKR15}, we consider the following weighted sample covariance matrix: 
\[
\bG:= \biggl(\frac{1}{n}\bY_{\bOmega}^\top \bY_{\bOmega} \biggr)\circ \bW = \biggl(\frac{1}{n}\sum_{i=1}^n \widetilde \by_i \widetilde \by_i^\top\biggr)\circ \bW. 
\]
The reason for including the weight $\bW$ is to ensure that $\E(\bG|\bY) = n^{-1}\bY^\top\bY$, so that $\bG$ is an unbiased estimator of $\bSigma_{\by}$. Related ideas appear in the work of \citet{CZh18} on high-dimensional covariance matrix estimation with missing data.  There, the authors propose a `generalised sample covariance matrix', where the covariance between any two dimensions $j$ and $k$ is estimated using only the observations for which both dimensions~$j$ and~$k$ were observed.  In practice, $p$ is typically unknown and needs to be estimated. It is thus natural to consider the following plug-in estimator $\widehat \bG$: 
\begin{equation}
  \label{Eq:wgm2}
	\widehat \bG = \biggl(\frac{1}{n}\bY_{\bOmega}^\top \bY_{\bOmega} \biggr)\circ \widehat \bW, 
\end{equation}
where $\widehat \bW = \widehat p^{-2}  \bigl\{\mathbf{1}_d\mathbf{1}_d^\top - (1 - \widehat p)\bI_d\bigr\}$ and $\widehat p := (nd)^{-1} \lonenorm{\bOmega}$ denotes the proportion of observed entries in $\bY$.  We let $\widehat\bV_K$ denote the top $K$ eigenvectors of $\widehat \bG$. 

\subsection{Theory for homogeneous missingness}
\label{Eq:HomogeneousTheory}

In order to describe our theoretical performance guarantee for $\widehat\bV_K$, we first list our conditions on the underlying data generating mechanism.  We assume that 
$(\bY_{\bOmega},\bOmega)$ is generated according to~\eqref{Eq:ModelA}, \eqref{Eq:ModelB} and~\eqref{Eq:ModelC}, where:
\begin{enumerate}[label={(A\arabic*)},noitemsep]
\item $\bU$, $\bZ$ and $\bOmega$ are independent;
\item $\bU$ has independent and identically distributed rows $(\bu_i:i\in[n])$ with $\mathbb{E}\bu_1 = 0$ and $\|\bu_1\|_{\psi_2^*}\leq  \tau$;
\item $\bZ = (z_{ij})_{i\in[n],j\in[d]}$ has independent and identically distributed entries with $\mathbb{E}z_{11} = 0$, $\var z_{11} = 1$ and $\|z_{11}\|_{\psi^*_2}\leq  \tau$;
\item $\|y^2_{1j}\|_{\psi_1} \leq  M$ for all $j \in [d]$;
\item $\bOmega$ has independent $\text{Bern}(p)$ entries.
\end{enumerate}
In many places in this work, we will think intuitively of $\tau$ and $M$ as constants.  In particular, if $\bU$ has multivariate normal rows and $\bZ$ has normal entries, then we can simply take $\tau = 1$.  For $M$, under the same normality assumptions, we have $\|y^2_{1j}\|_{\psi_1} = \var (y_{1j})$, so this intuition amounts to thinking of the variance of each component of our data as being of constant order.

The theorem below gives bounds on the expected proximity between $\col(\widehat \bV_K)$ and $\col(\bV_K)$.
\begin{thm}
\label{thm:1}
Assume (A1)--(A5) and that $n, d\geq 2$, $dp \geq 1$.  Write $R := \lambda_1 + 1$.  Then there exists a universal constant $C>0$ such that 
\begin{equation}
	\label{eq:thm1}
		\E L(\widehat\bV_K, \bV_K) \leq  \frac{CK^{1/2}}{\lambda_K p}\biggl\{\biggl(\frac{Md(R\tau^2 p + M \log d)\log^2 d}{n}\biggr)^{1 / 2} + \frac{Md\log^2 d \log n}{n}\biggr\}.
\end{equation}
In particular, if $n\geq d\log^2d \log^2n / (\lambda_1 p + \log d)$, then there exists $C_{M, \tau} > 0$, depending only on $M$ and $\tau$, such that  
\begin{equation}
	\label{eq:thm1_minimax}			
	\E L(\widehat\bV_K, \bV_K) \leq  \frac{C_{M, \tau}}{\lambda_K p}\biggl(\frac{Kd(\lambda_1 p+ \log d)\log^2 d}{n} \biggr)^{1 / 2}. 
\end{equation}
\end{thm}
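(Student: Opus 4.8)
The plan is, first, to convert the statement into a bound on $\E\opnorm{\widehat\bG - \bSigma_\by}$ via the Davis--Kahan theorem, and then to control this operator-norm error by a single application of a (truncated) matrix Bernstein inequality to a sum over the $n$ rows. Since $\bSigma_\by = \bV_K\bSigma_\bu\bV_K^\top + \bI_d$, its top-$K$ eigenspace is exactly $\col(\bV_K)$ and the gap between its $K$-th and $(K+1)$-st eigenvalues is $(\lambda_K + 1) - 1 = \lambda_K$. The Frobenius-norm variant of the Davis--Kahan theorem (of Yu, Wang and Samworth) then gives the unconditional inequality $L(\widehat\bV_K, \bV_K) \leq 2K^{1/2}\opnorm{\widehat\bG - \bSigma_\by}/\lambda_K$, so it suffices to show that $\E\opnorm{\widehat\bG - \bSigma_\by} \lesssim p^{-1}$ times the braced quantity in~\eqref{eq:thm1}.

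I would write $\widehat\bG - \bSigma_\by = (\widehat\bG - \bG) + (\bG - \bSigma_\by)$. For the plug-in term, $\widehat\bW - \bW = (\widehat p^{-2} - p^{-2})\mathbf{1}_d\mathbf{1}_d^\top - \{(\widehat p^{-2} - \widehat p^{-1}) - (p^{-2} - p^{-1})\}\bI_d$, and, $\widehat p$ being an average of $nd$ i.i.d.\ $\mathrm{Bern}(p)$ variables, Bernstein's inequality gives $|\widehat p - p| \lesssim (p\log n/(nd))^{1/2} + \log n/(nd)$ with probability $1 - n^{-c}$; on the event $\{\widehat p \geq p/2\}$ this yields $\|\widehat\bW - \bW\|_{\infty} \lesssim p^{-3}|\widehat p - p|$, and combining it with high-probability bounds $\opnorm{n^{-1}\bY_\bOmega^\top\bY_\bOmega} \lesssim_\tau p(pR + M)$ and $\max_j (n^{-1}\bY_\bOmega^\top\bY_\bOmega)_{jj} \lesssim_\tau pM$ shows $\opnorm{\widehat\bG - \bG}$ is of strictly lower order (the extra $d$ in the denominator of the rate of $\widehat p$ being crucial). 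For the main term, by independence of $\bOmega$ from $(\bU,\bZ)$ one has $\bG - \bSigma_\by = n^{-1}\sum_{i=1}^n \bD_i$ with $\bD_i := \diag(\by_i)\{(\bomega_i\bomega_i^\top)\circ\bW\}\diag(\by_i) - \bSigma_\by$ independent and mean zero. A direct moment computation over the Bernoulli entries gives $\E\bigl\{((\bomega_1\bomega_1^\top)\circ\bW)\,\diag(\by_1^{\circ 2})\,((\bomega_1\bomega_1^\top)\circ\bW)\bigr\} = p^{-1}\|\by_1\|_2^2\,\mathbf{1}_d\mathbf{1}_d^\top + (p^{-2}-p^{-1})\{\|\by_1\|_2^2\bI_d - \diag(\by_1^{\circ 2})\}$, so taking expectations also over $\by_1$ (which is sub-Gaussian with covariance $\bSigma_\by$ and $\|\by_1\|_{\psi_2^*}\lesssim\tau$, hence has controlled fourth moments) and subtracting $\bSigma_\by^2$, the matrix-Bernstein variance proxy satisfies $\opnorm{\E\bD_1^2} \lesssim_\tau p^{-1}\Tr(\bSigma_\by)\|\bSigma_\by\|_{\mathrm{op}} + p^{-2}\max_j\E(\|\by_1\|_2^2 y_{1j}^2) \lesssim_\tau p^{-1}dMR + p^{-2}dM^2$, using $\Tr(\bSigma_\by)\leq dM$, $\|\bSigma_\by\|_{\mathrm{op}} = R$, $\E(\|\by_1\|_2^2 y_{1j}^2) \lesssim dM^2$ and the elementary bound $R \leq dM$. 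For the per-summand operator-norm bound, after truncating each $y_{ij}^2$ at level of order $M\log(nd)$ and using $\|\bomega_i\|_2^2 \lesssim dp + \log n$ uniformly over $i$ with high probability, $\opnorm{\bD_i} \lesssim M\log(nd)\,p^{-2}(\|\bomega_i\|_2^2 + 1) + R \lesssim p^{-1}dM\log(nd)$. Plugging these into matrix Bernstein gives, on a good event, $\opnorm{\bG - \bSigma_\by} \lesssim_\tau p^{-1}\bigl\{dM(R\tau^2 p + M\log d)\log d/n\bigr\}^{1/2} + p^{-1}Md\log(nd)\log d/n$, which matches~\eqref{eq:thm1} up to a power of $\log d$ of slack.

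Assembling the two contributions, taking expectations over $(\bU,\bZ,\bOmega)$, and absorbing the residual events (each of probability $\lesssim n^{-c}$) into the trivial bound $L \leq K^{1/2}$, yields~\eqref{eq:thm1}. Finally,~\eqref{eq:thm1_minimax} follows by simplification: since $R\tau^2 p + M\log d \gtrsim_{M,\tau} \lambda_1 p + \log d$ (using $R\tau^2 p \geq \lambda_1 p$ and $M\log d \geq \log d$) and $M$ is treated as constant, the assumption $n \geq d\log^2d\log^2n/(\lambda_1 p + \log d)$ gives $Md\log^2d\log^2n/n \lesssim_{M,\tau} R\tau^2 p + M\log d$, so on squaring the second term in the braces of~\eqref{eq:thm1} is dominated by the first; and since $R\tau^2 p + M\log d = (\lambda_1+1)\tau^2 p + M\log d \lesssim_{M,\tau} \lambda_1 p + \log d$, the first term is $\lesssim_{M,\tau} \lambda_K^{-1}p^{-1}\bigl\{Kd(\lambda_1 p + \log d)\log^2 d/n\bigr\}^{1/2}$, as claimed.

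I expect the main obstacle to be this (truncated) matrix Bernstein step: the summands $\bD_i$ are products and so have only $\psi_{1/2}$-type tails, which forces a careful choice of truncation level and a careful treatment of the Bernoulli row sums $\|\bomega_i\|_2^2$ — when $dp$ is merely of order $1$ these cannot be union-bounded over $i$ by any Chernoff estimate sharper than $\|\bomega_i\|_2^2 \lesssim dp + \log n$, and it is precisely the trade-off between the variance term and this truncated operator-norm term that produces the phase transition and the factor of $\log n$ in the second summand of~\eqref{eq:thm1}. Carrying out the Bernoulli fourth-moment computation behind $\opnorm{\E\bD_1^2}$ and tracking the correct dependence on $R$, $M$, $\tau$ and $p$ throughout (in particular avoiding a spurious $p^{-2}$ in the square-root term) is the other delicate point.
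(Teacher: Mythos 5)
Your proposal follows the same high-level plan as the paper---reduce to $\E\opnorm{\widehat\bG - \bSigma_\by}$ via the Davis--Kahan theorem of Yu, Wang and Samworth, then apply matrix concentration---but the concentration step is genuinely different. The paper keeps the Davis--Kahan baseline at $(p^2/\widehat p^2)\bSigma_\by$ so that $\widehat p$ is never inverted in isolation, separates the diagonal ($\bB_i$) and off-diagonal ($\bA_i$) parts of $\widetilde\by_i\widetilde\by_i^\top$, symmetrises, conditions on $\widehat p_1,\dots,\widehat p_n$, and applies Tropp's Theorem~6.2 in its \emph{moment-growth} form; the diagonal contribution is then handled by a scalar $\psi_1$-Bernstein argument. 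You instead split $\widehat\bG - \bSigma_\by = (\widehat\bG - \bG) + (\bG - \bSigma_\by)$ and run a single \emph{truncated bounded} matrix Bernstein on the i.i.d.\ summands $\bD_i$. Your exact Bernoulli fourth-moment identity for $\E\bD_1^2$ is correct and in fact yields a tighter variance proxy than the paper's (no extra $\log d$), so the first summand of \eqref{eq:thm1} comes out fine.

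The genuine gap is in the second summand. To obtain a uniform a.s.\ bound you truncate $y_{ij}^2$ at level $\asymp M\log(nd)$---the level forced on you by a union bound over all $nd$ entries---and combined with $\max_i\|\bomega_i\|_2^2 \lesssim dp + \log n$ (and $dp\ge 1$) this gives a per-summand bound of order $p^{-1}dM\log(nd)\log n$ and hence a second summand of order $p^{-1}Md\log(nd)\log n\log d/n$, whereas the theorem has $p^{-1}Md\log^2 d\log n/n$. The offending factor $\log(nd)/\log d$ is unbounded when $n$ is super-polynomial in $d$; it is not ``a power of $\log d$'', and, more to the point, under the stated hypothesis $n\ge d\log^2 d\log^2 n/(\lambda_1 p + \log d)$ your weaker second term is no longer dominated by the first, so \eqref{eq:thm1_minimax} does not follow. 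The paper avoids paying the union-bound price precisely because it uses the moment-growth (sub-exponential) form of matrix Bernstein: $\|\,\|\by_i\|_\infty\|_{\psi_2}\lesssim (M\log d)^{1/2}$ is a per-row bound, independent of $n$, and feeding it into the moment-growth parameter gives $\rho\asymp Md(\max_i\widehat p_i)\log d$, so $\log d$ rather than $\log(nd)$; the single $\log n$ in \eqref{eq:thm1} then arrives only through $\E\max_i\widehat p_i\lesssim p\log n$ (Lemma~\ref{lem:bernsteinoulli_moments}). To repair your argument, replace the bounded Bernstein plus $y_{ij}^2$-truncation by the moment-growth form (conditioning on, or truncating only, the $\widehat p_i$), and also explicitly handle the bias $\E(\bD_i-\widetilde\bD_i)$ that any truncation creates before applying Bernstein to the recentred truncated sum---your sketch leaves this implicit.
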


When $M, \tau$ are regarded as constants, Theorem \ref{thm:2} below shows that \eqref{eq:thm1_minimax} is the minimax rate up to logarithmic factors when $K = 1$. Note that the condition $n\geq d\log^2 d\log^2 n/(\lambda_1 p+ \log d)$ is reasonable given the scaling requirement for consistency of the empirical eigenvectors  \citep{SSZ16, WFa17, JLu12}. Indeed, Theorem~5.1 in \cite{SSZ16} shows that when $\lambda_1 \gg 1$, the top eigenvector of the sample covariance matrix estimator is consistent if and only if $d / (n\lambda_1) \rightarrow 0$. If we regard $np$ as the effective sample size in our missing data PCA problem, then it is a sensible analogy to assume $d / (np\lambda_1) \rightarrow 0$ here, which implies that the condition $n\geq d\log^2d \log^2n / (\lambda_1 p + \log d)$ holds for large $n$, up to poly-logarithmic factors. 

As mentioned in the introduction, \citet{CKR15} considered the different but related problem of singular space estimation in a model in which $\bY = \bTheta + \bZ$, where $\bTheta$ is a matrix of the form $\bU \bV_K^\top$ for a \emph{deterministic} matrix $\bU$, whose rows are not necessarily centred.  In this setting $\bV_K$ is the leading $K$-dimensional right singular space of $\bTheta$, and the same estimator $\widehat\bV_K$ can be applied.  An important distinction is that, when the rows of $\bU$ are not centred and the entries of $\bTheta$ are of comparable magnitude, $\|\bTheta\|_{\mathrm{F}}$ is of order $\sqrt{nd}$, so when $K$ is regarded as a constant, it is natural to think of the singular values of $\bTheta$ as also being of order $\sqrt{nd}$.  Indeed, this is assumed in \citet{CKR15}.  On the other hand, in our model, where the rows of $\bU$ have mean zero, assuming that the eigenvalues are of order $\sqrt{nd}$ would amount to an extremely strong requirement, essentially restricting attention to very highly spiked covariance matrices.  Removing this condition requires completely different arguments.  
Moreover, \eqref{eq:thm1_minimax} reveals an interesting phase transition phenomenon that has not been observed previously in the literature. Specifically, if the signal strength is large enough that $\lambda_1  \geq p^{-1}\log d$, then we should regard $np$ as the effective sample size, as might intuitively be expected. On the other hand, if $\lambda_1 < p^{-1}\log d$, then the estimation problem is considerably more difficult and the effective sample size is of order $np^2$.  In fact, by inspecting the proof of Theorem~\ref{thm:1}, we see that in the high signal case, it is the difficulty of estimating the diagonal entries of $\bSigma_\by$ that drives the rate, while when the signal strength is low, the bottleneck is the challenge of estimating the off-diagonal entries.  By comparing \eqref{eq:thm1_minimax} with the minimax lower bound result in Theorem~\ref{thm:2} below, we see that this phase transition phenomenon is an inherent feature of this estimation problem rather than an artefact of the proof techniques we used to derived the upper bound.  

In order to state our minimax lower bound, we let $\mathcal{P}_{n,d}(\lambda_1,p)$ denote the class of distributions of pairs $(\bY_\bOmega,\bOmega)$ satisfying (A1), (A2), (A3) and~(A5) with $K=1$.  Since we are now working with vectors instead of matrices, we write $\bv$ in place of $\bV_1$. 
\begin{thm}
	\label{thm:2}
	There exists a universal constant $c>0$ such that  
	\[
		\inf_{\widehat\bv}\sup_{P\in\mathcal{P}_{n,d}(\lambda_1,p)} \E_P L(\widehat\bv, \bv) \geq c\min\biggl\{\frac{1}{\lambda_1 p}\biggl({\frac{d (\lambda_1 p+1)}{n}}\biggr)^{1 / 2}, 1\biggr\},
	\]
    where the infimum is taken over all estimators $\widehat{\bv} = \widehat{\bv}(\bY_{\bOmega},\bOmega)$ of $\bv$. 
\end{thm}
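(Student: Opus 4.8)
The plan is to prove the lower bound by Fano's method applied to a local family of spiked Gaussian models supported on the unit sphere. Since a model with i.i.d.\ $N(0,\lambda_1)$ rows for $\bU$, i.i.d.\ $N(0,1)$ entries for $\bZ$ and i.i.d.\ $\mathrm{Bern}(p)$ entries for $\bOmega$ lies in $\mathcal{P}_{n,d}(\lambda_1,p)$, it suffices to construct a large finite family $\{\bv_\beta:\beta\in B\}\subseteq\cS^{d-1}$ of leading principal components, together with the distributions $P_\beta$ of $(\bY_{\bOmega},\bOmega)$ in this Gaussian sub-class, such that the $P_\beta$ are statistically hard to distinguish while the $\bv_\beta$ are well separated in $L$. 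Concretely, I would fix a small absolute constant $c_1\in(0,1/2]$, a perturbation scale $\gamma>0$ with $\gamma^2(d-1)\le c_1$ to be chosen at the end, and invoke the Varshamov--Gilbert bound to obtain $B\subseteq\{-1,+1\}^{d-1}$ with $|B|\ge 2^{(d-1)/8}$ and $d_{\mathrm{H}}(\beta,\beta')\ge(d-1)/8$ for distinct $\beta,\beta'$; then set $\bv_\beta:=\sqrt{1-\gamma^2(d-1)}\,\be_1+\gamma\sum_{j=2}^{d}\beta_{j-1}\be_j$, so that $\|\bv_\beta\|_2=1$. (When $d$ is below an absolute constant the Varshamov--Gilbert step is vacuous, but the two-point case $|B|=2$ already delivers the claimed rate.) The crucial structural properties are that $|v_{\beta,j}|$ does not depend on $\beta$, and that $\langle\bv_\beta,\bv_{\beta'}\rangle=1-2\gamma^2 d_{\mathrm{H}}(\beta,\beta')$, which gives $L(\bv_\beta,\bv_{\beta'})^2=4\gamma^2 d_{\mathrm{H}}(\beta,\beta')\bigl(1-\gamma^2 d_{\mathrm{H}}(\beta,\beta')\bigr)\ge \gamma^2(d-1)/4$ under the constraint $\gamma^2(d-1)\le c_1\le 1/2$, hence $L(\bv_\beta,\bv_{\beta'})\gtrsim\gamma\sqrt d$.

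The second step is to bound the divergences. Since $\bOmega$ has the same law, independent of $\bv$, under every $P_\beta$, and the rows of $(\bY_{\bOmega},\bOmega)$ are i.i.d., conditioning on $\bOmega$ and applying the chain rule yields
\[
\KL(P_\beta,P_{\beta'})=n\,\E_S\bigl[\KL\bigl(N(0,\bSigma_{\beta,S}),N(0,\bSigma_{\beta',S})\bigr)\bigr],
\]
where $S\subseteq[d]$ is random with each index present independently with probability $p$, $\bv_{\beta,S}$ denotes the subvector of $\bv_\beta$ on $S$, and $\bSigma_{\beta,S}:=\bI_{|S|}+\lambda_1\bv_{\beta,S}\bv_{\beta,S}^\top$. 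Because $|v_{\beta,j}|=|v_{\beta',j}|$ we have $\|\bv_{\beta,S}\|_2=\|\bv_{\beta',S}\|_2=:\|\bw_S\|_2$, so $\bSigma_{\beta,S}$ and $\bSigma_{\beta',S}$ have equal determinant; the Sherman--Morrison formula then collapses the Gaussian divergence to
\[
2\,\KL\bigl(N(0,\bSigma_{\beta,S}),N(0,\bSigma_{\beta',S})\bigr)=\frac{\lambda_1^2\bigl(\|\bw_S\|_2^4-\langle\bv_{\beta,S},\bv_{\beta',S}\rangle^2\bigr)}{1+\lambda_1\|\bw_S\|_2^2}\le 4\gamma^2\lambda_1\,|S\cap D|\,\min\bigl(1,\lambda_1\|\bw_S\|_2^2\bigr),
\]
where $D:=\{j:\beta_{j-1}\ne\beta'_{j-1}\}$, the final step using $\|\bw_S\|_2^2-\langle\bv_{\beta,S},\bv_{\beta',S}\rangle=2\gamma^2|S\cap D|$ together with $\frac{\lambda_1 t}{1+\lambda_1 t}\le\min(1,\lambda_1 t)$. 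Taking $\E_S$ and bounding the last factor by $1$ gives $\E_S[2\,\KL]\le 4\gamma^2\lambda_1 p\,|D|$, whereas bounding it by $\lambda_1\|\bw_S\|_2^2$ and evaluating the quadratic form $\E_S\bigl[|S\cap D|\,\|\bw_S\|_2^2\bigr]=|D|\bigl(p^2+p(1-p)\gamma^2\bigr)$ in the Bernoulli indicators gives $\E_S[2\,\KL]\le 4\gamma^2\lambda_1^2|D|(p^2+p\gamma^2)$; here $\gamma^2\le c_1/d\lesssim p$ (using $dp\gtrsim1$), so $p^2+p\gamma^2\lesssim p^2$. Combining the two estimates and summing over the $n$ rows yields $\KL(P_\beta,P_{\beta'})\lesssim n\gamma^2\lambda_1 p\,d\,\min(1,\lambda_1 p)$.

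Finally I would optimise over $\gamma$. Choosing $\gamma^2\asymp\min\bigl(1/d,\,1/(n\lambda_1 p\min(1,\lambda_1 p))\bigr)$ with a sufficiently small absolute constant makes $\max_{\beta\ne\beta'}\KL(P_\beta,P_{\beta'})$ at most a small fraction of $\log|B|\asymp d$, so Fano's inequality gives $\inf_{\widehat\bv}\sup_{P\in\mathcal{P}_{n,d}(\lambda_1,p)}\E_P L(\widehat\bv,\bv)\gtrsim\min_{\beta\ne\beta'}L(\bv_\beta,\bv_{\beta'})\gtrsim\gamma\sqrt d$. Since $1/\min(1,\lambda_1 p)\asymp(\lambda_1 p+1)/(\lambda_1 p)$, this equals $\min\bigl\{\frac{1}{\lambda_1 p}(d(\lambda_1 p+1)/n)^{1/2},\,1\bigr\}$ up to an absolute constant, which is the claimed bound; the truncation at $1$ is forced precisely by the constraint $\gamma^2(d-1)\le c_1$, i.e.\ by the requirement that all competing hypotheses lie on $\cS^{d-1}$.

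I expect the main obstacle to be the divergence analysis under the random revelation pattern, and in particular obtaining the sharp $\min(1,\lambda_1 p)$ factor (equivalently the $\lambda_1 p+1$ in the rate, and hence the phase transition): this rests on noticing that the equal-norm structure of the perturbations cancels the log-determinant terms, so that Sherman--Morrison reduces $\KL$ to a ratio one can control cleanly, and on interleaving a crude $\min(1,\cdot)\le1$ estimate with a genuine second-moment computation of $\E_S\bigl[|S\cap D|\,\|\bw_S\|_2^2\bigr]$. By comparison, the $d$- and $n$-dependence and the appearance of $\min\{\cdot,1\}$ follow from the standard Varshamov--Gilbert-plus-Fano template once this divergence bound is in hand, and the very sparse regime $dp<1$, if not excluded by hypothesis, can be absorbed by a minor variant of the same argument.
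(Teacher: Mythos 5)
Your strategy mirrors the paper's: a Varshamov--Gilbert packing of leading eigenvectors with entrywise magnitudes independent of the hypothesis, the Sherman--Morrison collapse of the spiked-Gaussian KL (the paper isolates this as Lemma~\ref{lem:7}), an interleaved ``$\min(1,\cdot)$'' divergence bound to capture the phase transition, and Fano. The packing geometry differs cosmetically: the paper uses $d/2$ blocks of the form $h^{-1/2}(\cos\gamma,\pm\sin\gamma)$ via a Kronecker product, while you pin the first coordinate and sprinkle $\pm\gamma$ over the remaining $d-1$. Both have the crucial equal-magnitude property, so both make Lemma~\ref{lem:7} applicable after restriction to the observed coordinates $S$.

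There is, however, one place where your algebra is genuinely looser than the paper's, and it is exactly the place you flag as potentially delicate. You bound $\|\bw_S\|_2^4-\langle\bv_{\beta,S},\bv_{\beta',S}\rangle^2 \le 2\|\bw_S\|_2^2\cdot(\|\bw_S\|_2^2-\langle\bv_{\beta,S},\bv_{\beta',S}\rangle)$, which couples $\|\bw_S\|_2^2$ (supported on all of $S$) with $|S\cap D|$ and produces the diagonal term $\E[\omega_j^2]\gamma^2 = p\gamma^2$ in the second-moment computation; killing that term costs you the extraneous hypothesis $\gamma^2\lesssim p$, i.e.\ effectively $dp\gtrsim 1$, which the theorem does not assume. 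The fix is to not discard the exact factor: write
\[
\|\bw_S\|_2^4-\langle\bv_{\beta,S},\bv_{\beta',S}\rangle^2 \;=\; \langle\bv_{\beta,S},\,\bv_{\beta,S}+\bv_{\beta',S}\rangle\cdot\langle\bv_{\beta,S},\,\bv_{\beta,S}-\bv_{\beta',S}\rangle,
\]
and observe that in your construction $\bv_{\beta,S}+\bv_{\beta',S}$ vanishes on $S\cap D$ while $\bv_{\beta,S}-\bv_{\beta',S}$ is supported on $S\cap D$. Hence the two inner products depend on disjoint sets of Bernoulli indicators, are independent, and $\E_S$ of their product is a clean $O(p)\cdot O(p\gamma^2|D|)=O(p^2\gamma^2|D|)$ with no $dp$ condition. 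This is precisely what the paper's partition of $S$ into $S_0\sqcup S_{1+}\sqcup S_{1-}$ accomplishes. With that sharpening your argument goes through in full, so I would characterise your proposal as the same approach as the paper's, with one suboptimal intermediate estimate that you correctly identified as the potential weak spot and whose repair is a better factorisation rather than an additional case analysis.
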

Theorem~\ref{thm:2} reveals that $\widehat \bV_1$ in Theorem \ref{thm:1} achieves the minimax optimal rate of estimation up to a logarithmic factor when $M$ and $\tau$ are regarded as constants and $K = 1$. 
	
\subsection{General observation mechanism}
\label{sec:2.2}
Although Section~\ref{Eq:HomogeneousTheory} may appear to indicate that the problem of high-dimensional PCA with missing entries is essentially solved, the aim of this subsection is to show that the situation changes dramatically once the data can be missing heterogeneously.  

To this end, consider the following example.
Suppose that $\bomega$ is equal to $(1,0,1,\ldots,1)^\top$ or $(0,1,1,\ldots,1)^\top$ with equal probability, so that 
\[
	\bP = \mathbb{E}\bomega\bomega^\top = \begin{pmatrix}
		1/2 & 0 & 1/2 & \hdots & 1/2 \\
		0 & 1/2 & 1/2 & \hdots & 1/2 \\
		1/2 & 1/2 & 1 & \hdots & 1 \\
		\vdots & \vdots & \vdots & \ddots & \vdots \\
		1/2 & 1/2 & 1& \hdots & 1
	\end{pmatrix} \in \mathbb{R}^{d \times d}. 
\]
In other words, for each $i \in [n]$, we observe precisely one of the first two entries of $\by_i$, together with all of the remaining $(d - 2)$ entries. Let $\bSigma = \bI_d + \balpha\balpha^{\top}$, where $\balpha = (2^{-1/2}, 2^{-1/2}, 0, \ldots, 0)^{\top}\in\mathbb{R}^d$, and $\bSigma' = \bI_d + \balpha'(\balpha')^{\top}$, where $\balpha' = (2^{-1/2}, -2^{-1/2}, 0, \ldots, 0)^\top \in \RR^d$. Suppose that $\by \sim N_d(\bzero, \bSigma)$ and let $\widetilde \by :=\by\circ \bomega$, and similarly assume that $\by' \sim N_d(\bzero, \bSigma')$ and set $\widetilde \by' :=\by'\circ \bomega$.  Then $(\widetilde \by,\bomega)$ and $(\widetilde \by',\bomega)$ are identically distributed. However, the leading eigenvectors of $\bSigma$ and $\bSigma'$ are respectively $\balpha$ and $\balpha'$, which are orthogonal!
Thus, it is impossible to simultaneously estimate consistently the leading eigenvector of both $\bSigma$ and $\bSigma'$ from our observations. We note that it is the disproportionate weight of the first two coordinates in the leading eigenvector, combined with the failure to observe simultaneously the first two entries in the data, that makes the estimation problem intractable in this example.

The understanding derived from this example motivates us to assume that $\bV_K$ satisfies the incoherence condition $\|\bV_K\|_\infty\leq  \mu/\sqrt{d}$ for some $\mu>0$.  The intuition here is that the maximally incoherent case is where each column of $\bV_K$ is a unit vector proportional to a vector whose entries are either $1$ or $-1$, in which case $\|\bV_K\|_\infty = 1/\sqrt{d}$. Our condition asks for the columns of $\bV_K$ to have an incoherence of the same order as this maximally incoherent case. Similar conditions have been invoked in the literature on matrix completion \citep[e.g.,][]{CPl10,KMO10}, but for a different reason. There, the purpose is to ensure that the true right singular space is not too closely aligned with the standard basis, which allows the missing entries of the matrix to be inferred from relatively few observations. In our case, the incoherence condition ensures that significant estimation error in a few components of the leading eigenvectors does not affect the overall statistical performance too much. Thus, it rules out examples such as the one described above, where heavy corruption in only a few entries spoils any chance of consistent estimation.  

\section{Our new algorithm, \texttt{primePCA}}
\label{Sec:WithIncoherence}



We are now in a position to introduce and analyse our iterative algorithm \texttt{primePCA} to estimate principal eigenspaces of the covariance matrix $\bSigma_{\by}$.  The basic idea is to iteratively refine a current (input) iterate $\widehat \bV_{K}^{(\mathrm{in})}$ by first imputing the missing entries of the data matrix~$\bY_{\bOmega}$ using the current estimate of~$\bV_K$, and then applying a singular value decomposition (SVD) to the completed data matrix.  
More precisely, for $i \in [n]$, we let $\cJ_i$ denote the indices for which the corresponding entry of $\by_i$ is observed, and regress the observed data $\widetilde{\by}_{i, \cJ_i} = {\by}_{i, \cJ_i}$ on $(\widehat \bV_K^{(\mathrm{in})})_{\cJ_i}$ to obtain an estimate $\widehat \bu_i$ of the $i$th row of $\bU$.  This is natural in view of the data generating mechanism $\by_i = \bV_K\bu_i + \bz_i$.  We then use $\widehat \by_{i,\mathcal{J}_i^{\mathrm{c}}} := (\widehat \bV_K^{(\mathrm{in})} \widehat \bu_i)_{\cJ_i^{\mathrm{c}}}$ to impute the missing values $\by_{i,\cJ_i^{\mathrm{c}}}$, retain the original observed entries as $\widehat \by_{i,\mathcal{J}_i} := \by_{i,\mathcal{J}_i}$, and set our next (output) iterate $\widehat \bV_{K}^{(\mathrm{out})}$ to be the top $K$ right singular vectors of the imputed matrix $\widehat \bY := (\widehat\by_{1}, \ldots, \widehat\by_{n})^{\top}$.  To motivate this final choice, observe that when $\bZ = \bzero$, we have $\rank(\bY) = K$; we therefore have the SVD $\bY = \bL\bGamma\bR^\top$, where $\bL \in \OO^{n \times K}, \bR \in \OO^{d \times K}$ and $\bGamma \in \RR^{K \times K}$ is diagonal with positive diagonal entries.  This means that $\bR = \bV_K\bU^\top \bL\bGamma^{-1}$, so the column spaces of $\bR$ and $\bV_K$ coincide.  For convenience, pseudocode of a single iteration of refinement in this algorithm is given in Algorithm~\ref{algo:1}.  

\begin{algorithm}[hbtp]
	\renewcommand{\algorithmicrequire}{\textbf{\underline{Input}:}}
	\renewcommand{\algorithmicensure}{\textbf{\underline{Output}:}}
	\caption{\texttt{refine}$(K,\widehat \bV_{K}^{(\mathrm{in})},\bOmega,\bY_{\bOmega})$, a single step of refinement of current iterate $\widehat \bV_{K}^{(\mathrm{in})}$}
	\vspace{.2cm}		
         \algorithmicrequire{ $K \in [d]$, $\widehat \bV_{K}^{(\mathrm{in})} \in \OO^{d \times K}$, $\bOmega \in \{0, 1\}^{n \times d}$ \text{ with } $\min_{i} \|\bomega_i\|_1 \geq 1$, $\bY_{\bOmega} \in \RR^{n \times d}$} \\
         \algorithmicensure{ $\widehat \bV^{(\mathrm{out})}_{K} \in \OO^{d \times K}$}
	\vspace{.2cm}
        \label{algo:1}
        \begin{algorithmic}[1]
        		\For{$i$ in $[n]$}
			\State{$\cJ_i \leftarrow \{j \in [d]: \omega_{ij} = 1\}$}
    			\State{$\widehat \bu_i \leftarrow (\widehat \bV_{K}^{(\mathrm{in})})_{\cJ_i}^{\dagger} \widetilde{\by}_{i, \cJ_i}$} 
    			\State{$\widehat \by_{i, \cJ_i^{\mathrm{c}}} \leftarrow \widehat \bV_{K}^{(\mathrm{in})} \widehat\bu_{i, \cJ_i^{\mathrm{c}}}$} 
    			\State{$\widehat \by_{i, \cJ_i} \leftarrow \by_{i, \cJ_i}$}
		\EndFor
		
		\State{$\widehat \bY \leftarrow (\widehat\by_{1}, \ldots, \widehat\by_{n})^{\top}$}
		\State{$\widehat \bV^{(\mathrm{out})}_{K} \leftarrow$ top $K$ right singular vectors of $\widehat \bY$}

        \end{algorithmic}	
\end{algorithm}
We now seek to provide formal justification for Algorithm~\ref{algo:1}.  For any $\bV^{(1)}, \bV^{(2)} \in \OO^{d \times K}$, we let $\bW_1 \bD_{\bV^{(1)}, \bV^{(2)}} \bW_2^\top$ be an SVD of $(\bV^{(2)})^\top\bV^{(1)}$ and let $\bW_{\bV^{(1)}, \bV^{(2)}} := \bW_1\bW_2^\top$. The two-to-infinity distance between $\bV^{(1)}$ and $\bV^{(2)}$ is then defined to be 
\[
	\cT(\bV^{(1)}, \bV^{(2)}) := \twotoinf{\bV^{(1)} - \bV^{(2)} \bW_{\bV^{(1)}, \bV^{(2)}}}. 
\]
Note that the definition of $\cT(\bV^{(1)}, \bV^{(2)})$ does not depend on our choice of SVD and that $\cT(\bV^{(1)}, \bV^{(2)}) = \cT(\bV^{(1)}\bO_1, \bV^{(2)}\bO_2)$ for any $\bO_1, \bO_2 \in \OO^{K \times K}$.  The following proposition considers the noiseless setting $\bZ = \bzero$, and provides conditions under which, for any estimator $\widehat \bV_K^{(\mathrm{in})}$ that is close to $\bV_K$, a single iteration of refinement in Algorithm~\ref{algo:1} contracts the two-to-infinity distance between their column spaces.  In a slight abuse of notation, we write $\bOmega^{\mathrm{c}} := \mathbf{1}_d\mathbf{1}_d^\top - \bOmega$.
	
\begin{prop}	
	\label{thm:4}
	Let $\widehat\bV_K^{(\mathrm{out})} := \emph{\texttt{refine}}\bigl(K,\widehat \bV_{K}^{(\mathrm{in})},\bOmega,\bY_{\bOmega}\bigr)$ as in Algorithm~\ref{algo:1} and let $\Delta := \cT(\widehat \bV_K^{(\mathrm{in})}, \bV_K)$.
        We assume that $\min_{i \in [n]} \|\bomega_i\|_1 > K$, that $\min_{i \in [n]} \frac{d^{1 / 2}\sigma_{K}( (\widehat\bV_{K}^{(\mathrm{in})})_{\cJ_i})}{|\cJ_i|^{1/2}} \geq 1/\sigma_* > 0$, and write the SVD of $\bY$ as $\bL\bGamma\bR^\top$.  Suppose that $\bZ = \bzero$, and that both $\twotoinf{\bL} \leq \mu_1 ({K / n})^{1 / 2}$ and $\twotoinf{\bR} \leq \mu_2 ({K / d})^{1 / 2}$ hold
        for some $\mu_1,\mu_2 \geq 1$.  Then there exist $c_1,C > 0$, depending only on $\mu_1,\mu_2$ and $\sigma_*$, such that whenever 
\begin{enumerate}[label={(\roman*)},noitemsep]
		\item $\Delta \leq \frac{c_1\sigma_K(\bGamma)}{K ^ 2\sigma_1(\bGamma) \sqrt{d}}$,
                \item $\rho := \frac{CK^{2} \sigma_1(\bGamma) \|{\bOmega^{\mathrm{c}}}\|_{1 \to 1}}{\sigma_K(\bGamma) n} < 1$,
                \end{enumerate}
                we have that
                
	\[
		\cT(\widehat\bV_K^{(\mathrm{out})}, \bV_K) \leq \rho \Delta. 
	\]
      \end{prop}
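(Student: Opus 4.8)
The plan is to track how the imputation map acts on the data matrix and show that, up to a controlled error, the imputed matrix $\widehat{\bY}$ agrees with $\bY$ on a large set of rows, so that the perturbation of its leading right singular space is small.  First I would fix $i \in [n]$ and analyse the single-row regression step.  Since $\bZ = \bzero$ we have $\by_i = \bV_K \bu_i$ exactly, so $\by_{i,\cJ_i} = (\bV_K)_{\cJ_i} \bu_i$ lies in $\col\bigl((\bV_K)_{\cJ_i}\bigr)$.  The estimate $\widehat{\bu}_i = (\widehat{\bV}_K^{(\mathrm{in})})_{\cJ_i}^\dagger \by_{i,\cJ_i}$ is the least-squares solution using the perturbed design $(\widehat{\bV}_K^{(\mathrm{in})})_{\cJ_i}$; using the invertibility hypothesis $\sigma_K\bigl((\widehat{\bV}_K^{(\mathrm{in})})_{\cJ_i}\bigr) \geq |\cJ_i|^{1/2}/(\sigma_* d^{1/2})$ together with $\cT(\widehat{\bV}_K^{(\mathrm{in})}, \bV_K) = \Delta$, a standard perturbation bound for pseudoinverses shows that $\widehat{\by}_i := \widehat{\bV}_K^{(\mathrm{in})}\widehat{\bu}_i$ satisfies $\twotoinf{\widehat{\by}_i - \by_i \bW}$ small, where $\bW$ is the appropriate orthogonal alignment; the key point is that the two-to-infinity control of $\widehat{\bV}_K^{(\mathrm{in})} - \bV_K\bW$ propagates through the regression because only the rows in $\cJ_i$ matter and the norm of $\bu_i$ is controlled via $\bu_i = (\bV_K)_{\cJ_i}^\dagger \by_{i,\cJ_i}$ and $\sigma_K\bigl((\bV_K)_{\cJ_i}\bigr)$, which hypothesis~(ii)'s control of $\bOmega^{\mathrm{c}}$ keeps bounded away from zero for most $i$.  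Summing the row-wise errors, I would obtain $\twotoinf{\widehat{\bY} - \bY \bW'} \lesssim \Delta \cdot (\text{factor})$, and crucially a bound on $\opnorm{\widehat{\bY} - \bY\bW'}$ or $\fnorm{\widehat{\bY} - \bY\bW'}$ in terms of $\Delta$ and $\onetoone{\bOmega^{\mathrm{c}}}$ — the latter because rows with many missing entries contribute the imputation error, and the total number of missing entries per column is $\onetoone{\bOmega^{\mathrm{c}}}$.

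Second I would pass from the matrix perturbation bound to the singular-space perturbation bound.  Writing $\bY = \bL\bGamma\bR^\top$, the target space is $\col(\bR) = \col(\bV_K)$, and $\widehat{\bV}_K^{(\mathrm{out})}$ is the top-$K$ right singular subspace of $\widehat{\bY}$.  Apply the Wedin $\sin\Theta$ theorem: $\cT(\widehat{\bV}_K^{(\mathrm{out})}, \bV_K)$ — which is the two-to-infinity version — I would bound by combining a Frobenius/operator-norm Davis–Kahan bound with a leave-one-out or direct row-wise argument exploiting the incoherence hypotheses $\twotoinf{\bL} \leq \mu_1(K/n)^{1/2}$ and $\twotoinf{\bR} \leq \mu_2(K/d)^{1/2}$.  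Concretely, $\widehat{\bV}_K^{(\mathrm{out})} - \bV_K \bW'' = \widehat{\bY}^\top \widehat{\bL} \widehat{\bGamma}^{-1} - \bV_K\bW''$, and I would expand $\widehat{\bY}^\top = \bR\bGamma\bL^\top + \bE^\top$ with $\bE := \widehat{\bY} - \bY\bW'$, so that the two-to-infinity norm of the difference splits into a term governed by $\twotoinf{\bR}$ times an operator-norm perturbation of $\bL, \bGamma$, and a term $\twotoinf{\bE^\top \widehat{\bL} \widehat{\bGamma}^{-1}}$.  The condition-number factors $\sigma_1(\bGamma)/\sigma_K(\bGamma)$ and powers of $K$ enter exactly here, through the eigengap in Davis–Kahan and through bounding $\widehat{\bGamma}^{-1}$ via Weyl once (i) guarantees the perturbation is smaller than $\sigma_K(\bGamma)/2$.

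Finally I would collect constants.  Condition~(i), $\Delta \leq c_1 \sigma_K(\bGamma)/(K^2 \sigma_1(\bGamma)\sqrt{d})$, is used to ensure the intermediate quantities (the perturbed pseudoinverses, $\widehat{\bGamma}$) stay in their stable regime so that all the perturbation inequalities apply with absolute constants; condition~(ii), $\rho = CK^2\sigma_1(\bGamma)\onetoone{\bOmega^{\mathrm{c}}}/(\sigma_K(\bGamma) n) < 1$, is precisely the contraction factor: the dominant term in $\cT(\widehat{\bV}_K^{(\mathrm{out})},\bV_K)$ is $\Delta$ times a quantity $\asymp K^2 \sigma_1(\bGamma)\onetoone{\bOmega^{\mathrm{c}}}/(\sigma_K(\bGamma)n)$, because each missing entry forces one unit of imputation error scaled by $\Delta$, there are $\onetoone{\bOmega^{\mathrm{c}}}$ of them in the worst column, and the SVD step amplifies by the condition number and dimension-counting factors.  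I expect the main obstacle to be the two-to-infinity singular-subspace perturbation bound: ordinary Davis–Kahan gives Frobenius/operator control, but here we need entrywise ($2\to\infty$) control of $\widehat{\bV}_K^{(\mathrm{out})} - \bV_K\bW''$, which requires either a careful leave-one-out decoupling argument (delicate because the imputation in row $i$ depends on $\widehat{\bV}_K^{(\mathrm{in})}$, not on $\widehat{\bY}$, so the rows of $\bE$ are actually independent-ish given $\bOmega$, which helps) or the Abbe–Fan–Wang–Zhong-type expansion; threading the incoherence constants $\mu_1, \mu_2, \sigma_*$ through this without losing extra powers of $K$ or $d$ is the crux of the argument.
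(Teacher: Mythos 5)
Your sketch follows essentially the same route as the paper: bound the per-row imputation error $\widehat\by_{i,\cJ_i^{\mathrm{c}}} - \by_{i,\cJ_i^{\mathrm{c}}}$ entrywise by $\Delta$ times $\sigma_1(\bGamma)\mu_1\mu_2 K/\sqrt{n}$ using the pseudoinverse stability hypothesis; assemble these into $\|\bOmega^{\mathrm{c}}\|_1$- and $\|\bOmega^{\mathrm{c}}\|_{1\to 1}$-weighted bounds on $\opnorm{\bL^\top \bE}$, $\opnorm{\bE\bR}$, $\opnorm{\bL^\top\bE\bR}$, $\fnorm{\bE}$ via $\twotoinf{\bL}$, $\twotoinf{\bR}$; control $\opnorm{\sin\Theta(\widehat\bL,\bL)}$ and $\opnorm{\sin\Theta(\widehat\bR,\bR)}$ by a singular-space perturbation bound; and upgrade to $\cT(\cdot,\cdot)$ via a deterministic two-to-infinity expansion. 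The paper uses Cai--Zhang (2018a, Theorem~1) rather than Wedin for the operator-norm $\sin\Theta$ step and Cape--Tang--Priebe (Theorem~3.7) for the two-to-infinity step, and the whole argument is deterministic conditional on $(\bY,\bOmega,\widehat\bV_K^{(\mathrm{in})})$, so the leave-one-out / decoupling considerations you raise are unnecessary here; one small misreading is that $\sigma_K\bigl((\widehat\bV_K^{(\mathrm{in})})_{\cJ_i}\bigr) \geq |\cJ_i|^{1/2}/(\sigma_* d^{1/2})$ is a stated hypothesis of the proposition, not a consequence of condition~(ii).
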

      In practice, in cases where either of the two conditions on $\min_{i \in [n]} \|\bomega_i\|_1$ or $\sigma_*$ is not satisfied, we first perform a screening step that restricts attention to a set of row indices for which the data contain sufficient information to estimate the $K$ principal components.  This screening step is explicitly accounted for in Algorithm~\ref{algo:2} below, as well as in the theory that justifies it.

\begin{algorithm}[hbtp]
\caption{\texttt{primePCA}, an iterative algorithm for estimating $\bV_K$ given initialiser $\widehat \bV^{(0)}_{K}$}
\renewcommand{\algorithmicrequire}{\textbf{\underline{Input}:}}
\renewcommand{\algorithmicensure}{\textbf{\underline{Output}:}}
\vspace{.2cm}		
         \algorithmicrequire{$\; K \in [d], \widehat{\bV}^{(0)}_{K} \in \mathbb{O}^{d \times K}, \bOmega \in \{0, 1\}^{n \times d}, \bY_{\bOmega} \in \RR^{n \times d}$, $n_{\textnormal{iter}} \in \mathbb{N}$, $\sigma_* \in (0,\infty)$, $\kappa^* \in [0, \infty)$}\\
         \algorithmicensure{$\; \widehat \bV_{K} \in \RR^{d \times K}$}
\vspace{.2cm}
\label{algo:2}
\begin{algorithmic}[1]
  \For{$i$ in $[n]$}
          \State{$\cJ_i \leftarrow \{j \in [d]: \omega_{ij} = 1\}$}
          \EndFor
                        \For{$t$ in $[n_{\textnormal{iter}}]$}
                        \State $\cI^{(t-1)} \leftarrow \bigl\{i: \lonenorm{\bomega_i} > K, \sigma_{K}( (\widehat\bV_{K}^{(t-1)})_{\cJ_i}) \geq \frac{|\cJ_i|^{1/2}}{d^{1 / 2}\sigma_*}\bigr\}$
		\State $\widehat \bV^{(t)}_{K} \leftarrow \texttt{refine}(K, \widehat\bV^{(t - 1)}_{K},\bOmega_{\mathcal{I}^{(t-1)}}, (\bY_{\bOmega})_{\mathcal{I}^{(t-1)}})$ \# \texttt{refine} is defined in Algorithm \ref{algo:1}. 
		\If{$L(\widehat \bV^{(t)}_{K}, \widehat\bV^{(t - 1)}_{K}) < \kappa^*$} \textbf{break}
                \EndIf
	\EndFor	\\
	\Return $\widehat\bV_{K} = \widehat\bV^{(t)}_{K}$					
\end{algorithmic}
\end{algorithm}

Algorithm~\ref{algo:2} provides pseudocode for the iterative \texttt{primePCA} algorithm, given an initial estimator $\widehat{\bV}_K^{(0)}$.  The iterations continue until either we hit the convergence threshold $\kappa^*$ or the maximum iteration number $n_{\text{iter}}$.  Theorem~\ref{thm:primepca} below guarantees that, in the noiseless setting of Proposition~\ref{thm:4}, the \texttt{primePCA} estimator converges to $\bV_K$ at a geometric rate.
\begin{thm}
	\label{thm:primepca}
For $t \in [n_{\mathrm{iter}}]$, let $\widehat\bV_K^{(t)}$ be the $t^{\mathrm{th}}$ iterate of Algorithm~\ref{algo:2} with input $K$, $\widehat \bV^{(0)}_{K}$, $\bOmega \in \{0, 1\}^{n \times d}$, $\bY_{\bOmega} \in \RR^{n \times d}$, $n_{\textnormal{iter}} \in \mathbb{N}$, $\sigma_* \in (0,\infty)$ and $\kappa^*= 0$.	Write $\Delta := \cT(\widehat\bV^{(0)}_K, \bV_K)$ and let
	\[
        		\cI := \biggl\{i: \lonenorm{\bomega_i} > K, \sigma_{K}( (\bV_{K})_{\cJ_i}) \ge \frac{|\cJ_i|^{1/2}}{d^{1 / 2}\sigma_*}\biggr\},
	\]
where $\mathcal{J}_i := \{j:\omega_{ij}=1\}$.  Let $\bY_\cI = \bL\bGamma\bR^\top$ be an SVD of $\bY_{\cI}$. Suppose that both $\twotoinf{\bL} \leq \mu_1 ({K / |\cI|})^{1 / 2}$ and $\twotoinf{\bR} \leq \mu_2 ({K / d})^{1 / 2}$ hold, 
        for some $\mu_1,\mu_2 \geq 1$.  Let
        \[
          \mathcal{Z} := \biggl\{\frac{\sigma_{K}\bigl( (\bV_{K})_{\cJ_i}\bigr)d^{1/2}}{|\cJ_i|^{1/2}} : i \in [n], \lonenorm{\bomega_i} > K\biggr\},
        \]
and assume that $\epsilon := \min_{z \in \cZ} |z - \sigma_*^{-1}| > 0$.  
	Then there exist $c_1,C > 0$, depending only on $\mu_1,\mu_2$, $\sigma_*$ and $\epsilon$, such that whenever
\begin{enumerate}[label={(\roman*)},noitemsep]
		\item $\Delta \leq \frac{c_1\sigma_K(\bY_{\cI})}{K ^ 2\sigma_1(\bY_{\cI})\sqrt{d}}$,
                \item $\rho := \frac{CK^{2} \sigma_1(\bY_{\cI}) \|{\bOmega_{\cI}^{\mathrm{c}}}\|_{1 \to 1}}{\sigma_K(\bY_{\cI}) |\cI|} < 1$,
                \end{enumerate}
	we have that for every $t \in [n_{\mathrm{iter}}]$, 
	\[
		\cT(\widehat\bV_K^{(t)}, \bV_K) \leq \rho^{t}\Delta. 
    \]
\end{thm}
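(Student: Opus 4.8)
The plan is to prove the bound by induction on $t$, with the single-step contraction of Proposition~\ref{thm:4} serving as the engine; the hard part will be showing that the data-driven screening set $\cI^{(t-1)}$ formed inside Algorithm~\ref{algo:2} coincides with the oracle set $\cI$ (defined through the true $\bV_K$) at \emph{every} iteration, so that the fixed matrix $\bY_\cI$ and the quantities $\sigma_1(\bY_\cI)$, $\sigma_K(\bY_\cI)$, $\twotoinf{\bL}$, $\twotoinf{\bR}$ that govern the contraction may legitimately be reused at each step. I would first observe that, since $\kappa^* = 0$ and $L \geq 0$, the stopping test $L(\widehat\bV_K^{(t)}, \widehat\bV_K^{(t-1)}) < \kappa^*$ is never triggered, so all $n_{\mathrm{iter}}$ iterations run exactly as written; the base case $t = 0$ is immediate from $\cT(\widehat\bV_K^{(0)}, \bV_K) = \Delta = \rho^0\Delta$.

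For the inductive step I would assume $\cT(\widehat\bV_K^{(t-1)}, \bV_K) \leq \rho^{t-1}\Delta \leq \Delta$ and establish screening stability as follows. Writing $\bW := \bW_{\widehat\bV_K^{(t-1)}, \bV_K}$, and using that extracting the rows indexed by $\cJ_i$ commutes with right multiplication by $\bW$, that $\sigma_K$ is invariant under multiplication by an orthogonal matrix, and Weyl's inequality for singular values, one has for every $i$ with $\lonenorm{\bomega_i} > K$ that
\[
  \biggl| \frac{d^{1/2}\sigma_K\bigl((\widehat\bV_K^{(t-1)})_{\cJ_i}\bigr)}{|\cJ_i|^{1/2}} - \frac{d^{1/2}\sigma_K\bigl((\bV_K)_{\cJ_i}\bigr)}{|\cJ_i|^{1/2}} \biggr| \leq \frac{d^{1/2}}{|\cJ_i|^{1/2}} \bigl\| (\widehat\bV_K^{(t-1)} - \bV_K \bW)_{\cJ_i} \bigr\|_{\mathrm{op}} \leq d^{1/2} \cT(\widehat\bV_K^{(t-1)}, \bV_K) \leq d^{1/2}\Delta,
\]
where the middle step uses $\|\bA_{\cJ_i}\|_{\mathrm{op}} \leq \|\bA_{\cJ_i}\|_{\mathrm{F}} \leq |\cJ_i|^{1/2}\twotoinf{\bA}$ with $\bA = \widehat\bV_K^{(t-1)} - \bV_K\bW$. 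By hypothesis~(i), $d^{1/2}\Delta \leq c_1\sigma_K(\bY_\cI)/\{K^2\sigma_1(\bY_\cI)\} \leq c_1$, so after shrinking $c_1$ below $\epsilon$ (which is permitted, as the theorem's constants may depend on $\epsilon$) we get $d^{1/2}\Delta < \epsilon = \min_{z \in \cZ}|z - \sigma_*^{-1}|$. Since no element of $\cZ$ equals $\sigma_*^{-1}$, each value $d^{1/2}\sigma_K((\widehat\bV_K^{(t-1)})_{\cJ_i})/|\cJ_i|^{1/2}$ lies on the same side of $\sigma_*^{-1}$ as its counterpart $d^{1/2}\sigma_K((\bV_K)_{\cJ_i})/|\cJ_i|^{1/2}$, so the defining inequality of $\cI^{(t-1)}$ holds for exactly the same indices $i$ as that of $\cI$; that is, $\cI^{(t-1)} = \cI$.

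With the screening set pinned down, $\widehat\bV_K^{(t)} = \texttt{refine}\bigl(K, \widehat\bV_K^{(t-1)}, \bOmega_\cI, (\bY_\bOmega)_\cI\bigr)$, and I would apply Proposition~\ref{thm:4} with the row set $\cI$ in place of $[n]$ (so $|\cI|$ replaces $n$ throughout), data matrix $\bY_\cI = \bL\bGamma\bR^\top$, and input iterate $\widehat\bV_K^{(t-1)}$. Its hypotheses all hold: $\min_{i \in \cI}\lonenorm{\bomega_i} > K$ and $\min_{i \in \cI} d^{1/2}\sigma_K((\widehat\bV_K^{(t-1)})_{\cJ_i})/|\cJ_i|^{1/2} \geq 1/\sigma_*$ by the definition of $\cI^{(t-1)} = \cI$; the two-to-infinity bounds on $\bL$ and $\bR$ are assumed; $\bZ = \bzero$; condition~(i) of Proposition~\ref{thm:4} follows from hypothesis~(i) here together with $\cT(\widehat\bV_K^{(t-1)}, \bV_K) \leq \Delta$ and $\sigma_j(\bGamma) = \sigma_j(\bY_\cI)$; and condition~(ii) of Proposition~\ref{thm:4} is precisely hypothesis~(ii). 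Proposition~\ref{thm:4} (with its constant $c_1$ already shrunk below $\epsilon$) then yields $\cT(\widehat\bV_K^{(t)}, \bV_K) \leq \rho\, \cT(\widehat\bV_K^{(t-1)}, \bV_K) \leq \rho^t\Delta$, completing the induction. The main obstacle is exactly the screening-stability step: Proposition~\ref{thm:4} supplies the per-step contraction, but one must rule out any drift of the random screening set away from $\cI$ across iterations, since otherwise the matrix $\bY_{\cI^{(t-1)}}$ — and with it the signal-strength ratio and incoherence constants controlling the contraction factor $\rho$ — could change from step to step; this is where the separation $\epsilon > 0$ is used, and why $c_1$ (and hence $C$) are allowed to depend on $\epsilon$.
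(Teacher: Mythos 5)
Your proof is correct and matches the paper's argument essentially step for step: induction on $t$ with base case $\cT(\widehat\bV_K^{(0)},\bV_K)=\Delta$, a Weyl-inequality-plus-two-to-infinity argument to show that the data-driven screening set $\cI^{(t-1)}$ equals the oracle set $\cI$ (using the separation $\epsilon>0$ and the freedom to shrink $c_1$ below $\epsilon$), and then a direct application of Proposition~\ref{thm:4} on the fixed row set $\cI$ to obtain the per-step contraction.
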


\subsection{Initialisation}
\label{Sec:Initialisation}
Theorem~\ref{thm:primepca} provides a general guarantee on the performance of \texttt{primePCA}, but relies on finding an initial estimator $\widehat\bV^{(0)}_K$ that is sufficiently close to the truth $\bV_K$.  The aim of this subsection, then, is to propose a simple initialiser and show that it satisfies the requirement of Theorem~\ref{thm:primepca} with high probability, conditional on the missingness pattern.

Consider the following modified weighted sample covariance matrix
\[
\widetilde \bG := \frac{1}{n}\sum_{i = 1}^n \widetilde \by_i \widetilde \by_i^\top \circ \widetilde\bW, 
\]
where for any $j, k \in [d]$, 
\[
\widetilde\bW_{jk} := 
\begin{cases}
\frac{n}{\sum_{i = 1}^n \omega_{ij}\omega_{ik}}  & \text{if $\sum_{i = 1}^n \omega_{ij}\omega_{ik} > 0$,}\\
0, & \text{otherwise}. 
\end{cases}
\]
Here, the matrix $\widetilde\bW$ replaces $\widehat\bW$ in~\eqref{Eq:wgm2} because we no longer wish to assume homogeneous missingness.  
We take as our initial estimator of $\bV_K$ the matrix of top $K$ eigenvectors of $\widetilde \bG$, denoted~$\widetilde \bV_K$.  
Theorem~\ref{thm:init_2toinf} below studies the performance of this initialiser, in terms of its two-to-infinity norm error, as required for application in Theorem~\ref{thm:primepca}.  We write $\mathbb{P}^{\bOmega}$ and $\mathbb{E}^{\bOmega}$ for probabilities and expectations conditional on $\bOmega$. 

\begin{thm}
\label{thm:init_2toinf}
Assume (A1)--(A4) and that $n, d\geq 2$.  Suppose further that $\|\bV_K\|_{\infty} \leq \mu/ \sqrt{d}$, that $\sum_{i = 1}^n \omega_{ij}\omega_{ik} > 0$ for all $j,k$ and let $R := \lambda_1 + 1$.  Then there exist $c_{M,\tau,\mu}, C_{M, \tau, \mu} > 0$, depending only on $M,\tau$ and $\mu$, such that for every $\xi > 2$, if
\begin{equation}
\label{eq:lambda_k}
\lambda_K >  c_{M,\tau,\mu}\biggl\{ \biggl(\frac{\max\bigl(\lonenorm{\widetilde \bW}, R \|\widetilde \bW\|_{1 \to 1}\bigr) \xi \log d}{n}\biggr)^{1 / 2} + \frac{\xi \fnorm{\widetilde \bW} \log^2 d}{n}\biggr\},
\end{equation}
then 
\begin{align*}
\mathbb{P}^{\bOmega}\biggl\{ \cT(\widetilde \bV_K, \bV_K) \!\ge \!\frac{C_{M, \tau, \mu} K ^ {3 / 2} R^{1 / 2}}{\lambda_K d^{1 / 2}}&\biggl(1 \!+\! \frac{d^{1 / 2}}{K\lambda_K}\biggr) \biggl( \frac{ \xi^{1 / 2} \inftoinf{\widetilde \bW}^{1 / 2} \log^{\!1 / 2}\!d }{n^{1 / 2}}\! + \!\frac{\xi \twotoinf{\widetilde \bW} \log d}{n}\biggr) \biggr\}  \\
&\le 2(e^{K\log 5}+K+4)d^{- (\xi - 1)} + 2d^{- (\xi - 2)}. 
\end{align*}
As a consequence, writing
\[
\cA := \biggl\{\frac{\sigma_K(\bY_{\cI})}{\sigma_1(\bY_{\cI})} > \frac{C_{M, \tau, \mu} K ^ {7 / 2} R^{1 / 2}}{c_1\lambda_K}\biggl(1\! + \!\frac{d^{1 / 2}}{K\lambda_K}\biggr)  \biggl( \frac{ \xi^{1 / 2} \inftoinf{\widetilde \bW}^{1 / 2} \log^{\!1 / 2}\!d }{n^{1 / 2}}\! + \!\frac{\xi \twotoinf{\widetilde \bW} \log d}{n}\biggr)\biggr\}, 
\]
where $c_1$ is as in Theorem~\ref{thm:primepca}, we have that
\[
\begin{aligned}
\PP^{\bOmega}\biggl( \cT (\widetilde \bV_K, \bV_K)  > \frac{c_1\sigma_K(\bY_{\cI})}{K^2 \sigma_1(\bY_{\cI})d^{1/2}}\biggr) \le 2(e^{K\log 5}+K+4)d^{- (\xi - 1)} + 2d^{- (\xi - 2)} + \mathbb{P}^{\bOmega}(\mathcal{A}^{\mathrm{c}}).
\end{aligned}
\]
      
\end{thm}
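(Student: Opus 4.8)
The plan is to condition throughout on the revelation matrix $\bOmega$ and to reduce the statement to control of the fluctuation matrix $\bE := \widetilde\bG - \bSigma_{\by}$. First I record that, under the model, $\bSigma_{\by} = n^{-1}\E\bY^\top\bY = \bV_K\bSigma_{\bu}\bV_K^\top + \bI_d$, so that $\col(\bV_K)$ is the top-$K$ eigenspace of $\bSigma_{\by}$, with eigenvalues $\lambda_1+1,\dots,\lambda_K+1$ and a spectral gap of $\lambda_K$ to the remaining eigenvalue $1$ (the hypothesis \eqref{eq:lambda_k} in particular forces $\lambda_K>0$). Writing $\bX_i := \diag(\widetilde\by_i)\,\widetilde\bW\,\diag(\widetilde\by_i)$, so that $(\widetilde\by_i\widetilde\by_i^\top)\circ\widetilde\bW = \bX_i$, one checks directly from the definition of $\widetilde\bW$ that, on the event $\{\sum_{i}\omega_{ij}\omega_{ik}>0\text{ for all }j,k\}$ assumed in the theorem, $n^{-1}\sum_{i=1}^n\E^{\bOmega}\bX_i = \bSigma_{\by}$, so that $\bE = n^{-1}\sum_{i=1}^n(\bX_i - \E^{\bOmega}\bX_i)$ is an average of independent, conditionally centred matrices. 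The proof then has three parts: (1) an operator-norm bound for $\bE$; (2) two-to-infinity-type bounds for $\bE\bV_K$ and for $\bE$; and (3) a deterministic entrywise eigenspace-perturbation inequality that turns (1)--(2) into a bound on $\cT(\widetilde\bV_K,\bV_K)$.

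For (1), since $\opnorm{\widetilde\bW}$ is deliberately absent from \eqref{eq:lambda_k}, I would truncate the subexponential products $\widetilde y_{ij}\widetilde y_{ik}$ at a level of order $M\log(nd)$, which is legitimate by (A4), and apply the matrix Bernstein inequality to the truncated, re-centred sum. The matrix-variance proxy $\bigl\|n^{-2}\sum_i\E^{\bOmega}(\bX_i-\E^{\bOmega}\bX_i)^2\bigr\|_{\mathrm{op}}$ is controlled, using $\opnorm{\bSigma_{\by}}=R$ together with the moment bounds from (A2)--(A4), by a constant times $n^{-1}\max(\lonenorm{\widetilde\bW},R\onetoone{\widetilde\bW})$, while the truncation level produces the $n^{-1}\fnorm{\widetilde\bW}\log^2 d$ term; this gives, on an event of conditional probability at least $1-2d^{-(\xi-1)}$, an operator-norm bound for $\bE$ that, under \eqref{eq:lambda_k}, is at most a prescribed small constant times $\lambda_K$. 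For (2), I fix a coordinate $j\in[d]$ and a $(1/2)$-net $\cN$ of $\cS^{K-1}$ with $|\cN|\le 5^K$, and for $\bu\in\cN$ observe that $\be_j^\top\bE\bV_K\bu = n^{-1}\sum_i\bigl(\omega_{ij}y_{ij}\,\bc_i^\top\by_i - \E^{\bOmega}[\,\cdot\,]\bigr)$ is a centred sum of independent summands with $\bc_i := \diag(\widetilde\bW_{j\cdot})\bigl(\bomega_i\circ(\bV_K\bu)\bigr)$, then apply the scalar Bernstein inequality: the conditional variance of one summand is governed by $\inftoinf{\widetilde\bW}$ (a factor $K/d$ being gained from the incoherence bound $\|\bV_K\bu\|_\infty\le\mu/\sqrt d$ and the identity $\sum_k\bv_k\bv_k^\top=\bI_K$ for the rows $\bv_k$ of $\bV_K$, which is what keeps $\|\bV_K^\top\bc_i\|$ from blowing up), while its tail scale is governed by $\twotoinf{\widetilde\bW}$. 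A union bound over $j\in[d]$ and over $\bu\in\cN$, together with the standard net-to-supremum passage, produces the $e^{K\log 5}+K+4$ count in the probability estimate and the bracketed factor $\xi^{1/2}\inftoinf{\widetilde\bW}^{1/2}\log^{1/2}d/n^{1/2}+\xi\twotoinf{\widetilde\bW}\log d/n$; an analogous argument (or the crude bound $\twotoinf{\bE}\le\opnorm{\bE}$) handles $\twotoinf{\bE}$, and the extra $d^{-(\xi-2)}$ term collects a low-probability event from the truncation step.

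For (3), starting from $\widetilde\bG\widetilde\bV_K = \widetilde\bV_K\widetilde\bLambda_K$ and the block structure $\bSigma_{\by}(\bI_d - \bV_K\bV_K^\top) = \bI_d - \bV_K\bV_K^\top$, I would invoke a two-to-infinity subspace-perturbation bound (as in the literature on entrywise eigenvector analysis, or an auxiliary lemma of the paper): Weyl's inequality and part (1) give $\widetilde\lambda_K-1\gtrsim\lambda_K$, the Davis--Kahan theorem and part (1) give $\opnorm{\sin\Theta(\widetilde\bV_K,\bV_K)}\lesssim\opnorm{\bE}/\lambda_K$, and the resulting expansion yields $\cT(\widetilde\bV_K,\bV_K)\lesssim\lambda_K^{-1}\twotoinf{\bE\bV_K} + \lambda_K^{-1}\twotoinf{\bV_K}\opnorm{\bE} + (\text{further terms of lower order in }\opnorm{\bE}/\lambda_K)$, where $\twotoinf{\bV_K}\le\mu(K/d)^{1/2}$ and, when the signal eigenvalues $\lambda_1,\dots,\lambda_K$ are clustered, a block version of Davis--Kahan ensures the alignment rotation acts trivially where required. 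Substituting parts (1)--(2) and bookkeeping the $K$-, $R$- and $\lambda_K$-dependent factors (the $\sqrt d$-scale of $\opnorm{\bE}$ against $\lambda_K$ being what produces the factor $1+d^{1/2}/(K\lambda_K)$, whose two regimes echo the phase transition of Section~\ref{sec:2.2}) gives the displayed high-probability bound on $\cT(\widetilde\bV_K,\bV_K)$, with failure probability the union of the events in (1) and (2).

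The ``as a consequence'' statement is then bookkeeping: the constant in the definition of $\cA$ is chosen exactly so that, on $\cA$, the right-hand side of the first displayed bound is at most $c_1\sigma_K(\bY_{\cI})/\{K^2\sigma_1(\bY_{\cI})d^{1/2}\}$, the factor $K^{7/2}=K^2\cdot K^{3/2}$ in $\cA$ absorbing both the $K^{3/2}$ of the first bound and the $K^2$ of condition~(i) of Theorem~\ref{thm:primepca}, with $c_1$ the constant from that theorem. Hence $\{\cT(\widetilde\bV_K,\bV_K) > c_1\sigma_K(\bY_{\cI})/(K^2\sigma_1(\bY_{\cI})d^{1/2})\}$ is contained in $\{\cT(\widetilde\bV_K,\bV_K) > \text{first bound}\}\cup\cA^{\mathrm{c}}$, and a union bound with the probability estimate of the first part finishes the proof. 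I expect the main obstacle to be part (2) together with the variance computation in part (1): one must pin down precisely which norm of $\widetilde\bW$ ($\lonenorm{\cdot}$, $\onetoone{\cdot}$, $\inftoinf{\cdot}$, $\twotoinf{\cdot}$ or $\fnorm{\cdot}$) controls each fluctuation term, and thread the incoherence parameter $\mu$ and the spectral quantities $R$, $\lambda_K$ and $K$ simultaneously through a matrix-Bernstein bound, a net-based scalar-Bernstein bound and a perturbation expansion, while treating the clustered-eigenvalue case correctly.
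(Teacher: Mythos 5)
Your high-level plan matches the paper's: set $\bE := \widetilde\bG - \bSigma_\by$, control $\opnorm{\bE}$ by a matrix Bernstein inequality, control the columns of $\bE\bV_K$ via a $\tfrac12$-net on $\cS^{K-1}$ plus scalar Bernstein and incoherence, and feed everything into a two-to-infinity eigenspace perturbation expansion (the paper uses Theorem~3.7 of Cape, Tang and Priebe, which produces precisely the four-term decomposition $T_1+T_2+T_3+T_4$ you gesture at). The ``as a consequence'' part is indeed pure bookkeeping of the constant in $\cA$.

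There is, however, one genuine gap in how you close the argument. You propose to control $\opnorm{\sin\Theta(\widetilde\bV_K,\bV_K)}$ by Davis--Kahan together with the operator-norm bound on $\bE$. That operator-norm bound necessarily involves the worst-case norms $\onetoone{\widetilde\bW}$ and $\opnorm{\widetilde\bW}$ of the weight matrix (these arise as the variance proxy and the scale parameter in the matrix Bernstein step), and after substitution the second bracket term of your final bound would read $\opnorm{\widetilde\bW}\log^2 d/n$ rather than the claimed $\twotoinf{\widetilde\bW}\log d/n$. Since $\twotoinf{\widetilde\bW}\le\opnorm{\widetilde\bW}$ and the gap can be of order $\sqrt d$, this does not reproduce the theorem as stated. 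What the paper does instead is \emph{not} use Davis--Kahan at this point: it bounds $\opnorm{\sin\Theta(\widetilde\bV_K,\bV_K)}$ by $L(\widetilde\bV_K,\bV_K)$ and imports the Frobenius-norm $\sin\Theta$ bound from Proposition~\ref{prop:init_f}, which scales with the averaged, whole-matrix norms $\|\widetilde\bW\|_1$ and $\fnorm{\widetilde\bW}$, and then converts via the elementary inequalities $\|\widetilde\bW\|_1\le d\,\inftoinf{\widetilde\bW}$ and $\fnorm{\widetilde\bW}\le d^{1/2}\twotoinf{\widetilde\bW}$; this is exactly where the $\inftoinf{\widetilde\bW}^{1/2}$ and $\twotoinf{\widetilde\bW}$ factors in the stated bound come from, and it is also how the poly-logarithmic factor is kept to $\log d$ rather than $\log^2 d$. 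Without this pass through Proposition~\ref{prop:init_f} you cannot assemble the bound in its stated form. Two smaller inaccuracies: the $2d^{-(\xi-2)}$ in the probability tally comes from the $d^2$-fold union bound needed to control $\twotoinf{\bE}$ (the paper's $T_2$), not from a truncation event; and your surrogate expansion term $\lambda_K^{-1}\twotoinf{\bV_K}\opnorm{\bE}$ does not correspond to any term in the CTP18 bound — the correct second- and third-order terms are $\lambda_K^{-1}\twotoinf{\bE}\,\opnorm{\sin\Theta}$, $\lambda_K^{-1}\opnorm{\sin\Theta}$ and $\twotoinf{\bV_K}\,\opnorm{\sin\Theta}^2$.

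On the operator-norm step itself, your route (truncate the subexponential products at scale $M\log(nd)$, re-centre, and then apply the standard matrix Bernstein inequality) is a legitimate alternative to the paper's, which instead proves a tailored matrix Bernstein inequality under \emph{non-central absolute moment} conditions (Lemma~\ref{lem:nc_matrix_bernstein}) and applies it directly to $\bA_i := \widetilde\by_i\widetilde\by_i^\top\circ\widetilde\bW$ without truncation. The truncation route requires extra care controlling the bias introduced by re-centring (the truncated summands are no longer conditionally mean-zero), while the paper's lemma absorbs the unboundedness into the moment hypothesis; both should yield bounds of the same flavour, but the paper's produces the cleaner constants and avoids the low-probability truncation event entirely.
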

The first part of Theorem~\ref{thm:init_2toinf} provides a general probabilistic bound for $\cT(\widetilde \bV_K, \bV_K)$, after conditioning on the missingness pattern. 
This allows us, in the second part, to provide a guarantee on the probability with which $\widetilde \bV_K$ is a good enough initialiser for Theorem~\ref{thm:primepca} to apply.  
For intuition regarding $\PP^{\bOmega}(\cA^{\mathrm{c}})$, consider the $p$-homogenous missingness setting.  In that case, by Lemma~\ref{Lemma:LotsofNorms}, typical realisations of $\widetilde{\bW}$ have $\inftoinf{\widetilde \bW} = O(d/p^2)$ and $\twotoinf{\widetilde \bW} = O(d^{1/2}/p^2)$ when $np^2 \gg \log d$, so in the spiked model where $\lambda_1$ and $\lambda_K$ are both of order $d$, we expect $\PP^{\bOmega}(\cA^{\mathrm{c}})$ to be small.

One of the attractions of our analysis is the fact that we are able to provide bounds that only depend on entrywise missingness probabilities in an average sense, as opposed to worst-case missingness probabilities.  The refinements conferred by such bounds are particularly important when the missingness mechanism is heterogeneous, as typically encountered in practice.  The averaging of missingness probabilities can be partially seen in Theorem~\ref{thm:init_2toinf}, since $\inftoinf{\widetilde \bW}$ and $\twotoinf{\widetilde \bW}$ depend only on the $\ell_1$ and $\ell_2$ norms of each row of $\widetilde \bW$, but is even more evident in the proposition below, which gives a probabilistic bound on the original $\sin \Theta$ distance between $\widetilde \bV_K$ and $\bV_K$.

\begin{prop}
  \label{prop:init_f}
Assume the same conditions as in Theorem~\ref{thm:init_2toinf}.  Then there exists a universal constant $C > 0$ such that for any $\xi > 1$, if 
\begin{equation}
	\label{eq:lambda_k_f}
	\lambda_K > C\biggl\{ \biggl(\frac{M\tau^2 R\|\widetilde \bW\|_{1\rightarrow 1}\xi\log d}{n}\biggr)^{1 / 2} + \frac{M\opnorm{\widetilde \bW}  \xi\log^2 d}{n}\biggr\}, 
\end{equation}
then
\begin{align*}
	\PP^{\bOmega}\biggl\{ L(\widetilde \bV_K, \bV_K) \ge \frac{2^{9/2}e\tau\mu}{\lambda_K}\biggl(\frac{KMR}{d}\biggr)^{\hspace{-.15cm}1 / 2}\biggl( \frac{\xi^{1/2} \|\widetilde \bW\|_1^{1/2}\log^{1/2} d  }{n^{1/2}} &+  \frac{\xi\fnorm{\widetilde\bW}\log d }{n}\biggr)\biggr\} \\
    & \qquad \leq (2K + 4)d^{-(\xi - 1)}. 
\end{align*}
\end{prop}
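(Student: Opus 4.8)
\emph{Overview and Step 1 (Davis--Kahan reduction).} The plan is to combine a one-sided Davis--Kahan inequality with a Bernstein-type concentration bound for $(\widetilde\bG - \bSigma_{\by})\bV_K$, everything carried out conditionally on $\bOmega$ so that $\widetilde\bW$ --- which enters nonlinearly through several of its norms --- is treated as deterministic. Under \eqref{Eq:ModelA}--\eqref{Eq:ModelB} and the independence in (A1), $\bSigma_{\by} = \bV_K\bSigma_{\bu}\bV_K^\top + \bI_d$, so $\col(\bV_K)$ is the top-$K$ eigenspace of $\bSigma_{\by}$ with relevant eigengap $\lambda_K(\bSigma_{\by}) - \lambda_{K+1}(\bSigma_{\by}) = \lambda_K$; moreover, since $\sum_{i}\omega_{ij}\omega_{ik}>0$ for all $j,k$, the reweighting in $\widetilde\bG$ makes $\E^{\bOmega}\widetilde\bG = \bSigma_{\by}$ exactly. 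The role of hypothesis~\eqref{eq:lambda_k_f} is that its right-hand side is, up to a universal constant, a high-probability bound on $\opnorm{\widetilde\bG - \bSigma_{\by}}$ --- whose parameters in an operator-norm matrix Bernstein inequality are precisely $\onetoone{\widetilde\bW}$ and $\opnorm{\widetilde\bW}$ --- so on that event $\opnorm{\widetilde\bG - \bSigma_{\by}} \le \lambda_K/2$; Weyl's inequality then produces a genuine spectral gap around level $K$ for $\widetilde\bG$, and a one-sided Davis--Kahan bound gives $L(\widetilde\bV_K,\bV_K) \le 2\,\fnorm{(\widetilde\bG - \bSigma_{\by})\bV_K}/\lambda_K$. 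It remains to control $\fnorm{(\widetilde\bG - \bSigma_{\by})\bV_K}$.

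\emph{Step 2 (conditional concentration).} Write $(\widetilde\bG - \bSigma_{\by})\bV_K = n^{-1}\sum_{i=1}^n\bxi_i$, with $\bxi_i := \bigl\{(\widetilde\by_i\widetilde\by_i^\top)\circ\widetilde\bW - \E^{\bOmega}[(\widetilde\by_i\widetilde\by_i^\top)\circ\widetilde\bW]\bigr\}\bV_K \in \RR^{d\times K}$ independent given $\bOmega$ with $\E^{\bOmega}\bxi_i = 0$. Since $\|y_{1j}^2\|_{\psi_1}\le M$ forces $\|y_{1j}\|_{\psi_2}^2 \lesssim M$ and hence $\|y_{1j}y_{1k}\|_{\psi_1}\lesssim M$, the entries of $\bxi_i$ are sub-exponential rather than sub-Gaussian; I would therefore apply a $\psi_1$-Bernstein inequality for the Frobenius norm of a sum of independent mean-zero random matrices (an auxiliary result of the type collected in Section~\ref{Sec:Auxiliary}), column by column over the $K$ columns of $\bV_K$, together with a union bound over a bounded number of additional events (the operator-norm event of Step~1, and a truncation event for the $y_{ij}$). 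This gives, for every $\xi>1$, with $\PP^{\bOmega}$-probability at least $1 - (2K+4)d^{-(\xi-1)}$,
\[
\fnorm{(\widetilde\bG - \bSigma_{\by})\bV_K} \;\lesssim\; \frac{(V\,\xi\log d)^{1/2}}{n} + \frac{B\,\xi\log d}{n},
\]
where $V := \sum_{i=1}^n\E^{\bOmega}\fnorm{\bxi_i}^2$ is the total conditional variance and $B$ governs the $\psi_1$ tail of $\max_{i\in[n]}\fnorm{\bxi_i}$.

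\emph{Step 3 (parameter estimates and conclusion).} The claim reduces to showing $V\lesssim \tau^2\mu^2(KMR/d)\,n\lonenorm{\widetilde\bW}$ and $B\lesssim \tau\mu(KMR/d)^{1/2}\fnorm{\widetilde\bW}$; substituting these into Step~2 and then Step~1, and carrying the constants through the triangle inequalities, Orlicz conversions and Markov step of the Bernstein argument, yields the stated bound with its explicit constant $2^{9/2}e$. To estimate $V$, expand $\E^{\bOmega}\fnorm{\bxi_i}^2$ as a sum over coordinate triples $(j,k,k')$ of second- and fourth-order moments of $\by_i$-entries, weighted by $\widetilde\bW_{jk}\widetilde\bW_{jk'}\omega_{ij}\omega_{ik}\omega_{ik'}$ and by products of rows of $\bV_K$; the incoherence $\|\bV_K\|_\infty\le\mu/\sqrt d$ (so that rows of $\bV_K$ have $\ell_2$-norm at most $\mu(K/d)^{1/2}$) collapses the coordinate sums to $K/d$ factors, the moment and Orlicz bounds of (A2)--(A4) supply the $MR$ (by pairing the $\psi_1$-control of one $y$-factor with the spike-plus-incoherence control of another), and summing over $i$ invokes the identity $\sum_{j,k}\widetilde\bW_{jk}^2\bigl(\sum_i\omega_{ij}\omega_{ik}\bigr) = n\lonenorm{\widetilde\bW}$ --- which is exactly where the \emph{average} rather than worst-case missingness functional $\lonenorm{\widetilde\bW}$ enters. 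The analogous single-row estimate, bounded crudely by taking $\cJ_i=[d]$ and applying a $\psi_1$ Cauchy--Schwarz over the $d$ coordinates, puts $\fnorm{\widetilde\bW}$ into $B$.

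\emph{Main obstacle.} The crux is Step~3: one must track the interaction between the incoherence of $\bV_K$ and the heterogeneous pair-observation weights carefully enough that only the average functionals $\lonenorm{\widetilde\bW}$ and $\fnorm{\widetilde\bW}$ survive, rather than worst-case surrogates such as $d\,\opnorm{\widetilde\bW}$; this is the averaging-of-missingness refinement emphasised after the statement, and the reason an off-the-shelf perturbation bound will not suffice. A secondary difficulty, not present in the classical fully observed, sub-Gaussian setting, is that $y_{1j}y_{1k}$ is only $\psi_1$: the operator-norm control needed in Step~1 is obtained by truncating the $y_{ij}^2$, which is why \eqref{eq:lambda_k} carries a $\log^2 d$, whereas the Frobenius-norm control in Step~2 rests on a genuinely sub-exponential Bernstein inequality, the source of the single power of $\log d$ and the $d^{-(\xi-1)}$ form of the exceptional probability. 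Finally, since $\widetilde\bW$ is itself a function of $\bOmega$, every one of these concentration steps must be performed conditionally on $\bOmega$.
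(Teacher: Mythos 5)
Your proposal follows essentially the same route as the paper: a Davis--Kahan reduction to a Frobenius-norm perturbation, operator-norm matrix Bernstein to establish the eigengap, and a Bernstein-type bound on the perturbation carried out conditionally on $\bOmega$, with the incoherence supplying the $\tau\mu(KMR/d)^{1/2}$ factor and the identity $\widetilde W_{jk}^2\sum_i\omega_{ij}\omega_{ik} = n\widetilde W_{jk}$ producing the average-missingness functional $\|\widetilde\bW\|_1$. Two small points of divergence from the paper's implementation: the paper does not invoke a Frobenius-norm matrix/vector Bernstein inequality (Section~\ref{Sec:Auxiliary} contains only the operator-norm Lemma~\ref{lem:nc_matrix_bernstein}); it instead applies the scalar Bernstein inequality entrywise to $\bv_k^\top(\widetilde\bG - \E^{\bOmega}\widetilde\bG)\be_j$ for each $(j,k)\in[d]\times[K]$ and union-bounds, which is where the $\log d$ in the Frobenius term comes from. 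Also, the paper does not truncate the $y_{ij}$; all moment control, including the $\log d$ and $\log^2 d$ factors (the latter via $\bigl\|\|\by_i\|_\infty\bigr\|_{\psi_2}\lesssim (M\log d)^{1/2}$), is done directly through $\psi_1/\psi_2$ Orlicz bounds, and the Davis--Kahan step is organised via the block matrix $\bG^*$ with the off-diagonal block $\bV_K^\top\widetilde\bG\bV_{-K}$ as the perturbation rather than a one-sided bound on $\fnorm{(\widetilde\bG-\bSigma_\by)\bV_K}$, though the two are interchangeable here.
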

In this bound, then, we see that $L(\widetilde \bV_K, \bV_K)$ only depends on $\widetilde\bW$ through the entrywise $\ell_1$ and $\ell_2$ norms of the whole matrix. Lemma~\ref{Lemma:LotsofNorms} provides probabilistic control of these norms under the $p$-homogeneous missingness mechanism. In general, if the rows of $\bOmega$ are independent and identically distributed, but different covariates are missing with different probabilities, then entries of $\widetilde{\bW}$ will concentrate around the reciprocals of the simultaneous observation probabilities of pairs of covariates. As such, for a typical realisation of $\bOmega$, our bound in Proposition~\ref{prop:init_f} depends only on the harmonic averages of these simultaneous observation probabilities and their squares. Such an averaging effect ensures that our method is effective in a much wider range of heterogeneous settings than previously allowed in the literature.



\section{Simulation studies}
\label{Sec:Simulations}
In this section, we assess the empirical performance of \texttt{primePCA} as proposed in Algorithm~\ref{algo:2}, with initialiser $\widetilde \bV_K$ from Section~\ref{Sec:Initialisation}, and denote the output of this algorithm by $\widehat\bV_K^{\text{prime}}$.  We generate observations according to the model described in~\eqref{Eq:ModelA},~\eqref{Eq:ModelB} and~\eqref{Eq:ModelC} where the rows of the matrix $\bU$ are independent $N_d(\bzero,\bSigma_{\bu})$ random vectors, for some $\bSigma_{\bu} \succeq \bzero$.  We further generate the observation indicator matrix $\bOmega$, independently of $\bU$ and $\bZ$, and investigate the following four missingness mechanisms that represent different levels of heterogeneity:
\begin{enumerate}[noitemsep, label={(H\arabic*)}]
	\item Homogeneous: $\PP(\omega_{ij} = 1) = 0.05$ for all $i \in [n], j \in [d]$; 
	\item Mildly heterogeneous: $\PP(\omega_{ij} = 1) = P_iQ_j$ for $i \in [n], j \in [d]$, where $P_1,\ldots,P_n \stackrel{\mathrm{iid}}{\sim} U[0, 0.2]$ and $Q_1,\ldots,Q_d \stackrel{\mathrm{iid}}{\sim} U[0.05, 0.95]$ independently; 
	\item Highly heterogeneous columns: $\PP(\omega_{ij} = 1) = 0.19$ for $i \in [n]$ and all odd $j \in [d]$ and $\PP(\omega_{ij} = 1) = 0.01$ for $i \in [n]$ and all even $j \in [d]$. 
	\item Highly heterogeneous rows: $\PP(\omega_{ij} = 1) = 0.18$ for $j \in [d]$ and all odd $i \in [n]$ and $\PP(\omega_{ij} = 1) = 0.02$ for $j \in [d]$ and all even $i \in [n]$. 
\end{enumerate}
In Sections~\ref{Sec:Noiseless},~\ref{Sec:Noisy} and~\ref{Sec:Sensitivity} below, we investigate \texttt{primePCA} in noiseless, noisy and misspecified settings respectively.  In all cases, the average statistical error was estimated from 100 Monte Carlo repetitions of the experiment.  For comparison, we also studied the \texttt{softImpute} algorithm \citep{MHT10, HML15}, which is considered to be state-of-the-art for matrix completion \citep{CLC18}. This algorithm imputes the missing entries of $\bY$ by solving the following nuclear-norm-regularised optimisation problem: 
\[	
	 	\widehat \bY^{\soft} :=  \argmin_{\bX \in \RR^{n \times d}} \biggl\{\frac{1}{2}\fnorm{\bY_{\bOmega} - \bX_{\bOmega}}^2 + \lambda \nnorm{\bX}\biggr\}, 
\]
where $\lambda > 0$ is to be chosen by the practitioner. The \texttt{softImpute} estimator of $\bV_K$ is then given by the top $K$ right singular space $\widehat\bV_K^{\soft}$ of $\widehat\bY^{\soft}$. 

Figure~\ref{fig:noiseless_tau} presents Monte Carlo estimates of $\mathbb{E}L(\widehat\bV_K^{\text{prime}},\bV_K)$ for different choices of $\sigma_*$ in two different settings.  The first uses the noiseless set-up of Section~\ref{Sec:Noiseless}, together with missingness mechanism~(H1); the second uses the noisy setting of Section~\ref{Sec:Noisy} with parameter $\nu=20$ and missingness mechanism~(H2).  We see that the error barely changes when $\sigma_*$ varies within $[2, 10]$; very similar plots were obtained for different data generation and missingness mechanisms, though we omit these for brevity.  For definiteness, we therefore fixed $\sigma_* = 3$ throughout our simulation study. 
\begin{figure}[H]
	\label{fig:noiseless_tau}
	\centering
	\begin{tabular}{cc}
		\includegraphics[clip, trim=0 .5cm 0 1cm, width=6cm]{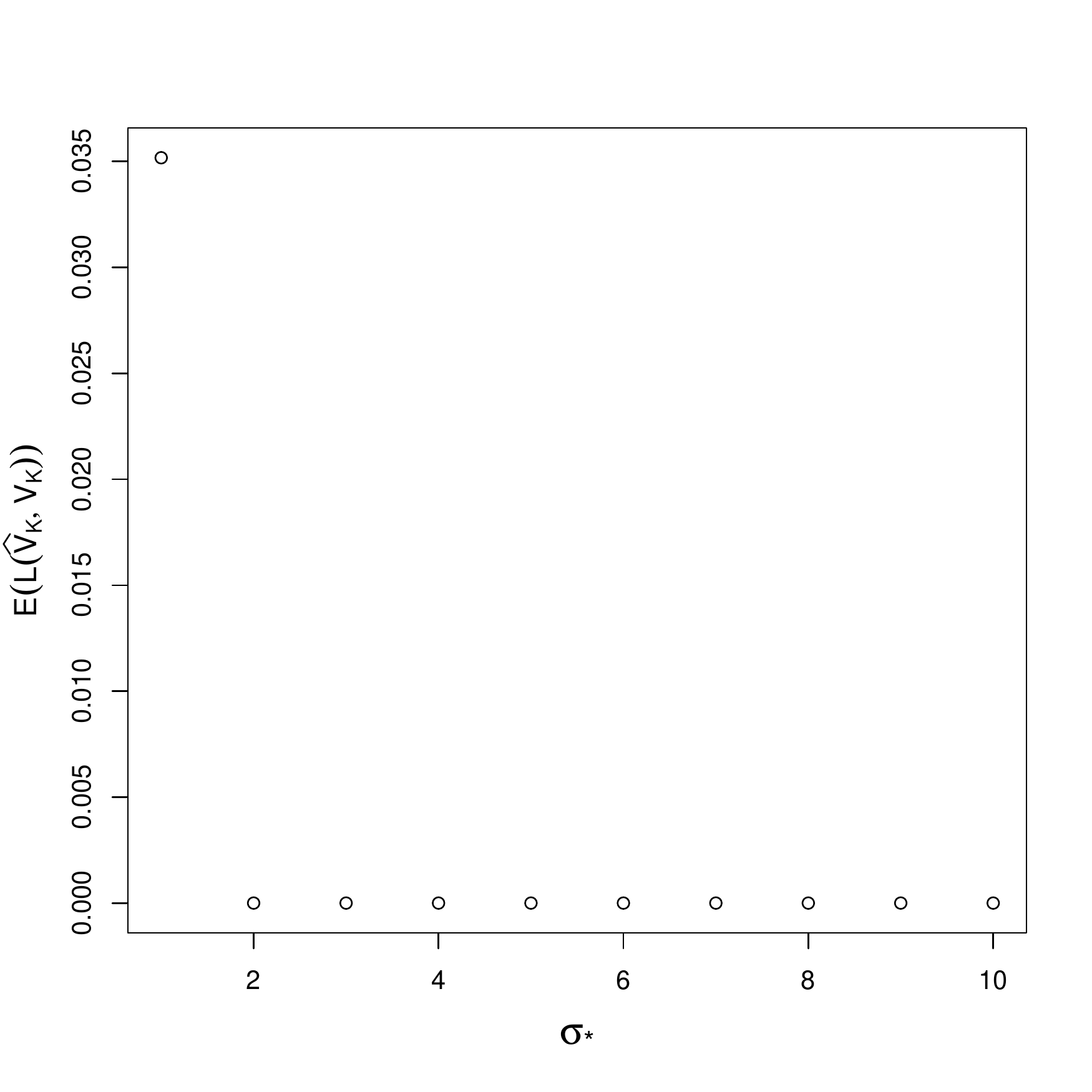} & \includegraphics[clip, trim=0 .5cm 0 1cm, width=6cm]{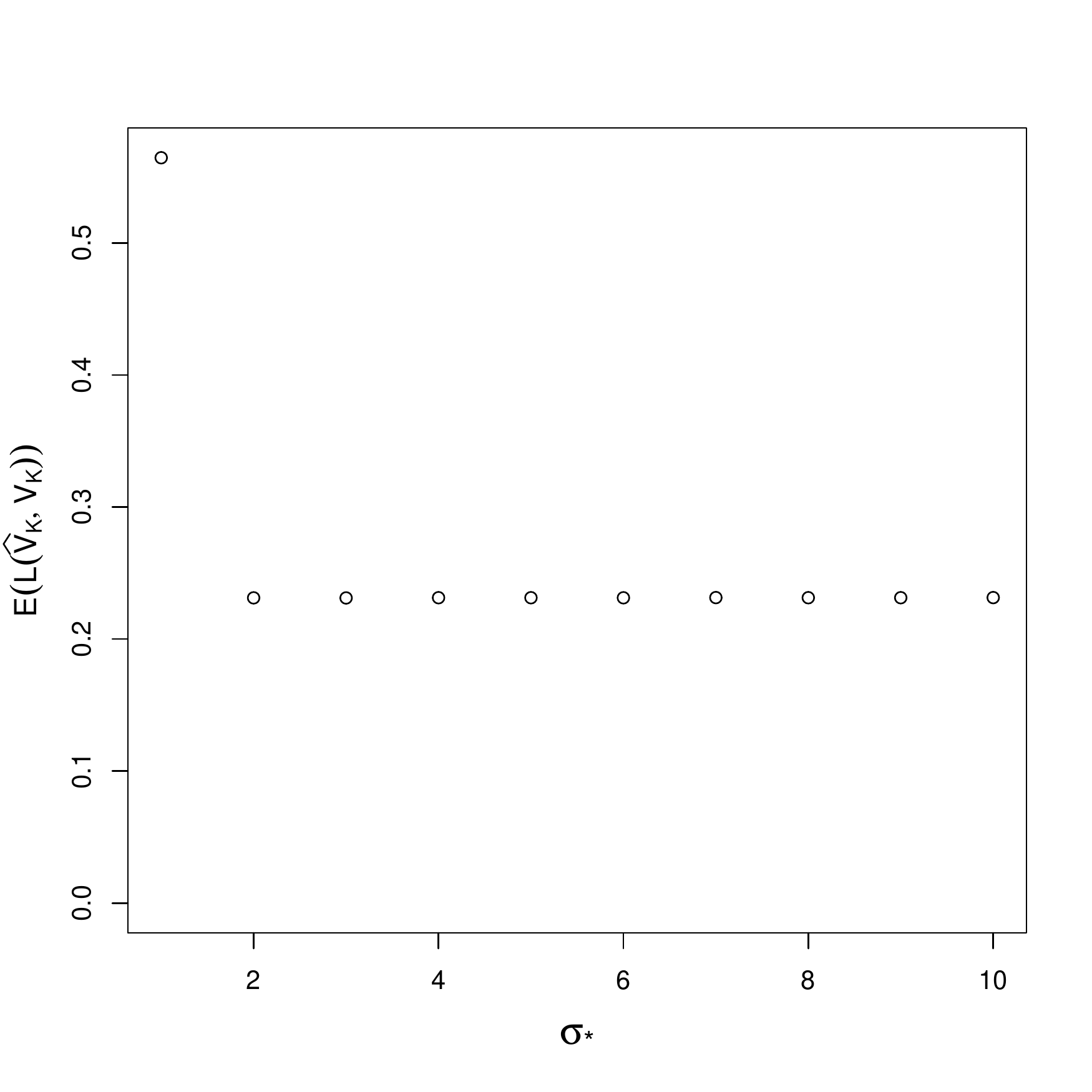} 
	\end{tabular}
	\caption{Estimates of $\mathbb{E}L(\widehat\bV_K^{\text{prime}},\bV_K)$ for various choices of $\sigma_*$ under (H1) in the noiseless setting of Section~\ref{Sec:Noiseless} (left) and (H2) in the noisy setting of Section~\ref{Sec:Noisy} with $\nu=20$ (right).}
      \end{figure}

\subsection{Noiseless case}
\label{Sec:Noiseless}

In the noiseless setting, we set $\bZ = \bzero$, and also fix $n =  2000$, $d = 500$, $K=2$ and $\bSigma_{\bu} = 100\bI_{2}$.  We set
\[
	\bV_K = \sqrt{\frac{1}{500}} \biggl(
	\begin{array}{cc}
		\bone_{250} & \bone_{250} \\
		\bone_{250} & -\bone_{250} 
	\end{array}
	\bigg) \in \mathbb{R}^{500 \times 2}.
\]


In Figure \ref{fig:1}, we present the logarithm of the estimated average loss of \texttt{primePCA} and \texttt{softImpute} under (H1), (H2), (H3) and (H4).  We set the range of $y$-axis to be the same for each method to facilitate straightforward comparison.  We see that the statistical error of \texttt{primePCA} decreases geometrically as the number of iterations increases, which confirms the conclusion of Theorem~\ref{thm:primepca} in this noiseless setting.  Moreover, after a moderate number of iterations, its performance is a substantial improvement on that of the \texttt{softImpute} algorithm, even if this latter algorithm is given access to an oracle choice of the regularisation parameter~$\lambda$.  The high statistical error of \texttt{softImpute} in these settings can be partly explained by the default value of the tuning parameter \texttt{thresh} in the \texttt{softImpute} package in~$\texttt{R}$, namely $10^{-5}$, which corresponds to the red curve in the right-hand panels of Figure~\ref{fig:1}.  By reducing the values of \texttt{thresh} to $10^{-7}$ and $10^{-9}$, corresponding to the green and blue curves in Figure~\ref{fig:1} respectively, we were able to improve the performance of \texttt{softImpute} to some extent, though the statistical error is sensitive to the choice of the regularisation parameter~$\lambda$.  Moreover, even with the optimal choice of $\lambda$, it is not competitive with \texttt{primePCA} (which is also considerably faster to compute, even with $2000$ iterations).

\begin{figure}
	\centering
    	\begin{tabular}{cc}
    	\texttt{primePCA} & \texttt{softImpute} \\
    	\includegraphics[clip, trim=0 .5cm 0 1cm, width=5.0cm]{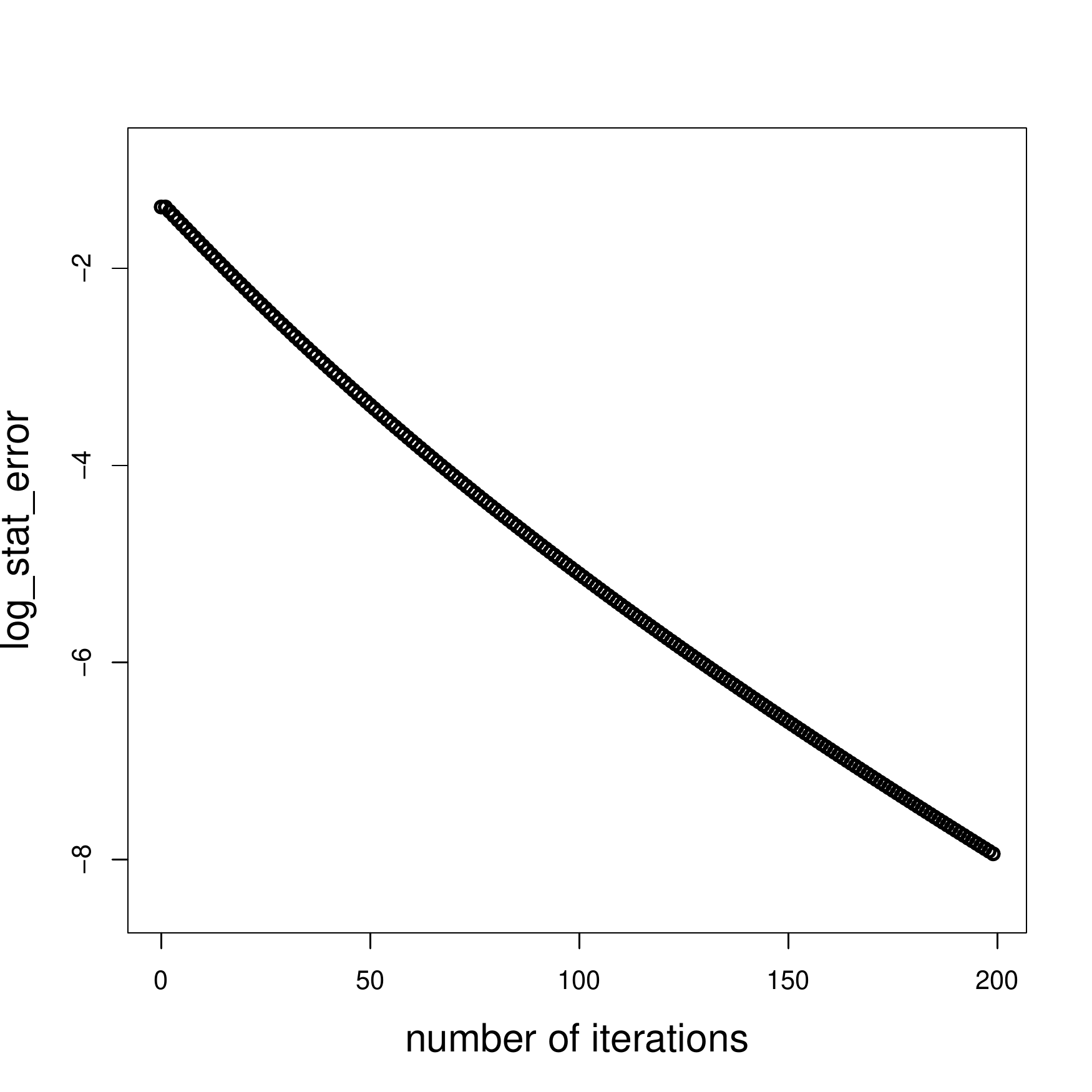} & \includegraphics[clip, trim=0 0.5cm 0 1cm, width=5.0cm]{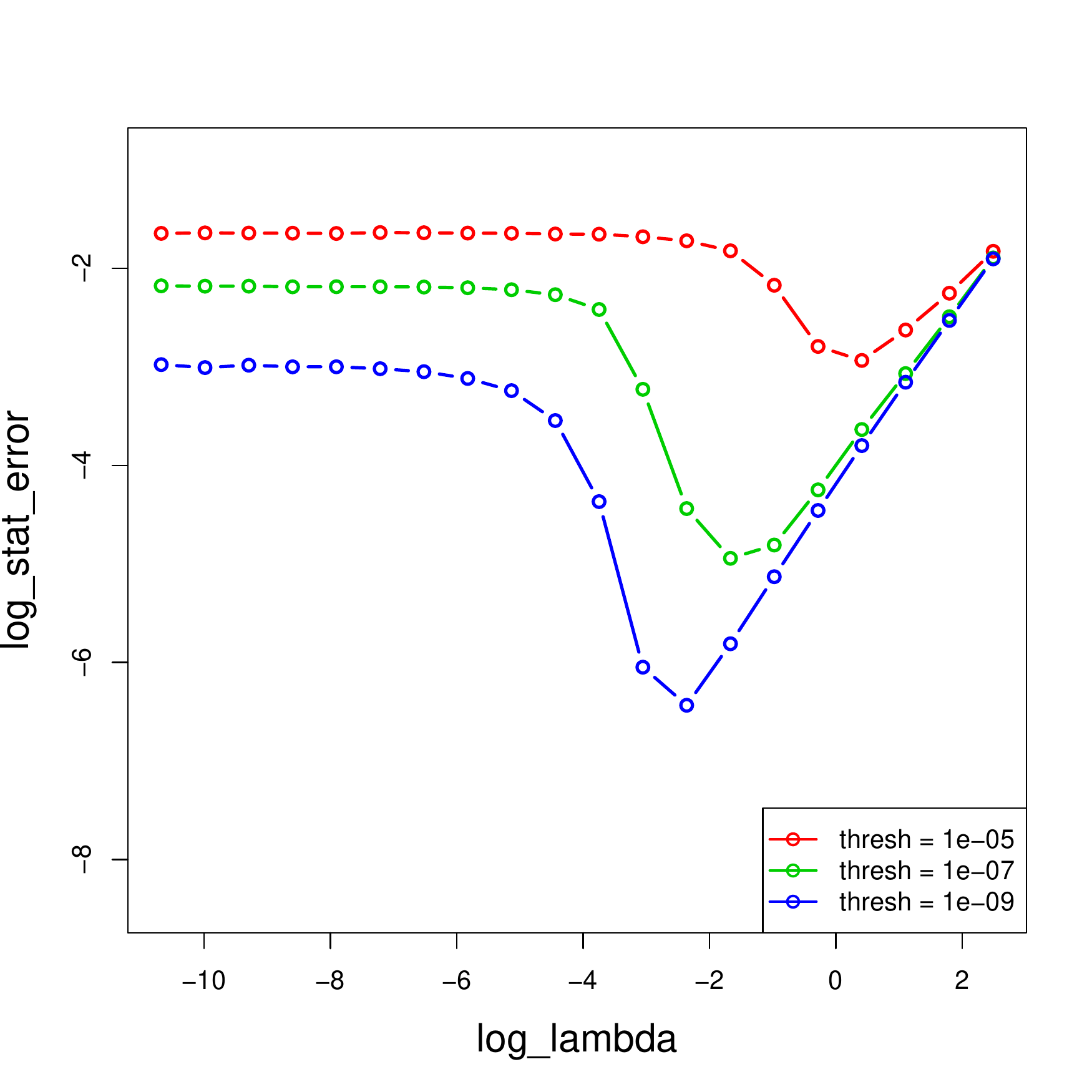} \\
    	\includegraphics[clip, trim=0 .5cm 0 1cm, width=5.0cm]{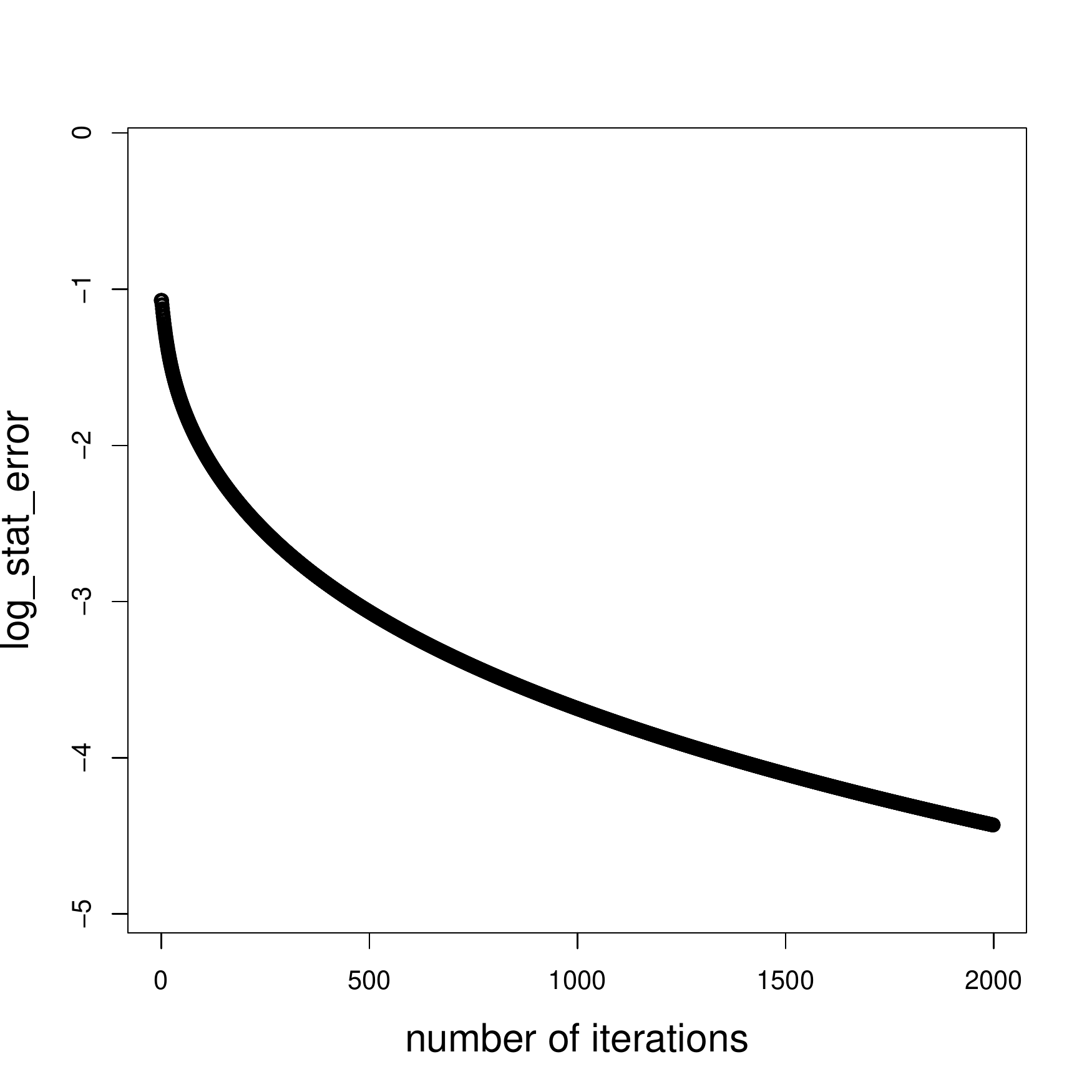} & \includegraphics[clip, trim=0 0.5cm 0 1cm, width=5.0cm]{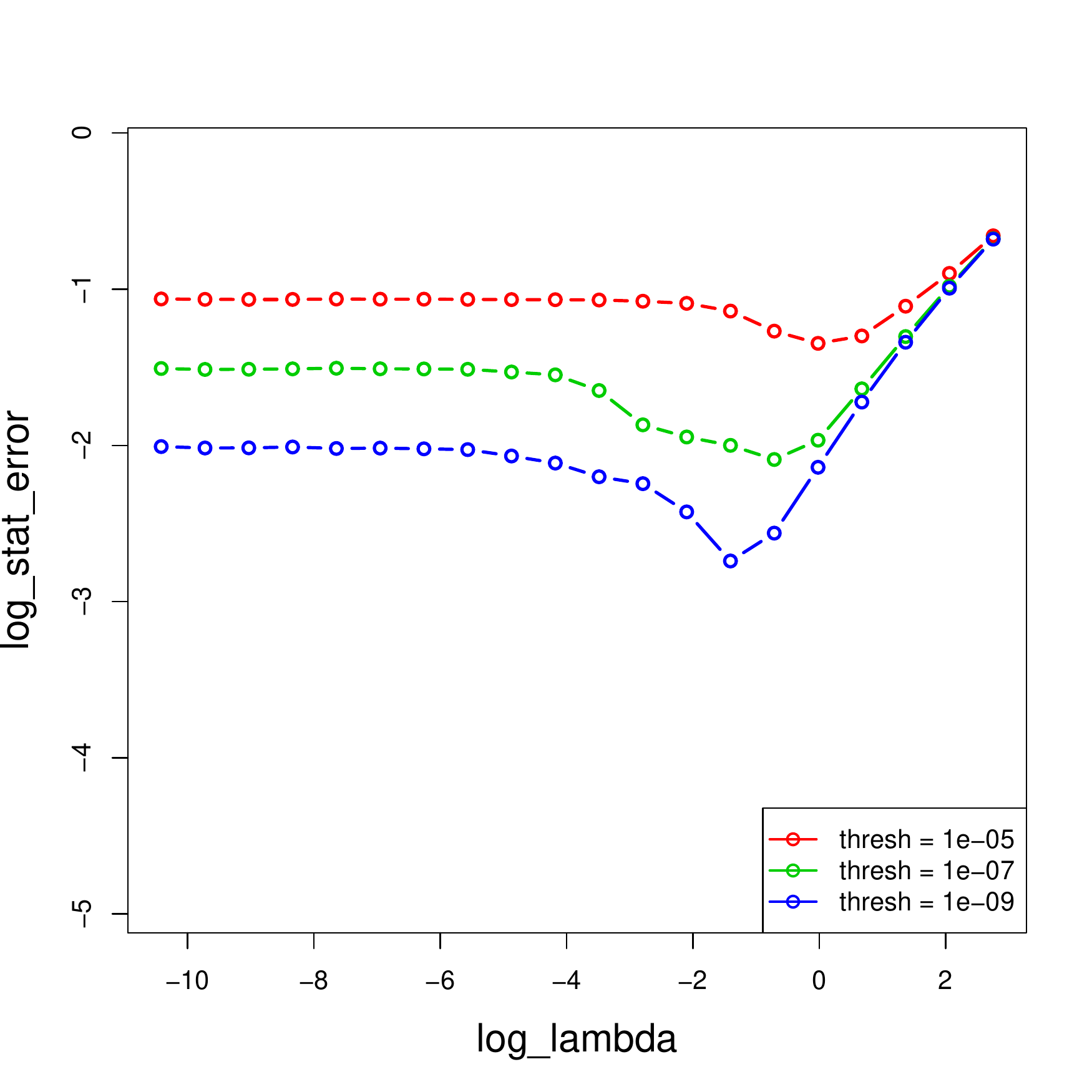} \\	
    	\includegraphics[clip, trim=0 .5cm 0 1cm, width=5.0cm]{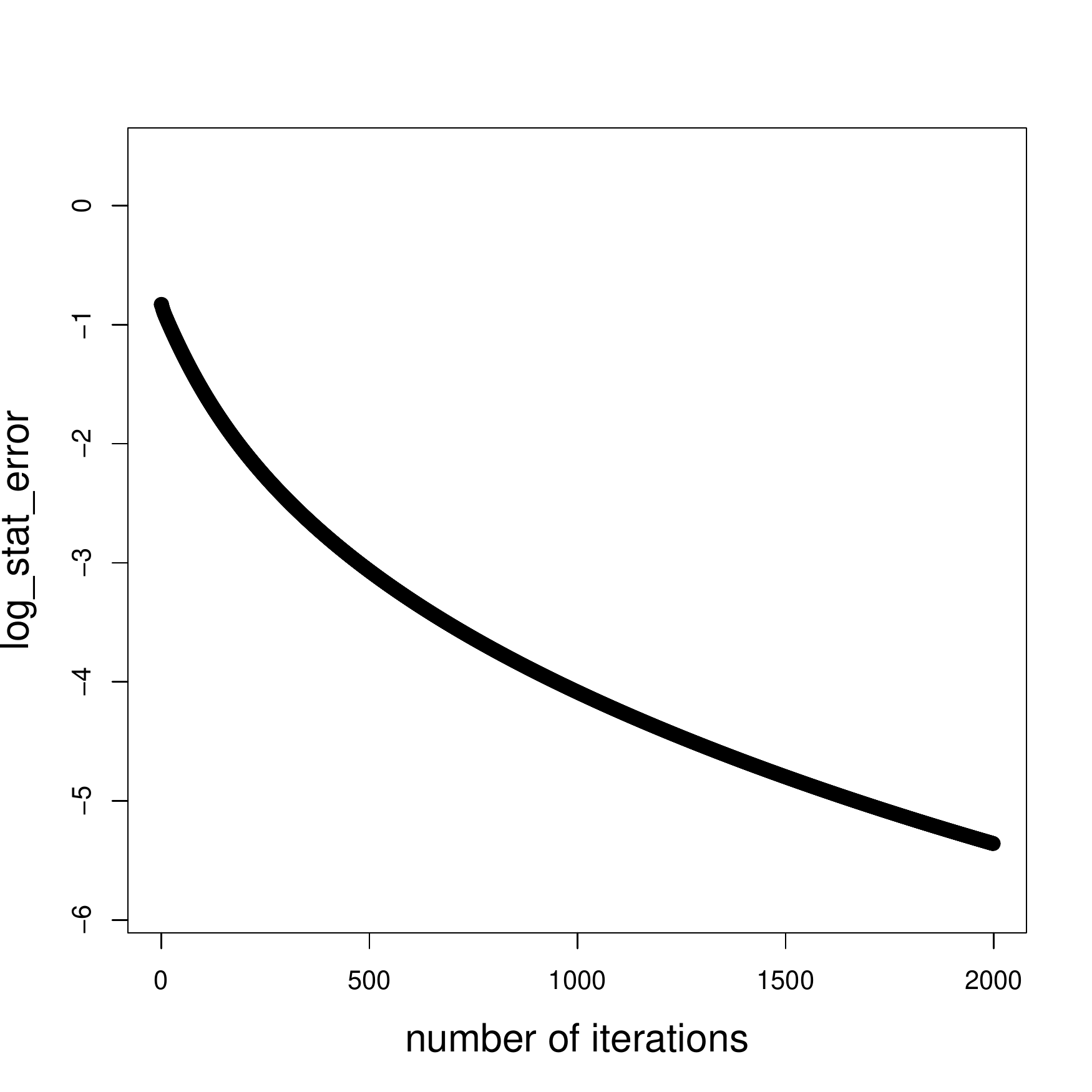} & \includegraphics[clip, trim=0 0.5cm 0 1cm, width=5.0cm]{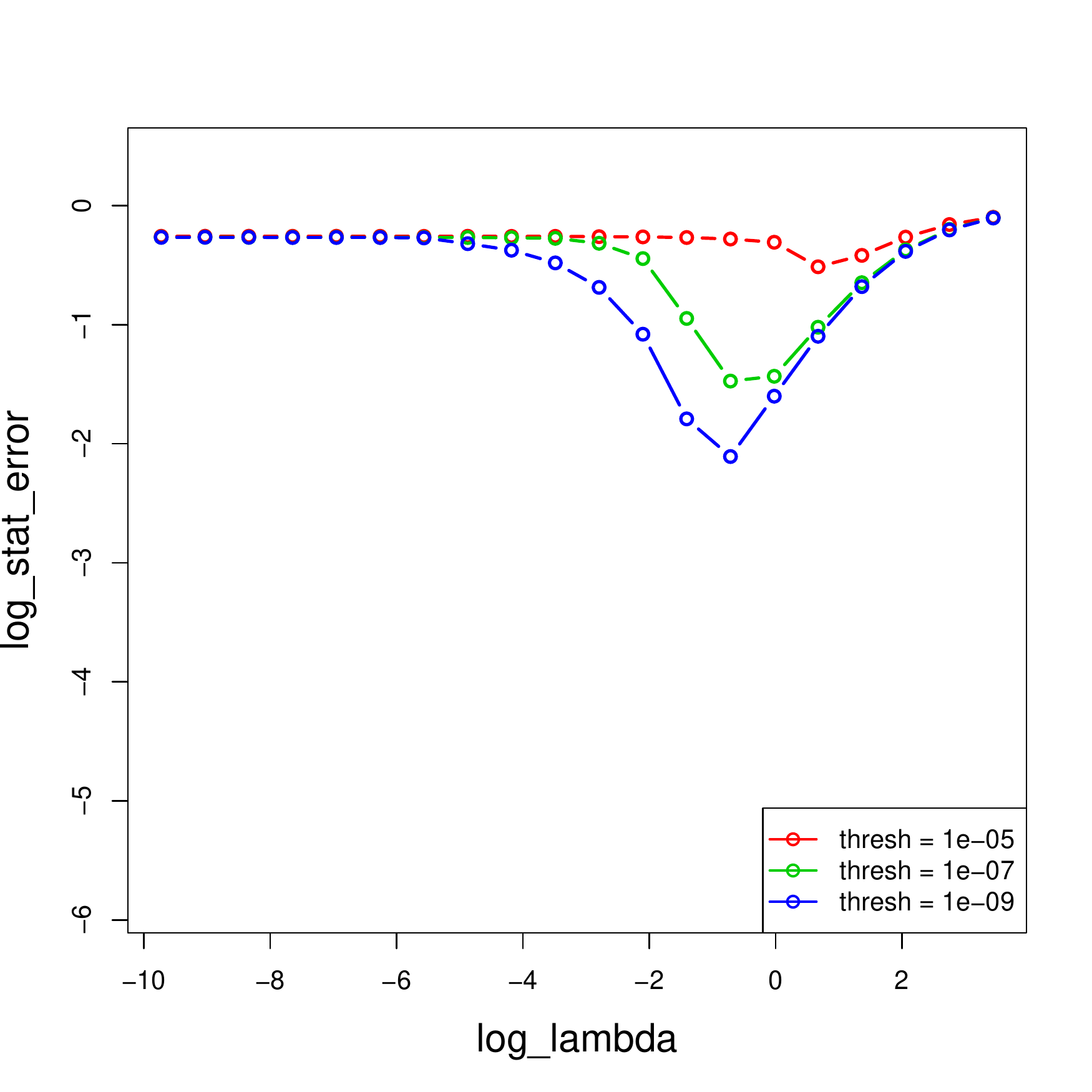}\\
    	\includegraphics[clip, trim=0 .5cm 0 1cm, width=5.0cm]{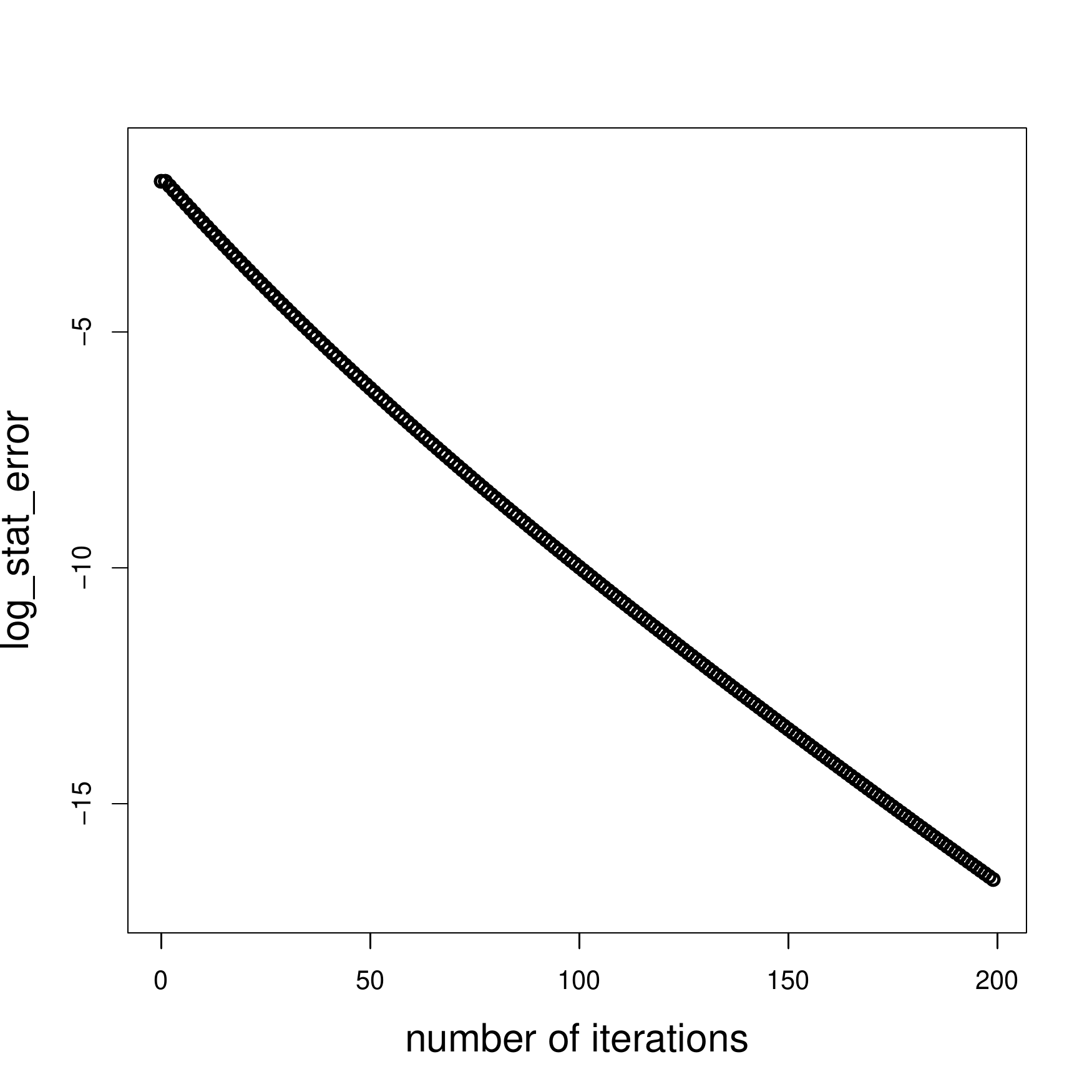} & \includegraphics[clip, trim=0 0.5cm 0 1cm, width=5.0cm]{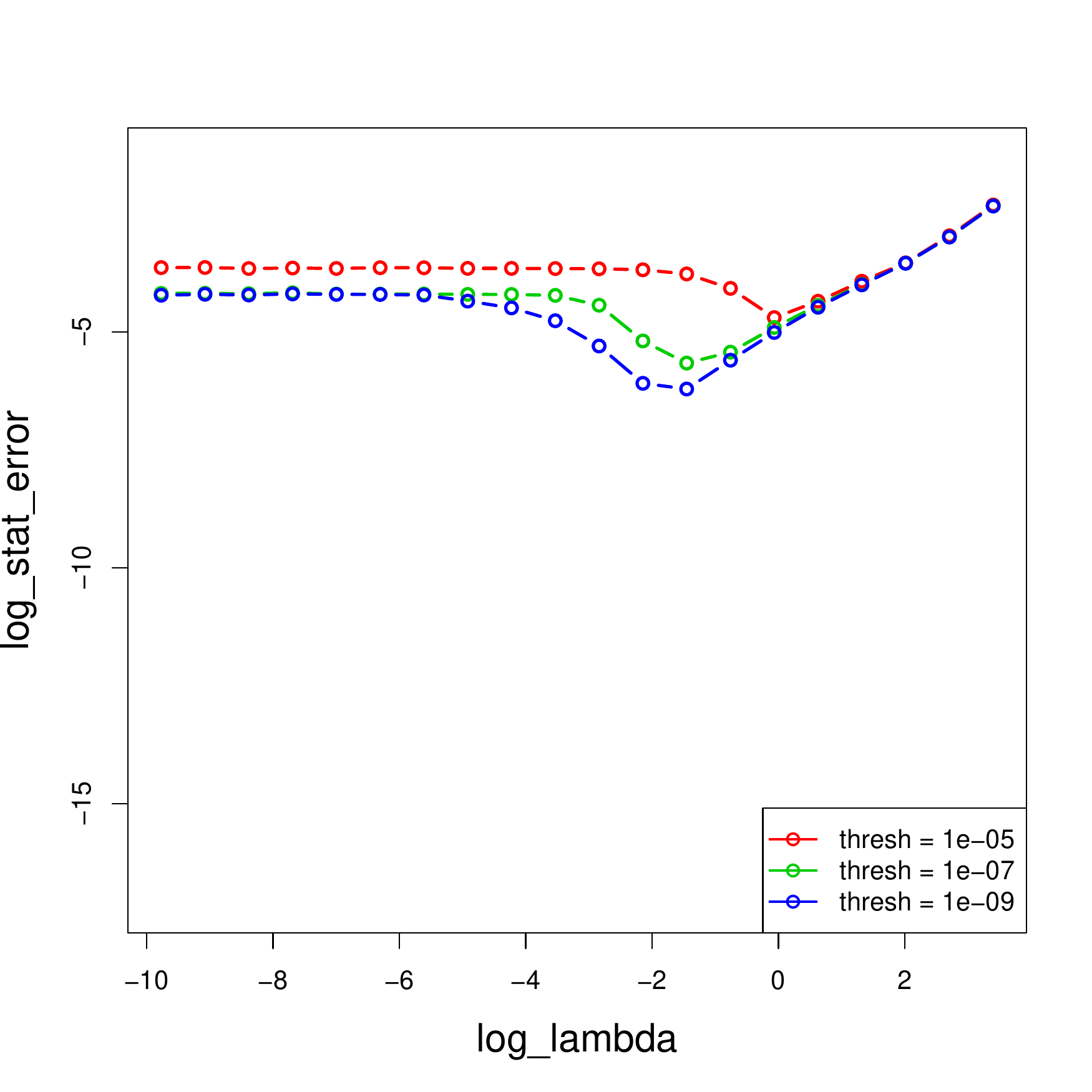}	
    \end{tabular}
    \caption{\label{fig:1}Logarithms of the average Frobenius norm $\sin \Theta$ error of \texttt{primePCA} and \texttt{softImpute} under various heterogeneity levels of missingness in absence of noise. The three rows of plots above, from the top to bottom, correspond to (H1), (H2), (H3) and (H4).}
\end{figure}


\subsection{Noisy case}
\label{Sec:Noisy}

Here, we generate the rows of $\bZ$ as independent $N_d(\bzero,\bI_d)$ random vectors, independent of all other data.  We maintain the same choices of $n$, $d$, $K$ and $\bV_K$ as in Section~\ref{Sec:Noiseless}, set $\bSigma_{\bu} = \nu^2\bI_{2}$ and vary $\nu > 0$ to achieve different signal-to-noise ratios.  In particular, defining $\mathrm{SNR} := \Tr \mathrm{Cov}(\bx_1)/\Tr \mathrm{Cov}(\bz_1)$, the choices $\nu=20, 40, 60$ correspond to the low, medium and high signal-to-noise ratios $\mathrm{SNR} = 1.6, 6.4, 14.4$, respectively.  For an additional comparision, we also considered a variant of the \texttt{softImpute} algorithm called \texttt{hardImpute} \citep{MHT10}, which retains only a fixed number of top singular values in each iteration of matrix imputation; this can be achieved by setting the argument~\texttt{$\lambda$} in the \texttt{softImpute} function to be $0$.

We remark that in general, the choice of $\lambda$ for \texttt{softImpute} is more challenging than in many regularised $M$-estimation contexts, because in our setting we have no response variable, so cross-validation techniques are less readily available.  For our comparisons, therefore, we gave the \texttt{softImpute} algorithm a particularly strong form of oracle choice of $\lambda$, namely where $\lambda$ was chosen for each individual repetition of the experiment, so as to minimise the loss function.  Naturally, such a choice is not available to the practitioner.  Moreover, in order to ensure the range of $\lambda$ was wide enough to include the best \texttt{softImpute} solution, we set the argument $\texttt{rank.max}$ in that algorithm to be $20$.



In Table~\ref{Table:1}, we report the statistical error of \texttt{primePCA} after $2000$ iterations of refinement, together with the corresponding statistical errors of our initial estimator \texttt{primePCA\_init} and those of \texttt{softImpute(oracle)} and \texttt{hardImpute}.  Remarkably, \texttt{primePCA} exhibits stronger performance than these other methods across each of the signal-to-noise ratio regimes and different missingness mechanisms.  We also remark that \texttt{hardImpute} is inaccurate and unstable, because it might converge to the local optimum that is far from the truth. 

\vspace{.1cm}

\begin{table}
        \caption{\label{Table:1}Average losses (with standard errors in brackets) under (H1), (H2), (H3) and (H4).}
    \centering
    \begin{tabular}{rlll}
    	\hline\hline
    	& $\nu = 20$ & $\nu = 40$ & $\nu = 60$ \\
    	\hline
    	(H1) \hfill \texttt{hardImpute} & $0.444_{(0.001)}$& $0.251_{(0.001)}$ & $0.186_{(0.0005)}$\\
    	\texttt{~~~~softImpute(oracle)} & $0.186_{(0.0004)}$& $0.095_{(0.0002)}$ & $0.064_{(0.0002)}$ \\
    	\texttt{primePCA\_init} & $0.306_{(0.001)}$& $0.266_{(0.001)}$ & $0.259_{(0.001)}$\\
    	\texttt{primePCA} & $0.171_{(0.0004)}$& $0.084_{(0.0002)}$ & $0.056_{(0.0001)}$ \\
    	\hline 
    	(H2) \hfill \texttt{hardImpute} & $0.473_{(0.001)}$& $0.291_{(0.001)}$ & $0.236_{(0.001)}$\\
    	\texttt{softImpute(oracle)} & $0.308_{(0.001)}$& $0.185_{(0.001)}$ & $0.141_{(0.001)}$ \\
    	\texttt{primePCA\_init} & $0.399_{(0.002)}$& $0.357_{(0.001)}$ & $0.349_{(0.001)}$\\
    	\texttt{primePCA} & $0.232_{(0.001)}$& $0.115_{(0.001)}$ & $0.077_{(0.0005)}$ \\
	\hline 
    	(H3) \hfill \texttt{hardImpute} & $0.479_{(0.001)}$& $0.385_{(0.001)}$ & $0.427_{(0.001)}$\\
    	\texttt{softImpute(oracle)} & $0.374_{(0.001)}$& $0.222_{(0.001)}$ & $0.170_{(0.001)}$ \\
    	\texttt{primePCA\_init} & $0.486_{(0.001)}$& $0.449_{(0.001)}$ & $0.442_{(0.001)}$\\
    	\texttt{primePCA} & $0.290_{(0.001)}$& $0.145_{(0.001)}$ & $0.097_{(0.0004)}$ \\
	\hline
    	(H4) \hfill \texttt{hardImpute} & $0.174_{(0.0005)}$& $0.089_{(0.0003)}$ & $0.062_{(0.0003)}$\\
    	\texttt{softImpute(oracle)} & $0.121_{(0.0002)}$& $0.062_{(0.0001)}$ & $0.042_{(0.0001)}$ \\
    	\texttt{primePCA\_init} & $0.203_{(0.001)}$& $0.175_{(0.0005)}$ & $0.169_{(0.0004)}$\\
    	\texttt{primePCA} & $0.116_{(0.0003)}$& $0.058_{(0.0002)}$ & $0.038_{(0.0001)}$ \\
    	\hline \hline
    \end{tabular}
\end{table}

\subsection{Near low-rank case}
\label{Sec:Sensitivity}

Here, we set $n = 2000$, $d = 500$, $K=10$, $\bSigma_{\bu} = \diag(2^{10}, 2^9, \ldots, 2)$, and 
fixed~$\bV_K$ once for all experiments to be the top $K$ eigenvectors of one realisation\footnote{In \texttt{R}, we set the random seed to be $2019$ before generating $\bV_K$.} of the sample covariance matrix of $n$ independent $N_d(\bzero, \bI_d)$ random vectors.  Here $\|\bV_K\|_{\infty} / d^{1 / 2}< 3.63$, and we again generated the rows of $\bZ$ as independent $N_d(\bzero,\bI_d)$ random vectors. Table~\ref{Table:2} reports the average loss of estimating the top $\widehat K$ eigenvectors of $\bSigma_{\by}$, where $\widehat K$ varies from $1$ to $5$.  Interestingly, even in this misspecified setting, \texttt{primePCA} is competitive with the oracle version of \texttt{softImpute}.
\begin{table}
    \caption{\label{Table:2}Average losses (with standard errors in brackets) in the setting of Section~\ref{Sec:Sensitivity} under (H1), (H2), (H3) and (H4).}
    \centering
    \begin{tabular}{rlllll}
    	\hline\hline
    	& $\widehat K = 1$ & $\widehat K = 2$ & $\widehat K = 3$ & $\widehat K = 4$ & $\widehat K = 5$\\
    	\hline
    	(H1) \hfill\texttt{hardImpute} & $0.308_{(0.002)}$ & $0.507_{(0.002)}$ & $0.764_{(0.004)}$ & $1.199_{(0.006)}$ & $1.524_{(0.004)}$ \\
    	\texttt{~~~softImpute(oracle)} & $0.107_{(0.001)}$ & $0.182_{(0.001)}$ & $0.275_{(0.001)}$ & $0.401_{(0.001)}$ & $0.596_{(0.001)}$ \\
    	 \texttt{primePCA\_init} & $0.203_{(0.001)}$& $0.345_{(0.001)}$ & $0.554_{(0.003)}$ & $1.074_{(0.007)}$ & $1.427_{(0.006)}$ \\
    	 \texttt{primePCA} & $0.141_{(0.001)}$& $0.200_{(0.001)}$ & $0.269_{(0.001)}$ & $0.374_{(0.001)}$ & $0.580_{(0.001)}$\\
    	\hline 
    	(H2) \hfill\texttt{hardImpute} & $0.298_{(0.002)}$ & $0.466_{(0.002)}$ & $0.696_{(0.003)}$ & $1.124_{(0.006)}$ & $1.452_{(0.004)}$ \\
    	\texttt{softImpute(oracle)} & $0.188_{(0.001)}$ &  $0.283_{(0.001)}$ & $0.410_{(0.001)}$ & $0.562_{(0.001)}$& $0.751_{(0.001)}$ \\
    	\texttt{primePCA\_init} & $0.285_{(0.001)}$& $0.443_{(0.004)}$ & $0.757_{(0.013)}$ & $1.201_{(0.004)}$ & $1.533_{(0.003)}$\\
    	\texttt{primePCA} & $0.190_{(0.002)}$& $0.267_{(0.002)}$ & $0.368_{(0.003)}$ & $0.543_{(0.008)}$ & $0.797_{(0.009)}$ \\
	\hline 
    	(H3) \hfill\texttt{hardImpute} & $0.302_{(0.001)}$ & $0.482_{(0.002)}$ & $0.695_{(0.002)}$ & $1.004_{(0.006)}$ & $1.373_{(0.004)}$ \\
    	\texttt{softImpute(oracle)} & $0.206_{(0.001)}$& $0.338_{(0.001)}$ & $0.492_{(0.001)}$ & $0.664_{(0.002)}$ & $0.878_{(0.002)}$\\
    	\texttt{primePCA\_init} & $0.341_{(0.001)}$& $0.528_{(0.019)}$ & $1.097_{(0.008)}$ & $1.306_{(0.008)}$ & $1.597_{(0.004)}$ \\
    	\texttt{primePCA} & $0.222_{(0.001)}$& $0.330_{(0.002)}$ & $0.452_{(0.003)}$ & $0.641_{(0.008)}$ & $0.919_{(0.007)}$ \\    	
	\hline
    	(H4) \hfill\texttt{hardImpute} & $0.090_{(0.001)}$ & $0.148_{(0.001)}$ & $0.226_{(0.001)}$ & $0.346_{0.002}$ & $0.589_{(0.007)}$\\
        \texttt{softImpute(oracle)} & $0.071_{(0.001)}$ & $0.112_{(0.001)}$ & $0.164_{(0.001)}$ & $0.233_{(0.001)}$ & $0.332_{(0.001)}$ \\
        \texttt{primePCA\_init} & $0.139_{(0.001)}$& $0.220_{(0.001)}$ & $0.325_{(0.001)}$ & $0.475_{(0.002)}$ & $0.805_{(0.012)}$\\
        \texttt{primePCA} & $0.098_{(0.001)}$& $0.135_{(0.001)}$ & $0.176_{(0.001)}$ & $0.236_{(0.001)}$ & $0.328_{(0.001)}$\\
    \hline \hline
    \end{tabular}
\end{table}

\section{Real data analysis: Million Song Dataset}
\label{Sec:RealData}

We apply \texttt{primePCA} to a subset of the Million Song Dataset\footnote{\texttt{https://www.kaggle.com/c/msdchallenge/data}} to analyse music preferences.  The original data can be expressed as a matrix with $110{,}000$ users (rows) and $163{,}206$ songs (columns), with entries representing the number of times a song was played by a particular user.  The proportion of non-missing entries in the matrix is $0.008\%$.  Since the matrix is very sparse, and since most songs have very few listeners, we enhance the signal-to-noise ratio by restricting our attention to songs that have at least 100 listeners ($1{,}777$ songs in total). This improves the proportion of non-missing entries to $0.23\%$.  Further summary information about the filtered data is provided below:
\begin{enumerate}
	\item Quantiles of non-missing matrix entry values:
          \begin{center}
        		\begin{tabular}{cccccccccccc}
                    	\hline\hline
                       $0\%$ & $10\%$ & $20\%$ & $30\%$ & $40\%$ & $50\%$ & $60\%$ & $70\%$ & $80\%$ & $90\%$ & $100\%$ \\ \hline
                      $1$ & $1$ & $1$ & $1$ & $1$ & $1$ & $2$ & $3$ & $5$ & $8$ & $500$ \\
                      \hline\hline
    		  \\ 
                    	\hline\hline
                       $90\%$ & $91\%$ & $92\%$ & $93\%$ & $94\%$ & $95\%$ & $96\%$ & $97\%$ & $98\%$ & $99\%$ & $100\%$ \\ \hline
                      $8$ & $9$ & $9$ & $10$ & $11$ & $13$ & $15$ & $18$ & $23$ & $33$ & $500$ \\
                      \hline\hline
    		\end{tabular}
	\end{center}
	
	\item Quantiles of the number of listeners for each song:  
          \begin{center}
        		\begin{tabular}{cccccccccccc}
                    	\hline\hline
                       $0\%$ & $10\%$ & $20\%$ & $30\%$ & $40\%$ & $50\%$ & $60\%$ & $70\%$ & $80\%$ & $90\%$ & $100\%$ \\
                      \hline
                      $100$ & $108$ & $117$ & $126$ & $139$ & $154$ & $178$ & $214$ & $272.8$ & $455.6$ & $5043$ \\
                      \hline\hline
    		\end{tabular}
        \end{center}
	\item Quantiles of the total play counts of each user: 
          \begin{center}
        		\begin{tabular}{cccccccccccc}
                    	\hline\hline
                       $0\%$ & $10\%$ & $20\%$ & $30\%$ & $40\%$ & $50\%$ & $60\%$ & $70\%$ & $80\%$ & $90\%$ & $100\%$ \\
                      \hline
                      $0$ & $0$ & $1$ & $3$ & $4$ & $6$ & $9$ & $14$ & $21$ & $38$ & $1114$ \\
                      \hline\hline
                        \\
                    	\hline\hline
                       $90\%$ & $91\%$ & $92\%$ & $93\%$ & $94\%$ & $95\%$ & $96\%$ & $97\%$ & $98\%$ & $99\%$ & $100\%$ \\ \hline
                      $38$ & $41$ & $44$ & $48$ & $54$ & $60$ & $68$ & $79$ & $97$ & $132$ & $1114$ \\
                      \hline\hline
    		\end{tabular}
	\end{center}
	
\end{enumerate}
 Moreover, from the first numbered point above, we see that the distribution of play counts has an extremely heavy tail. To guard against excessive influence from the outliers, we discretise the play counts into five interest levels as follows: 
\begin{center}
    \begin{tabular}{cccccc} 
	\hline\hline
    	Play count & 1 & 2 -- 3 & 4 -- 6 & 7 -- 10 & $\ge 11$ \\ \hline
    	Level of interest & 1 & 2 & 3 & 4 & 5 \\
	\hline\hline
	\end{tabular}
\end{center}

\begin{figure}[htbp]
	\centering
	\includegraphics[width=0.5\textwidth, clip, trim=0 .5cm 0 1cm]{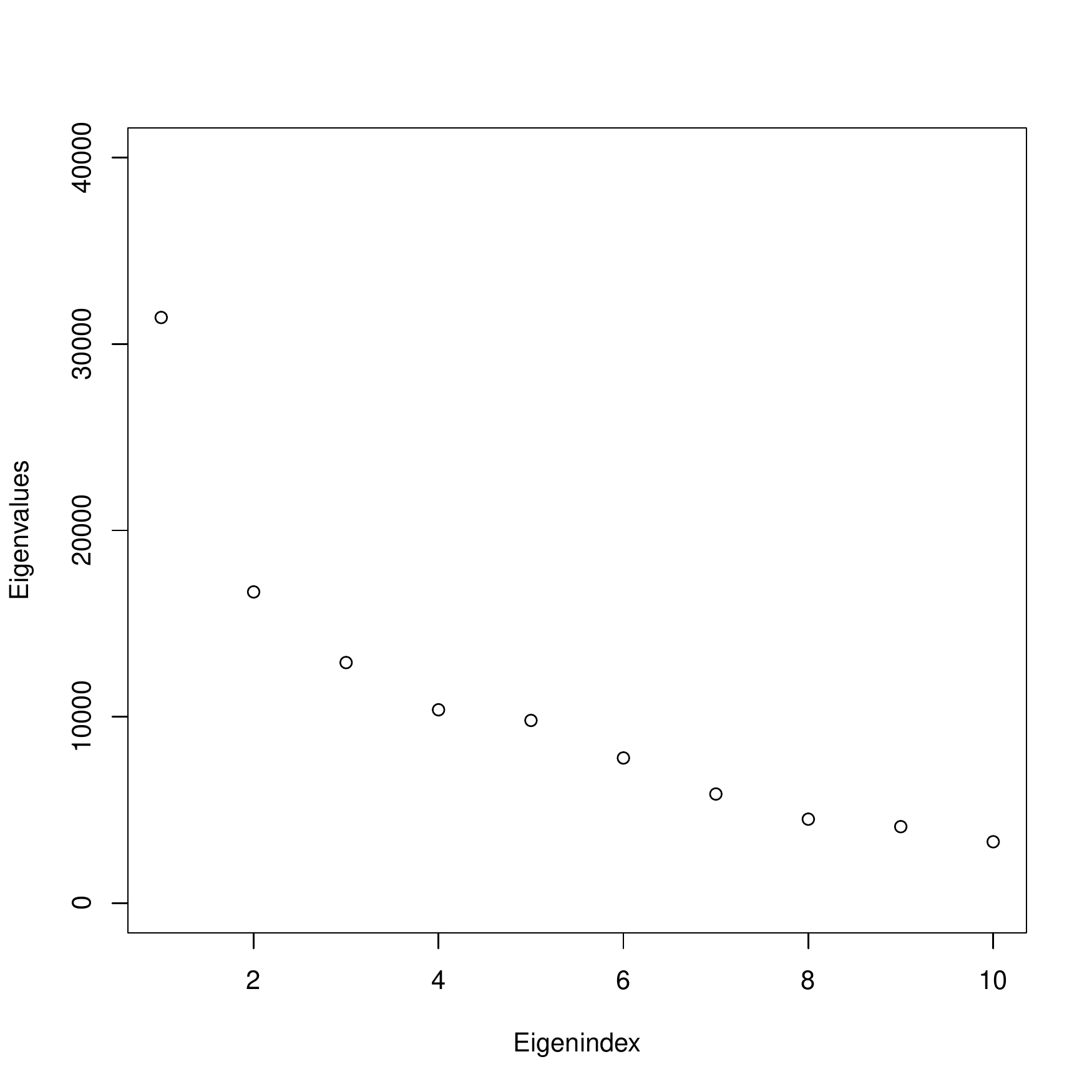}
	\caption{\label{fig:eigenvalue}Leading eigenvalues of $\widehat{\bSigma}_\by$.}
\end{figure}
We are now in a position to analyse the data using \texttt{primePCA}.  For $i=1,\ldots,n=110{,}000$ and $j = 1,\ldots,d = 1{,}777$, let $Y_{ij} \in \{1, \ldots, 5\}$ denote the level of interest of user $i$ in song~$j$, let $\widehat K = 10$ and let $\cI = \{i: \lonenorm{\omega_i} > \widehat K\}$.  Our initial goal is to assess the top $\widehat K$ eigenvalues of $\bSigma_{\by}$ to see if there is low-rank signal in $\bY = (Y_{ij})$.  To this end, we first apply Algorithm~\ref{algo:2} to obtain $\bV_{\widehat K}^{\texttt{prime}}$; next, for each $i \in \cI$, we run Steps 2--5 of Algorithm~\ref{algo:1} to obtain the estimated principal score $\widehat \bu_i$, so that we can approximate $\by_i$ by $\widehat \by_i = \widehat \bV_{\widehat K}^{\texttt{prime}}\widehat \bu_i$.  This allows us to estimate $\bSigma_{\by}$ by $ \widehat \bSigma_{\by} = n^{-1} \sum_{i \in \cI} \widehat\by_i \widehat \by_i^\top $.  Figure~\ref{fig:eigenvalue} displays the top $\widehat K$ eigenvalues of $\widehat \bSigma_{\by}$, which exhibit a fairly rapid decay, thereby providing evidence for the existence of low-rank signal in $\bY$.


In the left panel of Figure~\ref{fig:msd_primepca}, we present the estimate $\widehat \bV^{\text{prime}}_2$ of the top two eigenvectors of the covariance matrix $\bSigma_{\by}$, with colours indicating the genre of the song.  The outliers in the $x$-axis of this plot are particularly interesting: they reveal songs that polarise opinion among users (see Table~\ref{tab:outlier_songs}) and that best capture variation in individuals' preferences for types of music measured by the first principal component.  It is notable that Rock songs are overrepresented among the outliers (see Table~\ref{tab:genre_contrast}), relative to, say, Country songs.  Users who express a preference for particular songs are also more likely to enjoy songs that are nearby in the plot.  Such information is therefore potentially commercially valuable, both as an efficient means of gauging users' preferences, and for providing recommendations.  

The right panel of Figure~\ref{fig:msd_primepca} presents the principal scores $\{\widehat \bu_i\}_{i = 1}^n$ of the users, with frequent users (whose total song plays are in the top 10\% of all users) in red and occasional users in blue.  This plot reveals, for instance, that the second principal component is well aligned with general interest in the website.  Returning to the left plot, we can now interpret a positive $y$-coordinate for a particular song (which is the case for the large majority of songs) as being associated with an overall interest in the music provided by the site.

\begin{figure}[H]
	\centering
	\begin{tabular}{cc}
		\includegraphics[width=0.47\textwidth]{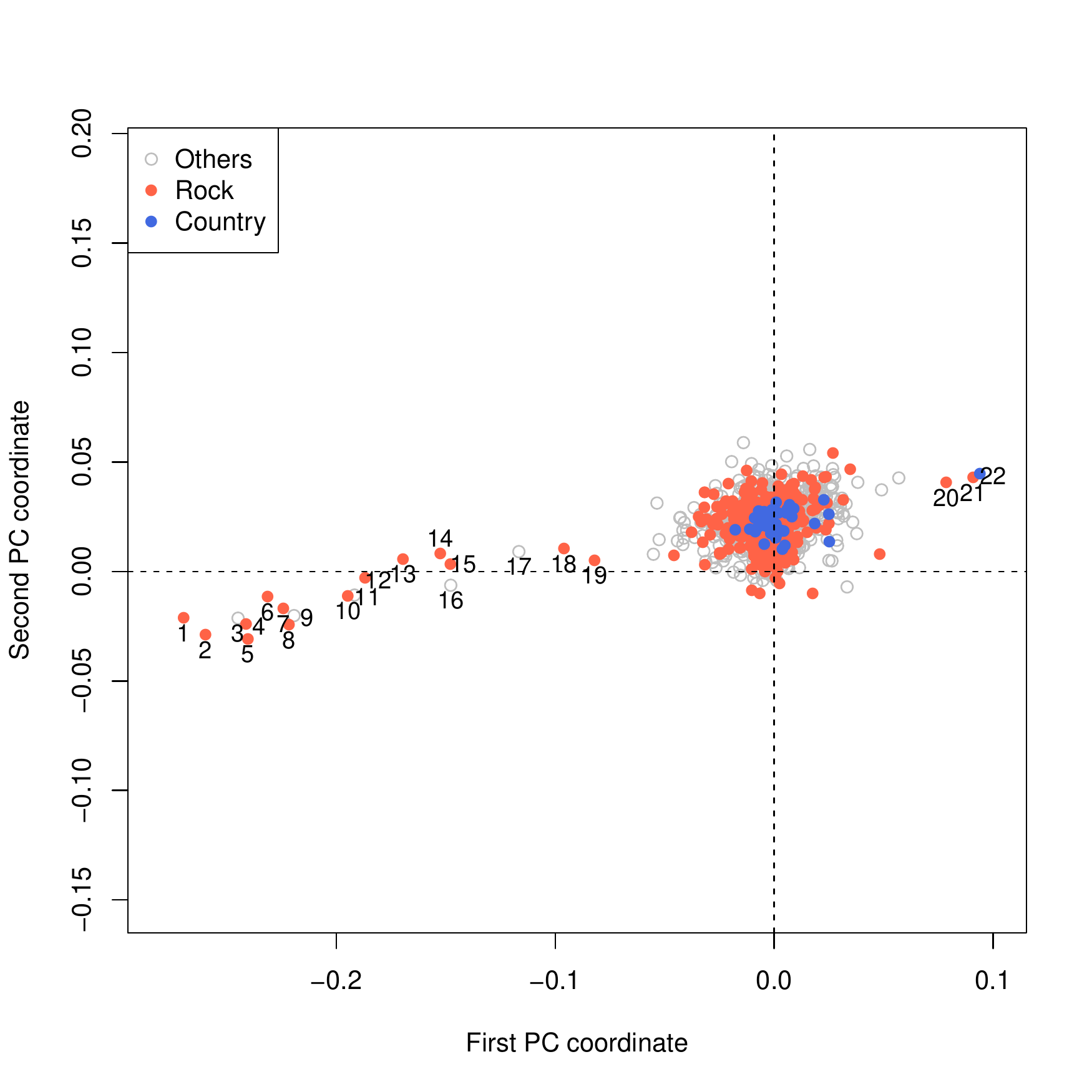} & \includegraphics[width=0.47\textwidth]{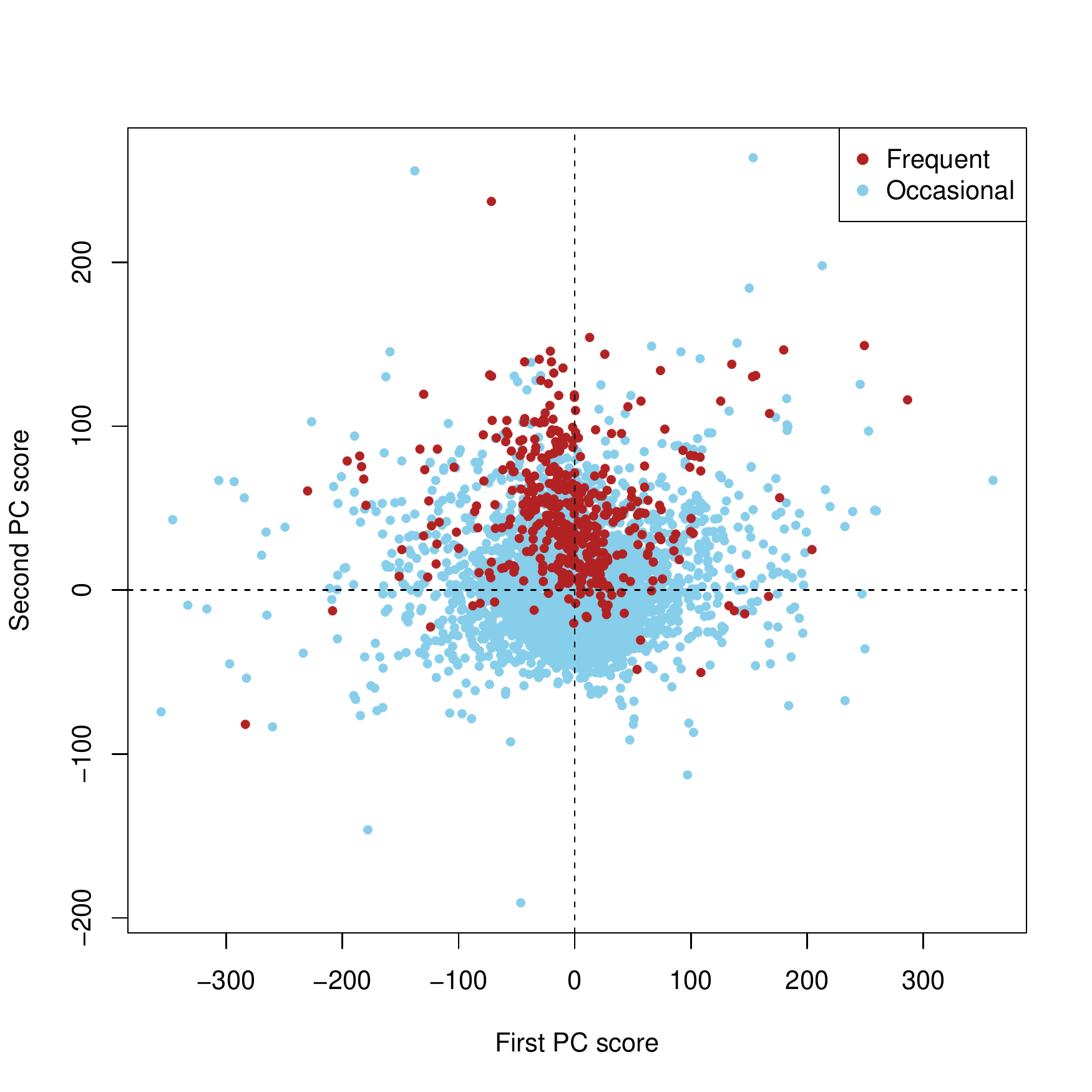}\\
	\end{tabular}
	\caption{Plots of the first two principal components $\widehat \bV^{\text{prime}}_2$ (left) and the associated scores $\{\widehat \bu_i\}_{i = 1}^n$ (right).}
	\label{fig:msd_primepca}
\end{figure}

\begin{table}
		\caption{\label{tab:outlier_songs}Titles, artists and genres of the $22$ outlier songs in Figure~\ref{fig:msd_primepca}.}
	\centering 
	\begin{tabular}{cccc}
	\hline\hline
	 ID & Title & Artist & Genre  \\ \hline
	1 & Your Hand In Mine & Explosions In The Sky & Rock \\                 
 	2 & All These Things That I've Done & The Killers & Rock \\             
	3 & Lady Marmalade & Christina Aguilera / Lil' Kim/  & Pop \\
	& & Mya / Pink & \\
	4 & Here It Goes Again & Ok Go & Rock \\
 	5 & I Hate Pretending (Album Version) & Secret Machines & Rock \\       
 	6 & No Rain & Blind Melon & Rock \\
 	7 & Comatose (Comes Alive Version) & Skillet & Rock \\
 	8 & Life In Technicolor & Coldplay & Rock \\                            
 	9 & New Soul & Yael Naïm & Pop \\
	10 & Blurry & Puddle Of Mudd & Rock \\                                   
	11 & Give It Back & Polly Paulusma & Pop \\                              
	12 & Walking On The Moon & The Police & Rock \\
	13 & Face Down (Album Version) & The Red Jumpsuit Apparatus & Rock \\    
	14 & Savior & Rise Against & Rock \\
	15 & Swing Swing & The All-American Rejects & Rock \\
	16 & Without Me & Eminem & Rap \\
	17 & Almaz & Randy Crawford & Pop \\                                     
	18 & Hotel California & Eagles & Rock \\                                 
	19 & Hey There Delilah & Plain White T's & Rock \\                       
	20 & Revelry & Kings Of Leon & Rock \\                                   
	21 & Undo & Björk & Rock \\
	22 & You're The One & Dwight Yoakam & Country \\ 	\hline\hline             	
	\end{tabular}
\end{table}

\begin{table}
		\caption{\label{tab:genre_contrast}Genre distribution of the outliers (songs whose corresponding coordinate in the estimated leading principal component is of magnitude larger than $0.07$).}
	\centering
	\begin{tabular}{cccccccccccc}
		\hline\hline
		& Rock & Pop & Electronic & Rap & Country & RnB & Latin & Others \\ \hline
		Population & $48.92\%$ & $18.53\%$ & $9.12\%$ & $7.15\%$ &$4.33\%$ &$2.35\%$ &$2.26\%$ &$7.34\%$ \\ 
		Outliers & $72.73\%$ & $18.18\%$ & $0\%$ & $4.54\%$ &$4.54\%$ &$0\%$ &$0\%$ &$0\%$ \\ \hline\hline
	\end{tabular}
\end{table}

\section{Discussion}
\label{Sec:Discussion}
Heterogeneous missingness is ubiquitous in contemporary, large-scale data sets, yet we currently understand very little about how existing procedures perform or should be adapted to cope with the challenges this presents.  Here we attempt to extract the lessons learned from this study of high-dimensional PCA, in order to see how related ideas may be relevant in other statistical problems where one wishes to recover low-dimensional structure with data corrupted in a heterogeneous manner.

A key insight, as gleaned from Section~\ref{sec:2.2}, is that the way in which the heterogeneity interacts with the underlying structure of interest is crucial.  In the worst case, the missingness may be constructed to conceal precisely the structure one seeks to uncover, thereby rendering the problem infeasible by any method.  The only hope, then, in terms of providing theoretical guarantees, is to rule out such an adversarial interaction.  This was achieved via our incoherence condition in Section~\ref{Sec:WithIncoherence}, and we look forward to seeing how the relevant interactions between structure and heterogeneity can be controlled in other statistical problems such as those mentioned in the introduction.  For instance, in sparse linear regression, one would anticipate that missingness of covariates with strong signal would be much more harmful than corresponding missingness for noise variables.


Our study also contributes to the broader understanding of the uses and limitations of spectral methods for estimating hidden low-dimensional structures in high-dimensional problems.  We have seen that the IPW estimator is both methodologically simple and achieves near-minimax optimality when the noise level is of constant order.  Similar results have been obtained for spectral clustering for network community detection in stochastic block models \citep{RCY11} and in low-rank-plus-sparse matrix estimation problems \citep{FLM13}.  On the other hand, the IPW estimator fails to provide exact recovery of the principal components in the noiseless setting.  In these other aforementioned problems, it has also been observed that refinement of an initial spectral estimator can enhance performance, particularly in high signal-to-noise ratio regimes \citep{GMZZ2016,ZCW18}, as we were able to show for our \texttt{primePCA} algorithm.  This suggests that such a refinement has the potential to confer a sharper dependence of the statistical error rate on the signal-to-noise ratio compared with a vanilla spectral algorithm, and understanding this phenomenon in greater detail provides another interesting avenue for future research.

\section{Proofs of main results}

 \label{Sec:Proofs}
We define two linear maps $\mathcal{D}, \mathcal{F}: \mathbb{R}^{d\times d}\to \mathbb{R}^{d\times d}$, such that for any $\bA = (A_{ij}) \in\mathbb{R}^{d\times d}$, we have $[\mathcal{D}(\bA)]_{ij} := A_{ij}\mathds{1}_{\{i=j\}}$ and $\mathcal{F}(\bA) := \bA - \mathcal{D}(\bA)$. In other words, $\mathcal{D}(\bA)$ and $\mathcal{F}(\bA)$ correspond to the diagonal and off-diagonal parts of $\bA$ respectively.

\begin{proof}[Proof of Theorem \ref{thm:1}]
Since $\by_i = \bV_K \bu_i + \bz_i$, we have that
\begin{equation}
\label{Eq:Psi2vec}
\|\by_i\|_{\psi_2} \leq  \|\bV_K\bu_i\|_{\psi_2} + \|\bz_i\|_{\psi_2} =\|\bu_i\|_{\psi_2} + \|\bz_i\|_{\psi_2}\leq   (\lambda^{1/2}+1)\tau.
\end{equation}
Moreover, since $\max_{j \in [d]} \|y_{1j}\|_{\psi_2}\leq  M^{1/2}$ by Lemma~\ref{lem:psi1_to_psi2}, it follows from \citet[Lemma~2.2.2]{vanderVaartWellner1996} that there exist a universal constant $C>0$ such that\footnote{In \citet{vanderVaartWellner1996}, the $\psi_2$-norm of a random variable is defined slightly differently as $\|X\|_{\psi_{2}}:= \inf\{a: \mathbb{E} e^{(X/a)^2} \leq  2\}$. It can be shown \citep[Lemma~5.5]{Ver10} that these two norms are equivalent.}
\begin{equation}
\label{Eq:Psi2max}
\bigl\|\|\by_i\|_\infty\bigr\|_{\psi_2} \leq  \{CM\log d\}^{1/2}.
\end{equation}
Recall that $\widetilde\by_i^\top = (\widetilde y_{i1},\ldots,\widetilde y_{id})$ denotes the $i$th row of $\bY_\bOmega$. Define $\bA_i := \mathcal{F}(\widetilde \by_i \widetilde \by_i^\top)$ and $\bB_i := \mathcal{D}(\widetilde \by_i \widetilde \by_i^\top)$. We have the following decomposition of $\widehat \bG$:
\begin{align*}
    	\widehat \bG  & = \frac{1}{n}\sum_{i=1}^n \biggl(\frac{1}{\widehat p^2}\bA_i - \frac{1}{p^2} \E \bA_i\biggr) + \frac{1}{n}\sum_{i=1}^n \biggl(\frac{1}{\widehat p} \bB_i - \frac{1}{p} \E \bB_i\biggr) + \bSigma_{\by} \\
	& = \frac{1}{n\widehat p^2} \sum_{i = 1}^n (\bA_i - \E \bA_i ) + \frac{1}{n\widehat p} \sum_{i = 1}^n (\bB_i - \E \bB_i) + \biggl(\frac{1}{\widehat p^2} - \frac{1}{p^2}\biggr) \E \bA_1 + \biggl(\frac{1}{\widehat p} - \frac{1}{p}\biggr) \E \bB_1 + \bSigma_{\by} \\
	& = \frac{1}{n\widehat p^2} \sum_{i = 1}^n (\bA_i - \E \bA_i) + \frac{1}{n\widehat p} \sum_{i = 1}^n (\bB_i - \E \bB_i) + \biggl(\frac{p^2}{\widehat p^2} - 1\biggr) \cF(\bSigma_{\by}) + \biggl( \frac{p}{\widehat p} - 1\biggr) \cD(\bSigma_{\by}) + \bSigma_{\by} \\
	& = \frac{1}{n\widehat p^2} \sum_{i = 1}^n (\bA_i - \E \bA_i) + \frac{1}{n\widehat p} \sum_{i = 1}^n (\bB_i - \E \bB_i) + \biggl( \frac{p}{\widehat p} - \frac{p^2}{\widehat p^2}\biggr) \cD(\bSigma_{\by}) + \frac{p^2}{\widehat p^2} \bSigma_{\by}. 
\end{align*}

We regard $\widehat \bG$ as a perturbed version of $(p^2 / \widehat p^2)\bSigma_{\by}$. Applying \citet[Theorem~2]{YWS14}, we have
\begin{align}
	\label{eq:davis_kahan}
          L&(\widehat \bV_K,  \bV_K) \nonumber\\
          &\leq \frac{2K^{1/2}\widehat p^2}{ p^2  \lambda_K} \biggl \| \frac{1}{n\widehat p^2} \sum_{i = 1}^n (\bA_i - \E \bA_i) + \frac{1}{n\widehat p} \sum_{i = 1}^n (\bB_i - \E \bB_i) + \biggl( \frac{p}{\widehat p} - \frac{p^2}{\widehat p^2}\biggr) \cD(\bSigma_{\by})\biggr \|_{\text{op}} \nonumber\\
	& \le \frac{2K^{1/2}}{\lambda_K} \biggl (\biggl \| \frac{1}{np^2} \sum_{i = 1}^n (\bA_i - \E \bA_i) \biggr\|_{\text{op}}+ \biggl \|\frac{\widehat p}{np^2} \sum_{i = 1}^n (\bB_i - \E \bB_i) \biggr\|_{\text{op}}+ \biggl \|\biggl(\frac{\widehat p}{p} - 1 \biggr) \cD(\bSigma_{\by})\biggr\|_{\text{op}} \biggr). 
\end{align}
We will control the expectation of the three terms on the right-hand side of~\eqref{eq:davis_kahan} separately. Define $\widehat p_i := d^{-1}\sum_{j=1}^d \omega_{ij}$. For notational simplicity, we write $\mathbb{P}'$ and $\mathbb{E}'$ respectively for the probability and expectation conditional on $(\widehat p_1,\ldots,\widehat p_n)$. Also, let $\widehat p_i^{(2)} := \E'(\omega_{i1}\omega_{i2})$ and $\widehat p_i^{(3)} := \mathbb{E}'(\omega_{i1}\omega_{i2}\omega_{i3})$ (if $d = 2$, then $\widehat p_i^{(3)} :=0$).  For the first term, we apply a symmetrisation argument. Let $\{\bA_i^*\}_{i=1}^n$ denote copies of $\{\bA_i\}_{i=1}^n$ that are independent of $\{\bu_i,\bz_i,\bomega_i\}_{i = 1}^n$, let $\{\epsilon_i\}_{i = 1}^n$ be independent Rademacher random variables that are independent of $\{\bu_i,\bz_i,\bomega_i, \bA^*_i\}_{i = 1}^n$ and write $\mathbb{E}^*$ for expectation conditional on $\{\bu_i,\bz_i,\bomega_i\}_{i = 1}^n$.  Then by Jensen's inequality,
\begin{align}
 	\label{eq:symmetrization}
          \E \biggl \| \frac{1}{np^2} \sum_{i = 1}^n (\bA_i - \E \bA_i) \biggr\|_{\text{op}} & = \E \biggl \| \frac{1}{np^2} \sum_{i = 1}^n (\bA_i - \mathbb{E}^* \bA_i^*) \biggr\|_{\text{op}} \leq \E \biggl \| \frac{1}{np^2} \sum_{i = 1}^n (\bA_i - \bA_i^*) \biggr\|_{\text{op}} \nonumber\\
          &= \E \biggl\| \frac{1}{np^2} \sum_{i = 1}^n \epsilon_i(\bA_i - \bA_i^*)\biggr\|_{\mathrm{op}} \leq 2 \E \biggl \| \frac{1}{np^2} \sum_{i = 1}^n \epsilon_i\bA_i\biggr\|_{\text{op}}.
\end{align}
 Since $\bA_i = \widetilde \by_i \widetilde \by_i^\top - \mathcal{D}(\widetilde \by_i \widetilde \by_i^\top)$, we have that
\[
	\E'\{(\bA_i^2)_{jk} \mid \by_i\} = \begin{cases}
    \E'\{ \widetilde y_{ij}^2  \|\widetilde\by_i\|_2^2 - \widetilde y_{ij}^4 \mid \by_i\}  = \widehat p_i^{(2)} y^2_{ij}\sum_{{t} \neq j} y^2_{it}, & \text{if $j=k$,}\vspace{0.2cm}\\
     \sum_{t\notin \{j, k\}} \E'\{ \widetilde y_{ij}\widetilde y_{ik} \widetilde y^2_{it}\mid \by_i\} = \widehat p_i^{(3)}y_{ij}y_{ik} \sum_{t\notin \{j, k\}}  y^2_{it}, & \text{if $j\neq k$}.\end{cases}
\]
Writing $\by_{i,-t} := \by_i - y_{it}\be_t$, we then have
\begin{align*}
\mathbb{E}'(\bA_i^2 \mid \by_i) &= \widehat p_i^{(3)}\sum_{t=1}^d y_{it}^2\by_{i,-t}\by_{i,-t}^\top + (\widehat p_i^{(2)} - \widehat p_i^{(3)})\mathcal{D}\biggl(\sum_{t=1}^d y_{it}^2\by_{i,-t}\by_{i,-t}^\top\biggr) \\
&\preceq \widehat p_i^{(3)}\|\by_i\|_\infty^2 \sum_{t=1}^d\by_{i,-t}\by_{i,-t}^\top + (\widehat p_i^{(2)} - \widehat p_i^{(3)})\|\by_i\|_\infty^2 \mathcal{D}\biggl(\sum_{t=1}^d\by_{i,-t}\by_{i,-t}^\top\biggr). 
\end{align*}
Notice that 
\[
	\sum_{t = 1}^d \by_{i, -t}\by_{i, -t}^\top  = \sum_{t = 1}^d \bigl( \by_i\by_i^\top - y_{it}\be_{t}\by^\top_i - y_{it}\by_i\be_t^\top + y^2_{it}\be_t\be_t^\top \bigr) = (d - 2)\by_i\by_i^\top + \mathcal{D}(\by_i\by_i^\top). 
\]
Therefore, 
\[
	\begin{aligned}
		\mathbb{E}'(\bA_i^2 \mid \by_i) & \preceq \|\by_i\|_\infty^2\bigl\{\widehat p_i^{(3)}(d - 2)\by_i\by_i^\top + \bigl((d - 1)\widehat p_i^{(2)} - (d-2)\widehat p_i^{(3)}\bigr)\mathcal{D}(\by_i\by_i^\top)\bigr\} \\
		& \preceq  d\|\by_i\|_\infty^2\bigl\{\widehat p_i^{(3)}\by_i\by_i^\top + \widehat p_i^{(2)}\mathcal{D}(\by_i\by_i^\top)\bigr\}. 
	\end{aligned}
\]
Now, observe that $\|\bA_i\|_{\text{op}} \leq  d\widehat p_i\|\by_i\|_\infty^2$, so for $q\geq 2$,
\[
\mathbb{E}' \bigl(\bA_i^q\bigr) \preceq \mathbb{E}'\bigl\{(d\widehat p_i\|\by_i\|_\infty^2)^{q-2}\mathbb{E}'(\bA_i^2\mid \by_i)\bigr\}\preceq d^{q-1}\widehat p_i^{q-2}\mathbb{E}'\bigl[\|\by_i\|_\infty^{2q-2}\bigl\{\widehat p_i^{(3)}\by_i\by_i^\top + \widehat p_i^{(2)}\mathcal{D}(\by_i\by_i^\top)\bigr\}\bigr]. 
\]
By the Cauchy--Schwarz inequality, we therefore have that
\begin{align*}
\|\mathbb{E}' \bigl(\epsilon^q_i\bA^q_i \bigr)\|_{\text{op}} & \leq  d^{q-1}\widehat p_i^{q-2} \widehat p_i^{(3)} \Bigl[\mathbb{E}\bigl(\|\by_i\|_\infty^{4q-4} \bigr)  \sup_{\bv\in \mathcal{S}^{d-1}}\mathbb{E}\bigl\{(\bv^\top \by_i)^4\bigr\} \Bigr]^{1/2} + d^{q-1}\widehat p_i^{q-2}\widehat p_i^{(2)} \mathbb{E} \|\by_i\|_\infty^{2q}\\
&\leq  d^{q-1}\widehat p_i^{q-2}\Bigl\{\widehat p_i^{(3)} (4q-4)^{q-1}(CM\log d)^{q-1} 8R\tau^2 +  \widehat p_i^{(2)} (2q)^q (CM\log d)^q\Bigr\}\\
&\leq  \frac{q!}{2}\bigl\{32eCMR\tau^2\widehat p_i^{(3)} d\log d + e^2\widehat p_i^{(2)}d(CM\log d)^2\bigr\}\bigl(4eCM\widehat p_i d\log d\bigr)^{q-2} \\
& \leq \frac{q!}{2}C'Md \log d\bigl\{R\tau^2\widehat p_i^{(3)} + \widehat p_i^{(2)}M\log d\bigr\}\bigl(4eCM\widehat p_i d\log d\bigr)^{q-2}, 
\end{align*}
where $C'>0$ is a universal constant, the second inequality uses~\eqref{Eq:Psi2vec} and~\eqref{Eq:Psi2max} and the penultimate bound uses Stirling's inequality. 

%

Let $\rho:=4eCMd(\max_i \widehat p_i) \log d$ and $\sigma^2 := C'Mn^{-1}d\log d\sum_{i=1}^n \bigl\{R\tau^2\widehat p_i^{(3)}  + \widehat p_i^{(2)}M\log d\bigr\}$.  Then by \citet[Theorem~6.2]{Tro12}, we obtain that
\[
\mathbb{P}'\biggl(\biggl\|\frac{1}{n}\sum_{i=1}^n \epsilon_i\bA_i \biggr\|_\text{op}\geq t\biggr) \leq  2d \exp\biggl(\frac{-nt^2/2}{\sigma^2+\rho t}\biggr).
\]
Consequently, for $t_0:= 2\sigma n^{-1/2}\log^{1/2}d + 4\rho n^{-1}\log d$, we have 
\[
\mathbb{E}'\biggl\|\frac{1}{n}\sum_{i=1}^n \epsilon_i\bA_i \biggr\|_\text{op} \leq  t_0 + \int_{t_0}^\infty 2d\bigl\{e^{-nt^2/(4\sigma^2)} + e^{-nt/(4\rho)}\bigr\} \,dt \leq  4t_0.
\]
Given \eqref{eq:symmetrization}, integrating the left-hand side of the above inequality over $(\widehat p_i)_{i=1}^n$ yields
\begin{align}
   	\label{Eq:BoundA}
\E \biggl \| \frac{1}{np^2} \sum_{i = 1}^n (\bA_i - \E \bA_i) \biggr\|_{\text{op}} & \lesssim \frac{(\mathbb{E}\sigma^2)^{1/2}\log^{1/2}d}{n^{1/2}p^2} + \frac{\mathbb{E}\rho\log d}{np^2} \nonumber\\
& \lesssim \sqrt\frac{Md\{R\tau^2 p + M \log d\}\log^2 d}{np^2} + \frac{Md \log^2 d \log n}{np}, 
\end{align}
where the first inequality uses Jensen's inequality and 
the second inequality uses Lemma~\ref{lem:bernsteinoulli_moments}. 


For the second sum on the right-hand side of~\eqref{eq:davis_kahan}, we have by \citet[Lemma~2.2.2]{vanderVaartWellner1996} again that
\begin{align*}
\biggl\|\biggl\| \frac{1}{n}\sum_{i=1}^n(\bB_i-\mathbb{E}\bB_i)\biggr\|_{\text{op}} \biggr\|_{\psi_1} &= \biggl\|\max_{j\in[d]} \biggl|\frac{1}{n}\sum_{i=1}^n (\widetilde y_{ij}^2 - \E \widetilde y_{ij}^2)\biggr|\biggr\|_{\psi_1} \\
&\lesssim \frac{\log d}{n}\,\biggl\|\sum_{i=1}^n (\widetilde y_{i1}^2 - \E \widetilde y_{i1}^2)\biggr\|_{\psi_1} \lesssim \frac{M\log d}{\sqrt{n}},
\end{align*}
where the final inequality uses Lemma~\ref{lem:psi1bernstein} and the fact that $\|\widetilde y_{i1}^2 - \mathbb{E}\widetilde y_{i1}^2\|_{\psi_1} \leq \|\widetilde y_{i1}^2\|_{\psi_1} +\mathbb{E}\widetilde y_{i1}^2 \leq 2M$.
Now by the Cauchy--Schwarz inequality, 
\begin{align}
	\label{Eq:BoundB}
	\E \biggl\| \frac{\widehat p}{np^2} \sum_{i = 1}^n (\bB_i - \E \bB_i)\biggr\|_{\text{op}} & \le \biggl\{\E \biggl(\frac{\widehat p^2}{p^4}\biggr) \E \biggl(\biggl\|\frac{1}{n}\sum_{i = 1}^n (\bB_i - \E \bB_i) \biggr\|^2_{\text{op}}\biggr)\biggr\}^{1 / 2} \nonumber \\
	& \lesssim \biggl\{\biggl(\frac{1}{p^2} + \frac{1}{ndp^3}\biggr)\frac{M^2\log^2 d}{n}\biggr\}^{1 / 2} \lesssim \frac{M\log d}{p\sqrt{n}}, 
\end{align}
which is dominated by the bound in \eqref{Eq:BoundA}.

Finally, for the third term on the right-hand side of \eqref{eq:davis_kahan}, we have by the Cauchy--Schwarz inequality again that 
\begin{equation}
	\label{Eq:BoundC}
	\E \biggl\| \biggl(\frac{\widehat p}{p} - 1\biggr) \cD(\bSigma_{\by})\biggr\|_{\text{op}} \lesssim \frac{M}{\sqrt{ndp}}, 
\end{equation}
which is also dominated by the bound in \eqref{Eq:BoundA}. Substituting~\eqref{Eq:BoundA}, \eqref{Eq:BoundB} and \eqref{Eq:BoundC} into~\eqref{eq:davis_kahan} establishes~\eqref{eq:thm1}. 
If we regard $M$ and $\tau$ as constants and if $n\geq d\log^2d \log^2n / (\lambda_1 p + \log d)$, then the second term in the bracket of the right-hand side of~\eqref{eq:thm1} is dominated by the first term, and claim~\eqref{eq:thm1_minimax} follows immediately.
\end{proof}

\begin{proof}[Proof of Theorem \ref{thm:2}]				
Without loss of generality, we may assume that $d \geq 50$ and that $d$ is even, and write $d = 2h$ for some $h\in\mathbb{N}$. By the Gilbert--Varshamov lemma \citep[see, e.g.][Lemma~4.7]{Mas07}, there exist $W\subseteq \{0,1\}^h$ such that $\log |W|\geq h/16$ and $d_{\text{H}}(\bw,\bw')\geq h/4$ for any distinct pair of vectors $\bw,\bw'\in W$. Let $\gamma\in [0,\pi/2]$ be a real number to be specified later. To each $\bw\in W$, we can associate a distribution $P_{\bw}\in\mathcal{P}_{n,d}(\lambda_1,p)$ such that $\bU$ is a random vector ($n\times 1$ random matrix) with independent $N(0,\lambda_1)$ entries,  $\bZ$ is an $n\times d$ random matrix with independent $N(0,1)$ entries, and
\[
\bV_1 = \bV_{1,\bw} :=  \frac{1}{\sqrt{h}}\biggl\{\bw\otimes \begin{pmatrix}\cos \gamma\\ \sin\gamma\end{pmatrix} + (\mathbf{1}_h - \bw)\otimes \begin{pmatrix}\cos \gamma\\ -\sin\gamma\end{pmatrix} \biggr\}\in\mathcal{S}^{d-1}.
\]

Fixing distinct $\bw, \bw'\in W$, we write $\bv = (v_j)_{j\in [d]} := \bV_{1,\bw}$ and $\bv'=(v'_j)_{j\in [d]} := \bV_{1,\bw'}$ and let $Q_{\bw}$ and $Q_{\bw'}$ denote respectively the marginal distribution of $(\widetilde \by_1, \bomega_1)$ under $P_{\bw}$ and $P_{\bw'}$. Define $S:= \{j\in[d]: \omega_{1j} = 1\}$ and also set $\bar{\bv}_S := (v_j\mathds{1}_{\{j \in S\}})_{j \in [d]} \in \mathbb{R}^d$ and $\bar{\bv}_S' := (v_j'\mathds{1}_{\{j \in S\}})_{j \in [d]} \in \mathbb{R}^d$.  Then
\begin{align}
\text{KL}(P_{\bw}, P_{\bw'}) &= \text{KL}(Q_{\bw}^{\otimes n}, Q_{\bw'}^{\otimes n}) = n\text{KL}(Q_{\bw}, Q_{\bw'}) = n\mathbb{E}_{Q_{\bw}}\biggl\{\mathbb{E}_{Q_{\bw}}\biggl(\log\frac{dQ_{\bw}}{dQ_{\bw'}} \biggm| \bomega_1\biggr)\biggr\} \nonumber\\
&= n\mathbb{E}\,\text{KL}\bigl(N_d(\bzero,  \bI_d + \lambda_1 \bar{\bv}_S \bar{\bv}_S^\top), N_d(\bzero,  \bI_d + \lambda_1 \bar{\bv}'_S \bar{\bv}_S^{'\top})\bigr),\label{Eq:KLstep1}
\end{align}
where the final expectation is over the marginal distribution of $S$ under $P_{\bw}$. We partition $S=S_0\sqcup S_{1+}\sqcup S_{1-}$, where $S_0:=\{j\in S: \text{$j$ is odd}\}$, $S_{1+}:=\{j\in S: \text{$j$ is even and $v_j = \bar{v}'_j$}\}$ and $S_{1-}:=\{j\in S: \text{$j$ is even and $v_j \neq v'_j$}\}$. Since by construction we always have $\|\bar{\bv}_S\|_2=\|\bar{\bv}_S'\|_2$, we can apply Lemma~\ref{lem:7} to obtain
\begin{align*}
\text{KL}\bigl(N(\bzero,  \bI_d &+ \lambda_1 \bar{\bv}_S \bar{\bv}_S^\top), N(\bzero,  \bI_d + \lambda_1 \bar{\bv}'_S (\bar{\bv}'_S)^\top)\bigr) =\frac{\lambda_1^2(\|\bar{\bv}_S\|_2^4 - \langle\bar{\bv}_S, \bar{\bv}'_S\rangle^2)}{2(1+\lambda_1\|\bar{\bv}_S\|_2^2)}\\
&\leq  \frac{\lambda_1^2\langle \bar{\bv}_S, \bar{\bv}_S+\bar{\bv}'_S\rangle \langle \bar{\bv}_S, \bar{\bv}_S-\bar{\bv}'_S\rangle}{2\max\{1,\lambda_1\|\bar{\bv}_S\|_2^2\}} = \frac{\lambda_1^2\bigl(\sum_{j\in S_0\cup S_{1+}} 2v_j^2\bigr)\bigl(\sum_{j\in S_{1-}} 2v_j^2\bigr)}{2\max\{1,\lambda_1\sum_{j\in S}v_j^2\}}\\
&\leq  \min\biggl\{\frac{2\lambda_1^2}{h^2}\bigl(|S_0\times S_{1-}|\sin^2\gamma \cos^2\gamma+ |S_{1+}\times S_{1-}| \sin^4 \gamma\bigr), \, \frac{2\lambda_1|S_{1-}|\sin^2\gamma}{h}\biggr\}.
\end{align*}
Substituting the above bound into~\eqref{Eq:KLstep1}, we have
\begin{equation}
\label{Eq:Fano1}
\KL(P_{\bw}, P_{\bw'})\leq  2n\lambda_1 p \min\{1, \lambda_1 p\}\sin^2\gamma.
\end{equation}
On the other hand, since $d_{\text{H}}(\bw,\bw')\geq h/4$, we also have
\begin{equation}
\label{Eq:Fano2}
\sin^2 \Theta(\bv, \bv') = 1-(\bv^\top \bv')^2 = 1 - \biggl(1 - \frac{2 d_{\text{H}}(\bw,\bw')\sin^2\gamma}{h}\biggr)^2 \geq \frac{1}{2}\sin^2\gamma.
\end{equation}
By~\eqref{Eq:Fano1}, \eqref{Eq:Fano2} and Fano's inequality \citep[][Lemma~3]{Yub97}, 
\begin{align*}
\inf_{\widehat\bv}\sup_{P\in\mathcal{P}_{n,d,1}(\lambda_1,p)} \E_P L(\widehat\bv, \bv) &\geq \inf_{\widehat\bv }\max_{\bw \in W} \mathbb{E}_{P_{\bw}} L(\widehat\bv, \bv) \\
&\geq \frac{1}{2\sqrt{2}}\sin \gamma \biggl(1-\frac{\log 2+ 2n\lambda_1 p \min\{1,\lambda_1 p\}\sin^2\gamma}{\log |W|}\biggr). 
\end{align*}
We now choose $\gamma\in[0,\pi/2]$ such that $\sin^2\gamma = \min\bigl\{\frac{\log |W|}{8n\lambda_1 p \min\{1,\lambda_1 p\}}, 1\bigr\}$. Since $d\geq 50$, we obtain $\log|W|\geq d/32\geq 2\log 2$. Therefore, 
\[
\inf_{\widehat\bv}\sup_{P\in\mathcal{P}_{n,d,1}(\lambda_1,p)} \E_P L(\widehat\bv, \bv) \geq \frac{1}{8\sqrt{2}}\sin\gamma \geq \min\biggl\{\frac{1}{200\lambda_1}\sqrt\frac{d \max(1,\lambda_1 p)}{n p^2}, \frac{1}{8\sqrt{2}}\biggr\},
\]
as desired.
\end{proof}

\begin{proof}[Proof of Proposition~\ref{thm:4}]
For notational simplicity, we write $\widehat \bV_K := \widehat \bV_K^{(\mathrm{in})}$ and $\widehat \bV_{S,K} :=  (\widehat \bV_K)_S$ for any $S\subseteq [d]$. For $i \in \mathcal{I}$, let $\bell_i^\top \in \mathbb{R}^K$ denote the $i$th row of $\bL$.  For any $i \in \mathcal{I}$, we have $\widehat\by_{i, \cJ_i} = \by_{i, \cJ_i}$ and 
\begin{align*}
	\widehat \by_{i, \cJ_i^{\mathrm{c}}} - \by_{i, \cJ_i^{\mathrm{c}}}& = \widehat\bV_{\cJ_i^{\mathrm{c}}, K} (\widehat\bV_{\cJ_i, K}^{\top} \widehat\bV_{\cJ_i, K})^{-1} \widehat\bV_{\cJ_i, K}^{\top} \by_{i, \cJ_i} - \by_{i, \cJ_i^{\mathrm{c}}}\\
	& = \widehat \bV_{\cJ_i^{\mathrm{c}}, K}(\widehat\bV_{\cJ_i, K}^{\top} \widehat\bV_{\cJ_i, K})^{-1}\widehat\bV_{\cJ_i, K}^{\top}\bR_{\cJ_i}\bW_{\widehat \bV_K, \bR}\bW^{-1}_{\widehat \bV_K, \bR}\bGamma\bell_i - \bR_{\cJ^{\mathrm{c}}_i}\bGamma\bell_i\\
	& = \widehat \bV_{\cJ^{\mathrm{c}}_i, K}(\widehat\bV_{\cJ_i, K}^{\top} \widehat\bV_{\cJ_i, K})^{-1}\widehat \bV_{\cJ_i, K}^\top(\bR_{\cJ_i} \bW_{\widehat \bV_K, \bR} - \widehat \bV_{\cJ_i, K})\bW^{-1}_{\widehat \bV_K, \bR}\bGamma\bell_i \\
	&\hspace{6cm} + (\widehat \bV_{\cJ_i^{\mathrm{c}}, K} - \bR_{\cJ_i^{\mathrm{c}}}\bW_{\widehat \bV_K, \bR})\bW^{-1}_{\widehat \bV_K, \bR}\bGamma\bell_i.
\end{align*}
Thus 
	\begin{align*}
	\linfinity{\widehat \by_{i, \cJ_i^{\mathrm{c}}} - \by_{i, \cJ_i^{\mathrm{c}}}} & \leq \sigma_*\sqrt{d}\twotoinf{\widehat \bV_{\cJ_i^{\mathrm{c}}, K}} \twotoinf{\bR_{\cJ_i} \bW_{\widehat \bV_K, \bR} - \widehat \bV_{\cJ_i, K}} \ltwonorm{\bGamma\bell_i}\\
	&\hspace{6cm} + \twotoinf{\widehat \bV_{\cJ^{\mathrm{c}}_i, K}- \bR_{\cJ^{\mathrm{c}}_i} \bW_{\widehat \bV_K, \bR}} \ltwonorm{\bGamma\bell_i}  \\
	& \leq \Delta \ltwonorm{\bGamma\bell_i} \bigl(1 + \sigma_*\sqrt{d}\twotoinf{\widehat \bV_{K}}\bigr) \\
	& \leq \Delta \sigma_1(\bGamma) \mu_1\biggl({\frac{K}{n}}\biggr)^{1 / 2}\bigl\{1 + \sigma_*\bigl(\mu_2\sqrt{K} + \Delta\sqrt d\bigr)\bigr\} \\
                &\leq \frac{C'}{n^{1/2}}\Delta \sigma_1(\bGamma) \mu_1\mu_2K =: m, 
	\end{align*}	
say, where $C' > 0$ depends only on $\sigma_*$ and $c_1$.  Note that the inequality above holds for all $i \in \mathcal{I}$.  Writing $\bE := \widehat \bY - \bY$ for convenience, we have found that $\|\bE\|_\infty \leq m$.  Let $\bL_{\perp} \in \OO^{n \times (n-K)}, \bR_{\perp} \in \OO^{d \times (d - K)}$ be the orthogonal complements of $\bL \in \OO^{n \times K}$ and $\bR \in \OO^{d \times K}$ respectively, so that $(\bL,\bL_{\perp}) \in \mathbb{O}^{n \times n}$ and $(\bR,\bR_{\perp}) \in \mathbb{O}^{d \times d}$.  We wish to apply \citet[Theorem~1]{CZh18}.  To this end, note that
\[
\opnorm{\bL^\top \bE\bR} = \sup_{\bs, \bt \in \cS^{K - 1}} (\bL\bs)^\top \bE (\bR\bt) \le \twotoinf{\bL}\twotoinf{\bR}\|\bE\|_1 \le \frac{K \mu_1\mu_2m\lonenorm{\bOmega^{\mathrm{c}}}}{\sqrt{nd}}. 
\]
Hence, writing $\alpha := \sigma_K(\bGamma + \bL^\top \bE \bR) $, we have by Weyl's inequality that
\[
	\sigma_K(\bGamma) - \frac{K \mu_1\mu_2m\lonenorm{\bOmega^{\mathrm{c}}}}{\sqrt{nd}} \le \alpha \le \sigma_K(\bGamma) + \frac{K \mu_1\mu_2m\lonenorm{\bOmega^{\mathrm{c}}}}{\sqrt{nd}}. 
\]
	Now, writing $\beta := \opnorm{\bL_{\perp}^\top \widehat \bY \bR_{\perp}} = \opnorm{\bL^\top_{\perp}\bE\bR_{\perp}}$, we have 
\[
	\beta \leq \opnorm{\bE} \leq \fnorm{\bE} \leq m\sqrt{\lonenorm{\bOmega^{\mathrm{c}}}}. 
\]
In addition, by Cauchy--Schwarz and Jensen's inequality,
\begin{align*}
	\opnorm{\bL^\top \bE} & = \sup_{\substack{\bs \in \cS^{K - 1}\\\bt \in \cS^{d - 1}}} (\bL\bs)^\top \bE\bt \le \twotoinf{\bL} \sup_{\bt \in \cS^{K - 1}}\lonenorm{\bE\bt}  \\
	& \le \mu_1(Kn)^{1 / 2} \frac{1}{n} \sum_{i = 1}^{n} m\sqrt{\lonenorm{\bomega^{\mathrm{c}}_i}} \le \mu_1m (K \lonenorm{\bOmega^{\mathrm{c}}})^{1 / 2}.
\end{align*}
Similarly, 
\[
	\opnorm{\bE\bR} \le \mu_2m (K \lonenorm{\bOmega^{\mathrm{c}}})^{1 / 2}.
\]
Hence there exists $c_1 > 0$, depending only on $\mu_1,\mu_2$ and $\sigma_*$, such that whenever $\Delta \leq \frac{c_1\sigma_K(\bGamma)}{K ^ 2\sigma_1(\bGamma) \sqrt{d}}$, we have   
\[
	\alpha^2 - \beta^2 - \min(\opnorm{\bL^\top \bE}^2, \opnorm{\bE \bR}^2) \ge \frac{\sigma^2_K(\bGamma)}{2}~~~\text{and}~~~\alpha, \beta \le 2\sigma_K(\bGamma). 
\]
 Now let $\widehat \bY_{\mathcal{I}} = \widehat \bL \widehat \bGamma \widehat \bR^\top$ be an SVD of $\widehat \bY$. We can now apply \citet[Theorem~1]{CZh18} to deduce that for such $c_2$,
\begin{align*}
	\opnorm{\sin\Theta(\widehat\bR, \bR)} &\leq \frac{\alpha \opnorm{\bL^\top \bE} + \beta \opnorm{\bE\bR}}{\alpha^2 - \beta^2 - \min(\opnorm{\bL^\top \bE}^2, \opnorm{\bE \bR}^2)} \leq \frac{4m(\mu_1 + \mu_2)(K\lonenorm{\bOmega^{\mathrm{c}}})^{1 / 2}}{\sigma_K(\bGamma)}\\
	& \leq 	\frac{4C'K^{3 / 2} \sigma_1(\bGamma)(\mu_1 + \mu_2)\mu_1\mu_2}{\sigma_K(\bGamma)}\biggl(\frac{\lonenorm{\bOmega^{\mathrm{c}}}}{n}\biggr)^{1 / 2}\Delta =: \kappa \Delta, 
\end{align*}
say. Similarly, 
\[
	\opnorm{\sin\Theta(\widehat \bL, \bL)} \leq \frac{\alpha \opnorm{\bE\bR} + \beta \opnorm{\bL^\top \bE}}{\alpha^2 - \beta^2 - \min(\opnorm{\bL^\top \bE}^2, \opnorm{\bE \bR}^2)} \le \kappa \Delta. 
\]
We are now in a position to show contraction in terms of two-to-infinity norm. By \citet[Theorem~3.7]{CTP18}, 
\begin{align}
\label{eq:40}
\cT(\widehat \bR, \bR) & \leq \frac{2\twotoinf{\bR_{\perp}\bR_{\perp}^{\top}\bE^\top \bL\bL^\top}}{\sigma_K(\bGamma)} +  \frac{ 2\twotoinf{\bR_{\perp}\bR_{\perp}^\top\bE^\top \bL_{\perp}\bL^\top_{\perp}} }{\sigma_K(\bGamma)}  \opnorm{\sin\Theta(\widehat \bL, \bL)}\nonumber\\
&\hspace{4cm} + \opnorm{\sin\Theta(\widehat\bR, \bR)}^2\twotoinf{\bR} =: T_1 + T_2 + T_3, 
\end{align}
say.  Note that 	
\begin{align*}
	\|\bR_{\perp} \bR^\top_{\perp}\|_{\infty \rightarrow \infty} & \le \|\bI_d\|_{\infty \rightarrow \infty} + \|\bR\bR^\top\|_{\infty \to \infty} = 1 + \sup_{\|\bv\|_{\infty} \le 1} \|\bR\bR^\top \bv\|_{\infty}\\
	& \le 1 + \sup_{\|\bv\|_{2} \le \sqrt{d}} \|\bR\|_{2 \to \infty}\|\bR^\top \bv\|_2 \le 1 + \sqrt K\mu_2. 
\end{align*}
Hence, 
\begin{align*}
T_1 & \leq \frac{2(1 + \sqrt K\mu_2)\twotoinf{\bE^\top \bL\bL^\top}}{\sigma_K(\bGamma)} \leq \frac{2(1 + \sqrt K\mu_2)\twotoinf{\bE^\top\bL}}{\sigma_K(\bGamma)} \\
& \leq \frac{2(1 + \sqrt K \mu_2)\mu_1\sqrt{K} m \|{\bOmega^{\mathrm{c}}}\|_{1 \to 1}}{\sqrt{n} \sigma_K(\bGamma)}  {\lesssim}_{\mu_1, \mu_2} \frac{K^{2} \sigma_1(\bGamma)\|\bOmega^{\mathrm{c}}\|_{1 \to 1} \Delta}{n\sigma_K(\bGamma)}. 
\end{align*}
Moreover, 
\begin{align*}
T_2 & \leq \frac{2(1 + \sqrt K\mu_2)\twotoinf{\bE^\top} \kappa\Delta}{\sigma_K(\bGamma)} \leq \frac{2(1 + \sqrt K\mu_2) m \|\bOmega^{\mathrm{c}}\|_{1 \to 1}^{1 / 2} \kappa \Delta}{\sigma_K(\bGamma)} \\
& \lesssim_{\mu_1, \mu_2} \frac{K^{3 / 2}\sigma_1(\bGamma)\|\bOmega^{\mathrm{c}}\|_{1 \to 1} ^ {1 / 2} \kappa \Delta^2 }{\sqrt{n}\sigma_K(\bGamma)}. 
\end{align*}
Finally, 
\[
	T_3 \le \mu_2\kappa^2\Delta^2\biggl(\frac{K}{d}\biggr)^{1 / 2}. 
\]
Write
\[
	\eta := \frac{K^2\sigma_1(\bGamma)\|\bOmega^{\mathrm{c}}\|_{1 \to 1}^{1 / 2}}{\sqrt{n}\sigma_K(\bGamma)}
\]
for simplicity, so that $\kappa \lesssim_{\mu_1, \mu_2, \sigma_*} (d / K)^{1 / 2}\eta$. Given that $\cT(\widehat\bV_K^{(\mathrm{out})}, \bV_K) = \cT(\widehat \bR, \bR)$, substituting the bounds for $T_1, T_2, T_3$ into \eqref{eq:40} yields that 
\begin{align*}
\cT (\widehat\bV_K^{(\mathrm{out})}, \bV_K)  & \lesssim_{\mu_1, \mu_2, \sigma_*} \biggl\{ \eta\biggl(\frac{\|\bOmega^{\mathrm{c}}\|_{1 \to 1}}{n}\biggr)^{1 / 2} + \frac{\sqrt{d}}{K}\eta^2\Delta + \biggl(\frac{d}{K}\biggr)^{1/2}\eta^2\Delta \biggr\} \Delta\\
& \leq \eta^2 \biggl\{ \frac{\sigma_K(\bGamma)}{K^{2} \sigma_1(\bGamma)} + 2 \biggl(\frac{d}{K}\biggr)^{1 / 2} \Delta\biggr\}\Delta \lesssim_{\mu_1, \mu_2}  \frac{\eta^2 \sigma_K(\bGamma)}{K^{2 }\sigma_1(\bGamma) } \Delta = \frac{K^{2} \sigma_1(\bGamma) \|\bOmega^{\mathrm{c}}\|_{1 \to 1}}{\sigma_K(\bGamma) n}\Delta, 
\end{align*}
as desired. 
	\end{proof}
	
	\begin{proof}[Proof of Theorem~\ref{thm:primepca}]
		We prove this result by induction on $t$. The case $t = 0$ is true by definition of $\Delta$, so
		suppose that the conclusion holds for some $t\in \{0\} \cup [n_{\mathrm{iter}}-1]$. We make the following two claims:  
		\begin{enumerate}
			\item[(a)] $\mathcal{I}^{(t)} = \mathcal{I}$; 
			\item[(b)] The error is further contracted by refinement, i.e., $\cT(\widehat\bV^{(t + 1)}_K, \bV_K) \le \rho\cT(\widehat\bV^{(t)}_K, \bV_K)$. 
		\end{enumerate}
		To prove claim (a), notice that for each $i \in [n]$, by Weyl's inequality and the inductive hypothesis,
		\[
			\begin{aligned}
			\bigl|\sigma_K\bigl((\widehat \bV_K^{(t)})_{\cJ_i}\bigr) - \sigma_K((\bV_K)_{\cJ_i})\bigr| & = \bigl|\sigma_K\bigl((\widehat \bV_K^{(t)})_{\cJ_i}\bigr) - \sigma_K\bigl((\bV_K)_{\cJ_i} \bW_{\widehat\bV^{(t)}_{K}, \bV_K}\bigr)\bigr| \\
			& \leq \bigl\|(\widehat \bV_K^{(t)})_{\cJ_i} - (\bV_K)_{\cJ_i}\bW_{\widehat\bV^{(t)}_{K}, \bV_K}\bigr\|_{\mathrm{op}} \\
			& \leq |\cJ_i|^{1/2} \mathcal{T}\bigl(\widehat \bV_K^{(t)},\bV_K\bigr) \leq |\cJ_i|^{1/2}\rho^t \Delta. 
			\end{aligned}
		\]
		Now, for $i \in \cI$, 
		\[
			\begin{aligned}
			\sigma_K\bigl((\widehat \bV_K^{(t)})_{\cJ_i}\bigr) & \geq \sigma_K\bigl((\bV_K)_{\cJ_i}\bigr) - |\sigma_K\bigl((\widehat \bV_K^{(t)})_{\cJ_i}\bigr) - \sigma_K\bigl((\bV_K)_{\cJ_i}\bigr)| \\
			& \geq \bigl(\sigma_*^{-1} + \epsilon - \sqrt d \Delta\bigr)(|\cJ_i| / d)^{1 / 2}.
			\end{aligned}
		\]
		On the other hand, if $i \in \cI^{\mathrm{c}}$ and $\|\bomega_i\|_1 > K$, then
		\[
			\begin{aligned}
	    			\sigma_K\bigl((\widehat \bV_K^{(t)})_{\cJ_i}\bigr) & \le \sigma_K\bigl((\bV_K)_{\cJ_i}\bigr) + |\sigma_K\bigl((\widehat \bV_K^{(t)})_{\cJ_i}\bigr) - \sigma_K\bigl((\bV_K)_{\cJ_i}\bigr)| \\
    				& \le \bigl(\sigma_*^{-1} - \epsilon + \sqrt d \Delta\bigr)(|\cJ_i| / d)^{1 / 2}.
			\end{aligned}
		\]		
		Hence, if we choose $c_1 \le \epsilon$, then $\sqrt d\Delta < \epsilon$, so for $i \in \cI$, 
		\[
			\sigma_K\bigl((\widehat \bV_K^{(t)})_{\cJ_i}\bigr) > \Bigl(\frac{|\cJ_i|}{d\sigma_*}\Bigr)^{1 / 2};
		\]
		moreover, for $i \in \cI^{\mathrm{c}}$, 
		\[
			\sigma_K\bigl((\widehat \bV_K^{(t)})_{\cJ_i}\bigr) < \Bigl(\frac{|\cJ_i|}{d\sigma_*}\Bigr)^{1 / 2}. 
		\]
		Claim~(a) follows. As for claim~(b), note that $\widehat\bV_K^{(t + 1)} = \texttt{refine}(K, \widehat\bV_K^{(t)}, \bOmega_{\cI^{(t)}}, (\bY_{\bOmega})_{\cI^{(t)}})$.  Taking $c_1,C > 0$ from Proposition~\ref{thm:4}, and reducing $c_1$ if necessary so that $c_1 \le \epsilon$, we may apply this proposition to deduce that whenever
		\begin{enumerate}
			\item[(i)] $\cT(\widehat\bV^{(t)}_K, \bV_K) \le \frac{c_1 \sigma_K(\bY_{\cI})}{K^2 \sigma_1(\bY_{\cI})\sqrt{d}}$; 
			\item[(ii)] $\rho := \frac{CK^{2} \sigma_1(\bY_{\cI}) \|{\bOmega_{\mathcal{I}}^{\mathrm{c}}}\|_{1 \to 1}}{\sigma_K(\bY_{\cI}) |\cI|} < 1$, 
                        \end{enumerate}
                        we have $\cT(\widehat\bV_K^{(t + 1)}, \bV_K) \le \rho \cT(\widehat\bV_K^{(t)}, \bV_K)$.  But the conditions~(i) and~(ii) are ensured by the inductive hypothesis and our assumptions, so the conclusion follows. 
\end{proof}

\begin{proof}[Proof of Theorem \ref{thm:init_2toinf}]
Let $\bE := \widetilde \bG - \E^{\bOmega} \widetilde \bG = \widetilde{\bG} - \bSigma_{\by}$. By \citet[Theorem~3.7]{CTP18}, when $\lambda_K \geq 2\opnorm{\bE}$, we have that 
\begin{align*}
	\cT(\widetilde \bV_K, \bV_K) & \le 2\lambda_K^{-1}\twotoinf{\bV_{-K} \bV_{-K}^\top\bE \bV_K\bV_K^\top} \\
	& \qquad\qquad + 2\lambda_K^{-1}\twotoinf{\bV_{-K} \bV^{\top}_{-K} \bE \bV_{-K} \bV_{-K}^\top} \opnorm{\sin\Theta(\widetilde \bV_K, \bV_K)}\\
	& \qquad\qquad + 2\lambda_K^{-1}\twotoinf{\bV_{-K} \bV^{\top}_{-K} \bSigma_{\by} \bV_{-K} \bV_{-K}^\top} \opnorm{\sin\Theta(\widetilde \bV_K, \bV_K)} \\ 			
	& \qquad\qquad + \opnorm{\sin\Theta(\widetilde \bV_K, \bV_K)}^2 \twotoinf{\bV_K} \\
	& =: T_1 + T_2 + T_3 + T_4. 
\end{align*}
Note that if $\lambda_K$ satisfies \eqref{eq:lambda_K_lowerbound} for some $\xi > 1$, then $\mathbb{P}^\bOmega(\opnorm{\bE} \geq \lambda_K/4) \leq 4d^{-(\xi-1)}$. In fact, since $\opnorm{\widetilde \bW} \le \fnorm{\widetilde \bW}$, there exists $c_{M, \tau, \mu} > 0$ such that \eqref{eq:lambda_k} implies \eqref{eq:lambda_K_lowerbound}, which, together with~\eqref{eq:E_op} ensures that $\mathbb{P}^\bOmega(\opnorm{\bE} \geq \lambda_K/2) \leq 4d^{-(\xi-1)}$.  
		
To bound $T_1$, we have
\begin{align}
 \label{Eq:Kmusquared}
\twotoinf{\bV_{-K}\bV_{-K}^\top \bE \bV_K\bV_K^\top} & \le \inftoinf{\bV_{-K} \bV^\top_{-K}} \twotoinf{\bE \bV_{K}\bV^\top_{K}} \nonumber \\
& \le (1 + K \mu^2) \max_{j \in [d]}  \sup_{\bu \in \cS^{K-1}} \be_j^\top \bE\bV_K\bu, 
\end{align}
where the second inequality is due to the fact that 
\[
	\inftoinf{\bV_{-K}\bV_{-K}^\top} \le \inftoinf{\bI_d} + \inftoinf{\bV_K\bV_K^\top} \le 1 + K\mu^2. 
\]
We use a covering argument to bound the supremum term. Let $\cN_K(1 / 2)$ be a $1 / 2$-net of the Euclidean sphere $\cS^{K - 1}$, i.e., for any $\bu \in \cS^{K - 1}$, there exists a point $\pi(\bu) \in \cN_K(1 / 2)$ such that $\ltwonorm{\bu - \pi(\bu)} \le 1 / 2$. Note that for any $\bu \in \cS^{K - 1}$,
\[
	\be_j^\top \bE\bV_K\bu =  \be_j^\top \bE\bV_K\pi(\bu) + \be_j^\top \bE\bV_K(\bu - \pi(\bu))\le \max_{\bv \in \cN_K(1 / 2)} \be_j^\top \bE\bV_K\bv + \frac{1}{2} \sup_{\bv \in \mathcal{S}^{K-1}} \be_j^\top \bE\bV_K\bv, 
\]
which further implies that 
\begin{equation}
	\label{eq:covering}
	\sup_{\bu \in \mathcal{S}^{K-1}} \be_j^\top \bE\bV_K\bu \le 2 \max_{\bu \in \cN_K(1 / 2)} \be_j^\top \bE\bV_K\bu. 
\end{equation}
We then argue similarly as in~\eqref{eq:28}, with $\bV_K\bu$ taking the role of $\bv_k$ there (since $\|\bV_K\bu\|_{\infty} \le \mu( K / d)^{1 / 2}$, we correspondingly have $\sqrt{K}\mu$ taking the role of the incoherence parameter $\mu$ there), to obtain that for any $\xi > 0$, 
\[
		\PP^{\bOmega}\biggl\{ \left |\be_j^\top \bE \bV_K\bu \right | \ge 2e\tau\mu\biggl(\frac{KMR}{d}\biggr)^{1 / 2}\biggl( \frac{\xi^{1/2}\lonenorm{\widetilde{\bW}_j}^{1/2} }{n^{1/2}} + \frac{\xi \ltwonorm{\widetilde{\bW}_j} }{n}\biggr )  \biggr\} \le 2e^{-\xi}.
\]
By \citet[Lemma~5.2]{Ver10}, $|\cN_K(1/2)| \le 5^K$.  Hence, by~\eqref{Eq:Kmusquared},~\eqref{eq:covering} and a union bound, we have for any $\xi>\log 5$ that
\begin{align*}
		\PP^{\bOmega}\biggl\{ T_1 \ge \frac{8\tau\mu(1 + K\mu^2)}{\lambda_K}\biggl(\frac{KMR}{d}\biggr)^{1 / 2} \biggl(\frac{\xi^{1/2}\inftoinf{\widetilde \bW}^{1/2}  }{n^{1/2}} + \frac{\xi \twotoinf{\widetilde \bW} }{n}\biggr) \biggr\} \le 2d e^{K\log 5- \xi}. 
\end{align*}
Next we bound $T_2$. Note that 
\begin{align*}
		\twotoinf{\bV_{-K}\bV_{-K}^\top \bE\bV_{-K}\bV_{-K}^\top} \le \inftoinf{\bV_{-K} \bV^{\top}_{-K}} \twotoinf{\bE} \le (1 + K\mu^2)\twotoinf{\bE}. 
\end{align*}	
For $j, k \in [d]$, let $\cI_{jk} := \{i: \omega_{ij}\omega_{ik} = 1\}$ and $n_{jk} := |\cI_{jk}| = n/\widetilde{W}_{jk}$.  Then 
\[
	E_{jk} = \frac{1}{n}\sum_{i = 1}^n \widetilde y_{ij}\widetilde y_{ik} \widetilde W_{jk} - [\E^{\bOmega} \widetilde \bG]_{jk} = \frac{1}{n_{jk}} \sum_{i \in \cI_{jk}} y_{ij} y_{ik} - [\E^{\bOmega} \widetilde \bG]_{jk}. 
\]
By applying both parts of Lemma \ref{lem:psi1_to_psi2}, for any $i \in [n]$ and $j,k \in [d]$, we have $\|y_{ij}y_{ik}\|_{\psi_1} \le 2 \|y_{ij}\|_{\psi_2} \|y_{ik}\|_{\psi_2} \le 2M$. Applying Bernstein's inequality \citep[Theorem~2.10]{BLM13} yields that for any $\xi > 0$, 
\[
	\PP^{\bOmega}\biggl\{ |E_{jk}| \ge 2eM\biggl(\biggl(\frac{2\xi\widetilde W_{jk}}{n}\biggr)^{1 / 2} + \frac{\xi \widetilde W_{jk}}{n}\biggr) \biggr\} \le 2e^{-\xi}. 
\]
Therefore, a union bound with $(j, k) \in [d] \times [d]$ yields that 
\begin{align*}
	\PP^{\bOmega}\biggl\{ T_2 \ge \frac{4\sqrt 2 eM(1 + K\mu^2)}{\lambda_K} \biggl( \biggl(\frac{2\xi\inftoinf{\widetilde \bW} }{n}\biggr)^{1 / 2} + \frac{\xi \twotoinf{\widetilde \bW}}{n}\biggr) \opnorm{\sin\Theta(\widetilde \bV_K, & \bV_K)}\biggr\} \\
	& \le 2d^2e^{-\xi}. 
\end{align*}
Now we bound $T_3$. We have that 
\[
	T_3 = \frac{2\|\bV_{-K} \bV_{-K}^\top\|_{2 \to \infty}}{\lambda_K} \opnorm{\sin\Theta(\widetilde \bV_K, \bV_K)} \le \frac{2\bigl\{1 + \mu(K / d)^{1  / 2}\bigr\}}{\lambda_K}\opnorm{\sin\Theta(\widetilde \bV_K, \bV_K)}. 
\]
Finally, $T_4$ satisfies  
\[
	T_4 \le \frac{\mu K^{1/2}}{d^{1/2}} \opnorm{\sin\Theta(\widetilde \bV_K, \bV_K)}^2. 
\]
Since $\opnorm{\sin\Theta(\widetilde \bV_K, \bV_K)} \leq \min\bigl\{L(\widetilde \bV_K, \bV_K),1\bigr\}$, combining our bounds for $\{T_j\}_{j = 1}^4$ yields that there exists $C_{M, \tau, \mu} > 0$ such that  for any $\xi > 2$,
\begin{align*}
\PP^{\bOmega}  \biggl\{ \cT(\widetilde \bV_K, \!\bV_K)  &\geq \frac{KC_{M, \tau, \mu}}{\lambda_K} \biggl \{L(\widetilde \bV_K, \bV_K) \!+ \!\biggl(\frac{KR}{d}\biggr)^{\hspace{-.15cm}1 / 2} \biggr\}  \biggl( \frac{\xi^{1/2} \inftoinf{\widetilde \bW}^{1/2}}{n^{1/2}} + \frac{\xi \twotoinf{\widetilde \bW}}{n}\biggr)\nonumber\\
&+ \mu \biggl( \frac{K^{1 / 2}}{d^{1 / 2}} + \frac{4}{\lambda_K} \biggr)L(\widetilde \bV_K, \bV_K) \biggr\} \leq 2de^{K\log 5 - \xi} + 2d^2e^{- \xi} + 4d^{- (\xi - 1)}.
\end{align*}
It therefore follows from Proposition~\ref{prop:init_f}, which applies because condition~\eqref{eq:lambda_k} for a suitable $c_{M,\tau,\mu}$ implies~\eqref{eq:lambda_k_f}, together with the facts that $\|\widetilde{\bW}\|_1 \leq d\|\widetilde{\bW}\|_{\infty\rightarrow \infty}$ and $\|\widetilde{\bW}\|_{\mathrm{F}} \leq d^{1/2}\|\widetilde{\bW}\|_{2\rightarrow \infty}$ that the first conclusion of the theorem holds. The second conclusion then follows immediately. 
\end{proof}
%

\begin{proof}[Proof of Proposition \ref{prop:init_f}]
In this proof, we use the shorthand $\bD_{\bu} := \mathrm{diag}(\bu)$ for $u \in \mathbb{R}^d$.  We represent $\widetilde \bG$ under the orthonormal basis $(\bV_K, \bV_{-K})$ as follows: 
\[
	\widetilde \bG = (\bV_K, \bV_{-K}) \biggl(
		\begin{array}{cc}
			\bV_K^{\top} \widetilde \bG\bV_K & \bV_{K}^\top \widetilde \bG \bV_{-K} \\
			\bV_{-K}^{\top} \widetilde \bG \bV_K & \bV_{-K}^{\top} \widetilde \bG \bV_{-K}
		\end{array} \biggr)
		\biggl(
		\begin{array}{c}
	        		\bV_K^{\top} \\
			\bV_{-K}^{\top}
		\end{array}
		\biggr). 
\]
Define 
\[
	\bG^{*} := (\bV_K, \bV_{-K}) \biggl(
		\begin{array}{cc}
			\bV_K^{\top}\widetilde \bG\bV_K & \bzero \\
			\bzero & \bV_{-K}^{\top} \widetilde \bG \bV_{-K}
		\end{array} \biggr)
		\biggl(
		\begin{array}{c}
	        		\bV_K^{\top} \\
			\bV_{-K}^{\top}
		\end{array}
		\biggr). 
\]
In the sequel, we regard $\widetilde \bG$ as a corrupted version of $\bG^*$ with the off-diagonal blocks $\bV_K^\top\widetilde \bG\bV_{-K}$ and $\bV^\top_{-K}\widetilde \bG\bV_K$ as perturbations. We have
\[
	\fnorm{\bV_K^\top \widetilde \bG\bV_{-K}} = \fnorm{\bV_K^\top (\widetilde \bG - \bSigma_{\by}) \bV_{-K}} \leq \fnorm{\bV_K^\top(\widetilde \bG - \E^{\bOmega} \widetilde\bG)} 
\]
We control the right-hand side through a concentration inequality, and for $k \in [K]$ let $\bv_k$ denote the $k$th column of $\bV_K$. For any $j \in [d]$ and $k \in [K]$, 
\begin{align}
\label{Eq:T1Conc}
\bv_k^\top (\widetilde \bG - \E^{\bOmega} \widetilde\bG)\be_j & = \frac{1}{n}\sum_{i = 1}^n \bv_k^\top \bigl\{\widetilde\by_i \widetilde \by_i^\top \circ \widetilde \bW - \E^{\bOmega}\bigl(\widetilde\by_i \widetilde \by_i^\top \circ \widetilde \bW\bigr)\bigr \}\be_j   \nonumber\\
& = \frac{1}{n}\sum_{i = 1}^n \bigl\{\widetilde \by_i^\top \bD_{\bv_k} \widetilde \bW \bD_{\be_j} \widetilde \by_i -  \E^{\bOmega} \bigl(\widetilde \by_i^\top \bD_{\bv_k} \widetilde \bW \bD_{\be_j} \widetilde \by_i \bigr)\bigr\} \nonumber\\
& = \frac{1}{n} \sum_{i = 1}^n \bigl\{\widetilde y_{ij}\widetilde \by_i^\top \bD_{\bv_k} \widetilde \bW_j - \E^{\bOmega} \bigl(\widetilde y_{ij}\widetilde \by_i^\top \bD_{\bv_k} \widetilde \bW_j \bigr) \bigr\}, 
\end{align}
where $\widetilde \bW_j$ denotes the $j$th column of $\widetilde \bW$. 
Note that 
\[
	\|\by_i\|_{\psi^*_2} \le \sup_{\bv \in \cS^{d - 1}}\frac{\|\bv^\top \bV_K\bu_i\|_{\psi_2} + \|\bv^\top \bz_i\|_{\psi_2}}{\sqrt{\bv^\top \bV_K\bSigma_{\bu}\bV_K^\top\bv + 1}} \le 2\tau. 
\]
Thus for any vector $\ba \in \RR^d$, we have by Lemma~\ref{lem:psi1_to_psi2} that 
\[
	\|y_{ij} (\ba^\top \by_i)\|_{\psi_1} \leq 2\|y_{ij}\|_{\psi_2} \|\ba^\top \by_i\|_{\psi_2} \leq 4 \tau (M\ba^\top \bSigma_{\by}\ba)^{1 / 2}. 
\]
For $i \in [n]$, let $\ba_i := \omega_{ij} \widetilde \bW_j \circ \bv_k \circ \bomega_i$. Now for any $q \geq 2$, 
\begin{align*}
	\E^{\bOmega}  \bigl|\widetilde y_{ij}(\widetilde \bW_j^\top \bD_{\bv_k}\widetilde \by_i )\bigr|^q &= \E^{\bOmega} | y_{ij}\ba_i^\top \by_i |^q \le  \Bigl( 4q \tau \sqrt{M \ba_i^\top \bSigma_{\by} \ba_i} \Bigr )^q \\
	& \leq \frac{16q^q\tau^2\mu^2{MR}}{d} \Bigl(4\tau \mu \sqrt{MR\ltwonorm{\widetilde \bW_j}^2 / d}\Bigr)^{q - 2} \sum_{t = 1}^d \widetilde W^2_{tj} \omega_{it}\omega_{ij} \\
	& \leq  \frac{8e^2q!\tau^2\mu^2 MR}{d} \Bigl(4e\tau \mu \sqrt{MR\ltwonorm{\widetilde \bW_j}^2 / d} \Bigr)^{q - 2} \sum_{t = 1}^d \widetilde W^2_{tj} \omega_{it}\omega_{ij}, 
\end{align*}
where the penultimate inequality uses the fact that $\ltwonorm{\ba_i}^2 \le \mu^2 \ltwonorm{\widetilde \bW_j}^2 / d$, and the last inequality is due to Stirling's approximation.  
Hence,
\begin{align*}
\frac{1}{n}\sum_{i = 1}^n \E^{\bOmega} |\widetilde y_{ij}\widetilde \bW_j^\top \bD_{\bv_k}\widetilde \by_i|^q & \le  \frac{8e^2q!\tau^2\mu^2 MR}{d} \Bigl(4e\tau \mu \sqrt{MR\ltwonorm{\widetilde \bW_j}^2 / d} \Bigr)^{q - 2} \sum_{t = 1}^d \sum_{i = 1}^n \frac{\widetilde W^2_{jt} \omega_{it}\omega_{ij}}{n} \\
& = \frac{8e^2q!\tau^2\mu^2 MR}{d} \Bigl(4e\tau \mu \sqrt{MR\ltwonorm{\widetilde \bW_j}^2 / d} \Bigr)^{q - 2} \lonenorm{\widetilde \bW_j}. 
\end{align*}
Thus by~\eqref{Eq:T1Conc} and Bernstein's inequality \cite[Theorem~2.10]{BLM13}, we have that for any $\xi > 0$, 
\begin{equation}
	\label{eq:28}
	\PP^{\bOmega}\biggl \{ \bigl|\bv_k^\top (\widetilde \bG - \E^{\bOmega} \widetilde\bG)\be_j\bigr|  \geq 2^{5/2}e\tau\mu\biggl(\frac{MR}{d}\biggr)^{1 / 2}\biggl (\biggl( \frac{\xi\lonenorm{\widetilde \bW_j} }{n}\biggr)^{1 / 2} + \frac{\xi \ltwonorm{\widetilde \bW_j} }{n}\biggr )  \biggr\} \leq 2e^{-\xi}. 
\end{equation}
By a union bound over $(j, k) \in [d] \times [K]$, for any $\xi > 1$, 
\begin{align}
\label{eq:perturb}
   	\PP^{\bOmega}\biggl\{ \fnorm{\bV_K^\top \widetilde \bG\bV_{-K}} \geq 8e\tau\mu\biggl(\frac{KMR}{d}\biggr)^{1 / 2}\biggl(\frac{\xi^{1/2} \|\widetilde \bW\|_1^{1/2}\log^{1/2} d  }{n^{1/2}}& +  \frac{\xi\fnorm{\widetilde \bW}\log d }{n}\biggr) \biggr\}\nonumber\\
       & \leq 2Kd^{- (\xi - 1)}. 
\end{align}
%
		
Now we provide a condition under which $\lambda_{\min} (\bV_K^\top \widetilde\bG \bV_K) > \opnorm{\bV_{-K}^\top \widetilde \bG \bV_{-K}}$, which ensures that $\bV_K$ is the top $K$ eigenspace of $\bG^*$. Note that 
\[
    \lambda_{\min}(\bV_K^\top \widetilde \bG \bV_K) \geq \lambda_K + 1 - \opnorm{\bV_K^\top (\widetilde\bG - \bSigma_{\by})\bV_K} \geq \lambda_K + 1 - \opnorm{\widetilde \bG - \bSigma_{\by}}
\]
and
\[
    \opnorm{\bV_{-K}^\top\widetilde \bG  \bV_{-K}} \leq 1 + \opnorm{\widetilde \bG - \bSigma_{\by}}. 
\]
This implies that if $\lambda_K > 4\opnorm{\widetilde \bG - \bSigma_{\by}}$, then 
\begin{equation}
    \label{Eq:lambdamin}
    \lambda_{\min} (\bV_K^\top \widetilde\bG \bV_K) - \opnorm{\bV_{-K}^\top \widetilde \bG \bV_{-K}} > \lambda_K / 2.
\end{equation}
In the following, we derive an exponential tail bound for $\opnorm{\widetilde \bG - \bSigma_{\by}} =\opnorm{\widetilde \bG - \E^{\bOmega} \widetilde \bG}$.
Let $\bA_i := \widetilde \by_i \widetilde \by_i^\top \circ \widetilde \bW$ and note that $\opnorm{\bA_i} \le \linfinity{\by_i}^2 \opnorm{\widetilde \bW}$.  Thus, for any $\bv = (v_1,\ldots,v_d)^\top \in \cS^{d - 1}$ and any integer $q \geq 2$, 
\begin{align*}
	\E^{\bOmega} \bigl(\bv^\top  |\bA_i|^ q \bv\bigr) & \le \E^{\bOmega}\bigl( \bigl\| \bA_i \bigr\|_{\text{op}}^{q - 2} \bv^\top \bA^2_i\bv\bigr) \leq \E^{\bOmega} \bigl\{\bigl(\opnorm{\widetilde \bW}\linfinity{\by_i} ^2 \bigr)^ {q - 2} \bv^\top \bigl(\widetilde \by_i\widetilde \by_i^\top \circ \widetilde \bW \bigr) ^ 2\bv\bigr\} \\
             			& = \opnorm{\widetilde \bW}^ {q - 2} \E^{\bOmega}\bigl\{\linfinity{\by_i} ^ {2(q - 2)} \bv^\top \bD_{\widetilde \by_i} \widetilde \bW \bD_{\widetilde \by_i}\bD_{\widetilde \by_i}\widetilde \bW \bD_{\widetilde \by_i}\bv\bigr\} \\
	& = \opnorm{\widetilde \bW}^ {q - 2} \E^{\bOmega} \bigl\{\linfinity{\by_i} ^ {2(q - 2)} \Tr( \bD^2_{\widetilde \by_i}\widetilde \bW \bD_{\widetilde \by_i}\bv\bv^\top \bD_{\widetilde \by_i}\widetilde \bW )\bigr\} \\
	&= \opnorm{\widetilde \bW}^ {q - 2} \sum_{j = 1}^d \omega_{ij}\E^{\bOmega} \bigl\{ y^2_{ij}\linfinity{\by_i} ^ {2(q - 2)} \bigl(\widetilde\bW_j^\top \bD_{\bv} \widetilde \by_i \bigr)^2\bigr \}. 
\end{align*}
	Now, for each $j \in [d]$, and $q \geq 2$, 
\begin{align*}
\E^{\bOmega} \bigl\{  y^2_{ij}\linfinity{\by_i} ^ {2(q - 2)} (\widetilde\bW_j^\top \bD_{\bv} \widetilde \by_i)^2\bigr \}  & =  \E^{\bOmega} \bigl[ y^2_{ij}\linfinity{\by_i} ^ {2(q - 2)} \{(\widetilde \bW_j \circ \bv \circ \bomega_i)^\top \by_i\}^2 \bigr]  \\
		& \leq \bigl(\E y^8_{ij} \bigr)^{1 / 4}\bigl\{\E \bigl(\linfinity{\by_i} ^ {8(q - 2)}\bigr) \bigr\}^{1 / 4} 8R\tau^2 \ltwonorm{\widetilde \bW_j \circ \bv \circ \bomega_i} ^ 2 \\
		& \lesssim MR\tau^2 \{8(q - 2)CM\log d\}^{q - 2}\sum_{t = 1} ^ d (v_t\widetilde W_{tj} \omega_{it})^ 2, 
\end{align*}
where the last inequality is due to the fact that $\bigl\|\linfinity{\by_i}\bigr\|_{\psi_2} \le (CM\log d)^{1 / 2}$ by \eqref{Eq:Psi2max}. Therefore, 
\begin{align*}
	\sum_{i = 1}^n \E^{\bOmega} \bigl(\bv^\top |\bA_i|^q \bv \bigr) & \lesssim MR\tau^2 \bigl\{8(q - 2)CM\opnorm{\widetilde \bW} \log d\bigr \}^{q - 2} \sum_{j, t = 1}^d \sum_{i = 1}^n \omega_{ij}\omega_{it}v_t^2\widetilde W^2_{tj}\\
	& = nMR\tau^2 \bigl\{8(q - 2)CM\opnorm{\widetilde \bW} \log d\bigr \}^{q - 2} \sum_{j, t = 1}^d v_t^2\widetilde W_{tj} \\
	& \lesssim q! nMR\tau^2 \|{\widetilde \bW^\top}\|_{1 \rightarrow 1} \bigl(8eCM\opnorm{\widetilde \bW}\log d\bigr)^{q - 2}, 
\end{align*}
where $\|{\widetilde \bW^\top}\|_{1 \rightarrow 1} = \sup_{\lonenorm{\bu} =1} \lonenorm{\widetilde \bW^\top \bu} = \onetoone{\widetilde \bW}$. Since the above inequality holds for all $\bv \in \cS^{d - 1}$,  we have
\[
	\biggl\|\sum_{i = 1}^n \E^{\bOmega}\bigl( |\bA_i|^q\bigr) \biggl\|_{\mathrm{op}} \lesssim q! nMR\tau ^ 2 \|{\widetilde \bW}\|_{1 \rightarrow 1} \bigl(8eCM\opnorm{\widetilde \bW}\log d\bigr)^{q - 2}.
\]
By a version of the Matrix Bernstein inequality for non-central absolute moments, which we give as Lemma~\ref{lem:nc_matrix_bernstein}, there exists a universal constant $C_1> 0$ such that for any $\xi  > 1$, 
\begin{equation}
	\label{eq:E_op}
	\PP^{\bOmega}\biggl\{ \bigl\|\widetilde \bG - \E^{\bOmega} \widetilde \bG\bigr \|_{\text{op}} \geq C_1\biggl(\biggl(\frac{MR\tau ^ 2 \onetoone{\widetilde \bW} \xi \log d}{n}\biggr)^{1 / 2} + \frac{M\opnorm{\widetilde \bW} \xi \log^2 d}{n} \biggr ) \biggl \}\leq 4d^{-(\xi - 1)}. 
\end{equation}
Now let 
\[
	\cA := \left \{\lambda_{\min} (\bV_K^\top \widetilde\bG \bV_K) - \opnorm{\bV_{-K}^\top \widetilde \bG \bV_{-K}} > \frac{\lambda_K}{2} \right \}. 
\]
From~\eqref{Eq:lambdamin} and~\eqref{eq:E_op}, we deduce that for any $\xi > 1$, if 
\begin{equation}
	\label{eq:lambda_K_lowerbound}
	\lambda_K \geq 4C_1\biggl\{\biggl(\frac{MR\tau ^ 2 \onetoone{\widetilde \bW} \xi \log d}{n}\biggr)^{1 / 2} + \frac{M\opnorm{\widetilde \bW} \xi \log^2 d}{n} \biggr \}, 
\end{equation}
then $\PP^{\bOmega}(\cA^{\mathrm{c}}) \leq \PP^{\bOmega}\bigl\{\opnorm{\widetilde \bG - \bSigma_{\by}} \geq \lambda_K/4\bigr\} \le 4d^{-(\xi - 1)}$. The desired result follows immediately by combining this with \eqref{eq:perturb} and applying \citet[Theorem~2]{YWS14}.
	\end{proof}


\section{Auxiliary results}
\label{Sec:Auxiliary}

\begin{lem}
	\label{lem:psi1_to_psi2}
	Let $X$ and $Y$ be two sub-Gaussian random variables. Then we have 		$\|X\|^2_{\psi_2} \le \|X^2\|_{\psi_1}$ and $\|XY\|_{\psi_1} \le 2\|X\|_{\psi_2} \|Y\|_{\psi_2}$.
\end{lem}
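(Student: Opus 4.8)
Both bounds follow directly from the definition of the Orlicz norms together with elementary moment inequalities, so the plan is simply to unwind the definitions carefully and apply Jensen's inequality and the Cauchy--Schwarz inequality term by term in $q$. There is no serious obstacle here; the only thing to be careful about is keeping track of which exponent plays the role of $q$ in each $\psi_r$-norm.

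For the first inequality, the plan is to fix an arbitrary $q \in \mathbb{N}$ and observe that, by Jensen's inequality applied to the convex function $t \mapsto t^2$, we have $\bigl(\E|X|^q\bigr)^2 \le \E|X|^{2q}$, and hence $\bigl(\E|X|^q\bigr)^{2/q} \le \bigl(\E|X|^{2q}\bigr)^{1/q}$. Multiplying by $q^{-1}$ and noting that the right-hand side is at most $\sup_{q' \in \mathbb{N}} (q')^{-1}\bigl(\E|X|^{2q'}\bigr)^{1/q'} = \bigl\|X^2\bigr\|_{\psi_1}$, we get $q^{-1}\bigl(\E|X|^q\bigr)^{2/q} \le \bigl\|X^2\bigr\|_{\psi_1}$ for every $q$. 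Taking the supremum over $q \in \mathbb{N}$ of the left-hand side yields $\|X\|_{\psi_2}^2 = \sup_{q} q^{-1}\bigl(\E|X|^q\bigr)^{2/q} \le \bigl\|X^2\bigr\|_{\psi_1}$, as required.

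For the second inequality, the plan is again to fix $q \in \mathbb{N}$ and apply the Cauchy--Schwarz inequality to obtain $\E|XY|^q \le \bigl(\E|X|^{2q}\bigr)^{1/2}\bigl(\E|Y|^{2q}\bigr)^{1/2}$, so that $\bigl(\E|XY|^q\bigr)^{1/q} \le \bigl(\E|X|^{2q}\bigr)^{1/(2q)}\bigl(\E|Y|^{2q}\bigr)^{1/(2q)}$. By the definition of the $\psi_2$-norm (with exponent $2q$), each factor on the right is bounded: $\bigl(\E|X|^{2q}\bigr)^{1/(2q)} \le (2q)^{1/2}\|X\|_{\psi_2}$ and likewise for $Y$. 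Combining these gives $\bigl(\E|XY|^q\bigr)^{1/q} \le 2q\,\|X\|_{\psi_2}\|Y\|_{\psi_2}$, hence $q^{-1}\bigl(\E|XY|^q\bigr)^{1/q} \le 2\|X\|_{\psi_2}\|Y\|_{\psi_2}$ for every $q$; taking the supremum over $q \in \mathbb{N}$ completes the argument. The sub-Gaussianity hypothesis is used only to ensure that the $\psi_2$-norms appearing on the right-hand sides are finite, so that the bounds are non-vacuous.
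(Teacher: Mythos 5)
Your proof is correct and follows essentially the same strategy as the paper: unwind the $\psi_r$ definitions, then apply Jensen's inequality for the first bound and the Cauchy--Schwarz inequality for the second. Your first step is in fact marginally cleaner than the paper's, which detours through the exponent $2\lceil p/2\rceil$ and then invokes $p \ge \lceil p/2\rceil$, whereas your choice $(\E|X|^q)^2 \le \E|X|^{2q}$ compares index $q$ directly with index $q$ in the $\psi_1$-supremum and achieves the same bound in one step.
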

\begin{proof}
	For any $x \ge 0$, let $\lceil x\rceil := \inf\{z \in \NN: z \ge x\}$. According to the definitions of the $\psi_1$-norm and $\psi_2$-norm, we have that 
	\[
		\|X\|^2_{\psi_2} = \sup_{p \in \mathbb{N}} \frac{\E (|X|^p)^{2/p}}{p} \le \sup_{p \in \mathbb{N}} \frac{\bigl\{\E \bigl(X^{2\lceil{p / 2}\rceil}\bigr)\bigr\}^{\frac{1}{\lceil{p / 2}\rceil}}}{p} \le \|X^2\|_{\psi_1}, 
	\]
	where the penultimate inequality is due to Jensen's inequality and the last inequality is due to the fact that $p \ge \lceil p / 2\rceil$. For the second inequality, 
	\[
		\begin{aligned}
		\|XY\|_{\psi_1} & = \sup_{p \in \NN} \frac{(\E |XY|^p)^{1/p}}{p} \leq 2 \sup_{p \in \NN} \frac{(\E|X|^{2p})^{1/(2p)}}{\sqrt{2p}}\frac{(\E|Y|^{2p})^{1/(2p)}}{\sqrt{2p}}\\
		&  \le 2 \sup_{p \in \NN} \frac{(\E|X|^{2p})^{1/(2p)}}{\sqrt{2p}} \sup_{q \in \NN}\frac{(\E|Y|^{2q})^{1/(2q)}}{\sqrt{2q}} \leq 2\|X\|_{\psi_2}\|Y\|_{\psi_2},
		\end{aligned}
              \]
              as required.
\end{proof}
\begin{lem}
\label{lem:psi1bernstein}
If $X_1,\ldots,X_n$ are independent centred random variables with $\max_{i \in [n]} \|X_i\|_{\psi_1} < \infty$, then there exists a universal constant $C>0$ such that
\[
\biggl\|\sum_{i=1}^n X_i \biggr\|_{\psi_1} \leq C \biggl(\sum_{i=1}^n\|X_i\|_{\psi_1}^2\biggr)^{1/2}.
\]
\end{lem}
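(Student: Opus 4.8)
The plan is to deduce this from Bernstein's inequality (\citet[Theorem~2.10]{BLM13}, already invoked elsewhere in the paper) by integrating the resulting tail bound against the moment definition of the $\psi_1$-norm.

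First I would translate the hypothesis $\|X_i\|_{\psi_1} \le K_i$ into the moment conditions required by Bernstein's inequality. From the definition, $\mathbb{E}|X_i|^q \le (qK_i)^q \le q!\,(eK_i)^q$ for every integer $q \ge 1$ (by Stirling), and in particular $\mathbb{E}X_i^2 \le 4K_i^2$. Writing $\sigma^2 := \sum_{i=1}^n K_i^2$ and $K := \max_i K_i \le \sigma$, these estimates show that $v := 2e^2\sigma^2$ and $c := eK$ satisfy $\sum_i \mathbb{E}X_i^2 \le v$ and $\sum_i \mathbb{E}[(X_i)_+^q] \le \frac{q!}{2}vc^{q-2}$ for all integers $q \ge 3$. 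Applying Bernstein's inequality to $(X_i)_i$ and to $(-X_i)_i$ and combining, I would obtain $\mathbb{P}(|S| \ge \sqrt{2vt} + ct) \le 2e^{-t}$ for all $t > 0$, where $S := \sum_{i=1}^n X_i$; equivalently, since $\sqrt{2vt} + ct \le s$ whenever $t \le \min\{s^2/(8v),\, s/(2c)\}$, this gives $\mathbb{P}(|S| \ge s) \le 2\exp\bigl(-\min\{s^2/(8v),\, s/(2c)\}\bigr)$ for all $s \ge 0$.

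Next, for each integer $q \ge 1$ I would integrate: $\mathbb{E}|S|^q = \int_0^\infty q s^{q-1}\mathbb{P}(|S| > s)\,ds \le 2\int_0^\infty qs^{q-1}e^{-s^2/(8v)}\,ds + 2\int_0^\infty qs^{q-1}e^{-s/(2c)}\,ds$. The first integral equals $q(8v)^{q/2}\Gamma(q/2)$ and the second equals $2(2c)^q q!$. Taking $q$-th roots and using the elementary bounds $q^{1/q} \lesssim 1$, $\Gamma(q/2)^{1/q} \lesssim \sqrt q$, and $(q!)^{1/q} \lesssim q$, together with $\sqrt v \lesssim \sigma$ and $c = eK \le e\sigma$, I would conclude $(\mathbb{E}|S|^q)^{1/q} \lesssim \sqrt q\,\sigma + q\,\sigma \lesssim q\,\sigma$ for every $q \in \mathbb{N}$. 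Dividing by $q$ and taking the supremum over $q$ yields $\|S\|_{\psi_1} \lesssim \sigma = \bigl(\sum_{i=1}^n \|X_i\|_{\psi_1}^2\bigr)^{1/2}$, as required.

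The only real care needed — and the closest thing to an obstacle — is bookkeeping: matching the moment-based definition of $\|\cdot\|_{\psi_1}$ used in this paper to the hypotheses of the Bernstein inequality (absorbing the $q^q$ versus $q!$ discrepancy into universal constants), and handling the Gamma-function and factorial estimates uniformly over all integers $q \ge 1$, including the small-$q$ cases, so that the final bound $(\mathbb{E}|S|^q)^{1/q} \lesssim q\,\sigma$ holds with a single universal constant.
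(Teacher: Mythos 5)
Your argument is correct, but it takes a genuinely different route from the paper's. The paper instead appeals to \citet[Lemma~5.15]{Ver10} to get a sub-exponential MGF bound $\mathbb{E}\exp\{t\sum_i X_i\}\leq\exp(C_1 t^2\|\bK\|_2^2)$ for $|t|\leq c_1/\|\bK\|_\infty$ (where $\bK=(K_1,\ldots,K_n)^\top$ and $K_i=\|X_i\|_{\psi_1}$), sets $t$ so that the right-hand side is at most $e$, and then invokes the equivalence between the MGF-based and moment-based characterisations of the $\psi_1$-norm. That route is shorter because it outsources the moment bookkeeping to Vershynin's equivalence-of-definitions result, but it implicitly requires translating between the paper's $\psi_1$ convention and Vershynin's. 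Your route instead starts from the Bernstein inequality \citep[Theorem~2.10]{BLM13} already used elsewhere in the paper, converts the resulting tail bound into moment bounds by integration, and reads off the $\psi_1$-norm directly from the paper's own supremum-over-moments definition; this is somewhat longer but fully self-contained and avoids any appeal to the equivalence of different $\psi_1$ definitions. All of your intermediate estimates check out: $\mathbb{E}|X_i|^q\leq(qK_i)^q\leq q!(eK_i)^q$, the choice $v=2e^2\sigma^2$, $c=eK$ satisfies the required moment conditions since $(eK_i)^q\leq e^2 K_i^2(eK)^{q-2}$, and the uniform bounds $\Gamma(q/2)^{1/q}\lesssim\sqrt q$ and $(q!)^{1/q}\leq q$ hold over all integers $q\geq1$, so the final conclusion $q^{-1}(\mathbb{E}|S|^q)^{1/q}\lesssim\sigma$ holds with a universal constant as claimed.
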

\begin{proof}
Write $K_i := \|X_i\|_{\psi_1}$ and $\bK:=(K_1,\ldots,K_n)^\top$. From \citet[Lemma~5.15]{Ver10}, there exist universal constants $c_1, C_1 >0$ such that for $|t| \leq c_1/\|\bK\|_\infty$,
\[
\mathbb{E}\exp\biggl\{t\sum_{i=1}^n X_i\biggr\} = \prod_{i=1}^n \mathbb{E}\exp\{t X_i\} \leq \exp\bigl\{C_1t^2\|\bK\|_2^2\bigr\}.
\]
Setting $t = \min\{C_1^{-1/2}\|\bK\|_2^{-1}, c_1\|\bK\|_\infty^{-1}\}$ in the above expression, the right-hand side is bounded above by $e$. The desired result follows from the fact that (5.15) and (5.16) in \citet{Ver10} are two definitions that yield equivalent $\psi_1$-norms.
\end{proof}
The following lemma provides a variant of the existing matrix Bernstein inequality \citep[][Theorem~6.2]{Tro12}.  The primary difference is that we impose non-central absolute moment inequalities, as opposed to central moment inequalities.  We believe that this inequality may be of independent interest, with applications beyond the scope of this paper.  Recall that if $\bA \in \mathbb{S}^{d \times d}$, with eigendecomposition $\bA = \bQ \, \mathrm{diag}(\mu_1,\ldots,\mu_d) \bQ^\top$ for some orthogonal $\bQ \in \mathbb{R}^{d \times d}$, then we write $|\bA| := \bQ \, \mathrm{diag}(|\mu_1|,\ldots,|\mu_d|) \bQ^\top$.
\begin{lem}[Matrix Bernstein inequality with non-central absolute moment conditions]
	\label{lem:nc_matrix_bernstein}
	Let $\{\bX_i\}_{i \in [n]}$ be independent symmetric $d\times d$ random matrices.  Assume that 
	\[
		\E \bigl(|\bX_i|^q\bigr) \preceq \frac{q!}{2} R^{q - 2} \bA^2_i \quad \text{for}~q = 2, 3, 4, \ldots
	\]
	for some $R > 0$ and deterministic $d$-dimensional symmetric matrices $\{\bA_i\}_{i \in [n]}$. Define the variance parameter 
	\[
		\sigma^2 := \biggl\| \sum_{i=1}^n \bA^2_i\biggr\|_{\mathrm{op}}. 
	\]
	Then for each $t > 0$,
	\[
		\PP \biggl[ \lambda_{\max} \biggl\{ \sum_{i=1}^n (\bX_i - \E \bX_i)\biggr\} \ge t\biggr] \le 4d \exp\biggl( \frac{-t^2 / 32}{\sigma^2 + Rt}\biggr). 
              \]
\end{lem}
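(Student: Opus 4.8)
The plan is to bypass any attempt to centre the summands inside an exponential and instead control the matrix moment generating function of $\sum_i(\bX_i-\E\bX_i)$ via Lieb's concavity theorem, keeping the (non-commuting) deterministic means $\bM_i:=\E\bX_i$ as a \emph{fixed} additive shift that only cancels at the very end. First I would record the elementary consequence of the hypothesis: since $\bX_i$ is symmetric, $\bX_i^q\preceq|\bX_i|^q$ for every $q$ (with equality for even $q$), so $\E\bX_i^q\preceq\tfrac{q!}{2}R^{q-2}\bA_i^2$ for all $q\ge 2$; moreover $\E\opnorm{\bX_i}^q\le\Tr\E|\bX_i|^q\le\tfrac{q!}{2}R^{q-2}\Tr\bA_i^2$, which renders $\E e^{\theta\opnorm{\bX_i}}$ finite for $0<\theta<1/R$ and thereby licenses a term-by-term expansion. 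Writing $v(\theta):=\theta^2/\{2(1-R\theta)\}$, this gives, for $0<\theta<1/R$,
\[
  \E e^{\theta\bX_i}=\bI_d+\theta\bM_i+\sum_{q\ge 2}\frac{\theta^q}{q!}\,\E\bX_i^q\preceq\bI_d+\theta\bM_i+v(\theta)\bA_i^2,
\]
and, since $\E e^{\theta\bX_i}\succ\bzero$ and $\log x\le x-1$, functional calculus yields $\log\E e^{\theta\bX_i}\preceq\E e^{\theta\bX_i}-\bI_d\preceq\theta\bM_i+v(\theta)\bA_i^2$.

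\textbf{The shift cancellation.} Next I would invoke the master inequality in its version with a fixed Hermitian summand, namely $\E\,\Tr\exp\bigl(\bH+\sum_i\bZ_i\bigr)\le\Tr\exp\bigl(\bH+\sum_i\log\E e^{\bZ_i}\bigr)$ for any fixed Hermitian $\bH$ and independent Hermitian $\bZ_i$; this follows by iterating Lieb's concavity theorem exactly as in the proof of \citet[Lemma~3.4]{Tro12}. Applying it with $\bH=-\theta\sum_i\bM_i$ and $\bZ_i=\theta\bX_i$, and using the per-summand bound above together with the monotonicity of $\bA\mapsto\Tr e^{\bA}$ with respect to the positive semidefinite order,
\[
  \E\,\Tr\exp\Bigl(\theta\sum_i(\bX_i-\bM_i)\Bigr)\le\Tr\exp\Bigl(-\theta\sum_i\bM_i+\sum_i\log\E e^{\theta\bX_i}\Bigr)\le\Tr\exp\Bigl(v(\theta)\sum_i\bA_i^2\Bigr)\le d\,e^{v(\theta)\sigma^2},
\]
where the last step uses $\lambda_{\max}\bigl(v(\theta)\sum_i\bA_i^2\bigr)=v(\theta)\sigma^2$. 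Finally, the standard Chernoff bound $\PP\bigl[\lambda_{\max}\{\sum_i(\bX_i-\bM_i)\}\ge t\bigr]\le e^{-\theta t}\,\E\,\Tr\exp\bigl(\theta\sum_i(\bX_i-\bM_i)\bigr)$ together with the usual Bernstein optimisation $\theta=t/(\sigma^2+Rt)\in(0,1/R)$ yields $\PP[\,\cdot\ge t\,]\le d\exp\{-t^2/(2(\sigma^2+Rt))\}$, which is stronger than the asserted $4d\exp\{-(t^2/32)/(\sigma^2+Rt)\}$.

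\textbf{Main obstacle.} The only genuine difficulty is non-commutativity: because $\bM_i$ and $\bX_i$ need not commute, one cannot deduce a bound on $\E(\bX_i-\bM_i)^q$ (or on $\log\E e^{\theta(\bX_i-\bM_i)}$) in terms of $\bA_i$ from the hypothesis, so the centred matrix Bernstein inequality of \citet[Theorem~6.2]{Tro12} is not directly applicable; matrix analogues of the triangle inequality for $|\cdot|$ and of $(\bA+\bB)^{q}\preceq 2^{q-1}(\bA^q+\bB^q)$ fail for $q\ge 3$, and any detour through $\opnorm{\bX_i}$ reintroduces ambient-dimension factors. Carrying $-\theta\sum_i\bM_i$ through Lieb's inequality as a rigid additive term, rather than absorbing it into the random summands, is what sidesteps this; the remaining steps (the term-by-term expansion, the estimate $\log\bQ\preceq\bQ-\bI_d$, the $\Tr\exp$ monotonicity, and the Bernstein optimisation) are all routine.
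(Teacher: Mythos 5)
Your proof is correct, and it takes a genuinely different route from the paper. The paper proves the lemma by symmetrisation: it introduces independent copies $\widetilde{\bX}_i$ and Rademacher signs $\epsilon_i$, fixes a leading eigenvector $\bv_*$ of the centred sum, applies a \emph{scalar} Bernstein inequality conditionally in that direction to peel off a factor of $2$, then a symmetrisation step to reduce to $\lambda_{\max}(\sum_i\epsilon_i\bX_i)$ at threshold $t/4$ (another factor of $2$), and finally invokes Tropp's centred matrix Bernstein inequality on $\{\epsilon_i\bX_i\}$ — legitimate because $\E(\epsilon_i\bX_i)^q$ vanishes for odd $q$ and equals $\E\bX_i^q\preceq\E|\bX_i|^q$ for even $q$. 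Your argument instead works directly at the level of the matrix moment generating function: after noting the elementary but crucial fact $\bX_i^q\preceq|\bX_i|^q$ for \emph{all} $q$ (so the non-central hypothesis already controls $\E\bX_i^q$), you bound $\log\E e^{\theta\bX_i}\preceq\theta\bM_i+v(\theta)\bA_i^2$ and then use the shifted form of the Lieb/Tropp master inequality, $\E\Tr\exp(\bH+\sum_i\bZ_i)\le\Tr\exp(\bH+\sum_i\log\E e^{\bZ_i})$, with $\bH=-\theta\sum_i\bM_i$, so that the mean terms cancel exactly inside the trace exponential rather than being handled probabilistically. This is exactly the right way to sidestep the non-commutativity obstruction you identify (one cannot bound moments of the centred $\bX_i-\bM_i$ from moments of $\bX_i$). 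The payoff is a cleaner argument and a strictly sharper tail bound, $d\exp\{-t^2/(2(\sigma^2+Rt))\}$, whereas the paper's symmetrisation route incurs the factors of $4$ and $1/32$. One minor point worth making explicit in a final write-up: the exchange $\E e^{\theta\bX_i}=\sum_{q\geq 0}\frac{\theta^q}{q!}\E\bX_i^q$ and the positive-definiteness of $\E e^{\theta\bX_i}$ (so that its logarithm is defined) both follow from the finiteness of $\E e^{\theta\opnorm{\bX_i}}$ that you establish, but it is worth stating that dominated convergence (in each matrix entry) is what justifies the interchange.
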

\begin{proof}
  Let $\widetilde{\bX}_1,\ldots,\widetilde{\bX}_n,\epsilon_1,\ldots,\epsilon_n$ be independent random matrices and variables, independent of $(\bX_1,\ldots,\bX_n)$, satisfying $\widetilde{\bX}_i \stackrel{d}{=} \bX_i$ and $\epsilon_i\sim U(\{-1,1\})$ for $i\in [n]$. Write $\bS_n := \sum_{i=1}^n (\bX_i-\mathbb{E}\bX_i)$ and $\widetilde{\bS}_n := \sum_{i=1}^n (\widetilde{\bX}_i-\mathbb{E}\bX_i)$. Given $\bX_1,\ldots,\bX_n$, let $\bv_* = \bv_*(\bX_1,\ldots,\bX_n)$ be a leading unit-length eigenvector of $\bS_n$.  Let $\tilde{\bv}_1,\ldots,\tilde{\bv}_d$ denote orthonormal eigenvectors of $\widetilde{\bX}_1$ with corresponding eigenvalues $\tilde{\mu}_1,\ldots,\tilde{\mu}_d$; fix $\bv \in \mathcal{S}^{d-1}$, and let $w_j := (\tilde{\bv}_j^\top \bv)^2$ for $j \in [d]$.  Since $\sum_{j=1}^d w_j = 1$, we have by Jensen's inequality that for $q \in \{2,3,\ldots\}$,
  \[
    |\bv^\top \widetilde{\bX}_1 \bv|^q = \biggl|\sum_{j=1}^d w_j \tilde{\mu}_j\biggr|^q \leq \sum_{j=1}^d w_j |\tilde{\mu}_j|^q = \bv^\top |\widetilde{\bX}_1|^q \bv.\]
  We deduce that $\mathbb{E}\bigl\{(\bv^\top \widetilde{\bX}_i \bv)_+^q\bigr\} \leq \mathbb{E}\bigl\{|\bv^\top \widetilde{\bX}_i \bv|^q\bigr\} \leq \frac{q!}{2}R^{q-2}\bv^\top \bA_i^2\bv$ for $i \in [n]$, so by Bernstein's inequality \citep[][Corollary~2.11]{BLM13}, 
\[
\mathbb{P}\bigl(\bv_*^\top \widetilde{\bS}_n \bv_* > t/2 \bigm | \bX_1,\ldots,\bX_n\bigr) \leq \exp\biggl(\frac{-t^2/8}{\bv_*^\top \sum_{i=1}^n \bA_i^2 \bv_* + Rt}\biggr) \leq \exp\biggl(\frac{-t^2/8}{\sigma^2+Rt}\biggr).
\]
We may assume that the right-hand side of the above inequality is at most $1/2$, since otherwise the lemma is trivially true. Therefore,
\begin{align}
\mathbb{P}\{\lambda_{\max}(\bS_n) \geq t\} &= \mathbb{P}(\bv_*^\top \bS_n \bv_* \geq t) \leq 2\mathbb{E}\bigl\{\mathbb{P}\bigl(\bv_*^\top \widetilde{\bS}_n \bv_* \leq t/2 \bigm | \bX_1,\ldots,\bX_n\bigr)\mathds{1}_{\{\bv_*^\top \bS_n \bv_* \geq t\}}\bigr\}\nonumber\\
&= 2\mathbb{P}\bigl(\bv_*^\top \widetilde{\bS}_n \bv_* \leq t/2 \text{ and } \bv_*^\top \bS_n \bv_* \geq t\bigr) \leq 2\mathbb{P}(\bv_*^\top (\bS_n - \widetilde{\bS}_n) \bv_* \geq t/2)\nonumber\\
&\leq 2\mathbb{P} \biggl[\lambda_{\max} \biggl\{\sum_{i=1}^n \epsilon_i(\bX_i - \widetilde{\bX}_i)\biggr\} \geq t/2\biggr] \leq 4\mathbb{P}\biggl\{ \lambda_{\max}\biggl(\sum_{i=1}^n \epsilon_i\bX_i\biggr) \geq t/4\biggr\},
\label{eq:symmetrisation}
\end{align}
where we have used the fact that $\epsilon_i(\bX_i-\widetilde{\bX}_i) \stackrel{d}{=}\bX_i-\widetilde{\bX}_i$ for all $i$ in the penultimate inequality.

Since $\E (\epsilon_i\bX_i) = \bzero$ and $\mathbb{E}\bigl\{(\epsilon_i\bX_i)^q\bigr\} \preceq \mathbb{E}(|\bX_i|^q) \preceq \frac{q!}{2}R^{q-2}\bA_i^2$ for $q \in \{2,3,\ldots\}$, applying the matrix Bernstein inequality \citep[][Theorem~6.2]{Tro12} to the sequence $\{\epsilon_i\bX_i\}_{i \in [n]}$ yields
	\[
		\PP\biggl\{\lambda_{\max}\biggl(\sum_{i=1}^n \epsilon_i \bX_i \biggr) \geq t/4\biggr\} \le  d \exp\biggl( \frac{-t^2 / 32}{\sigma^2 + Rt}\biggr). 
	\]
	We attain the conclusion by combining the above inequality with \eqref{eq:symmetrisation}.
\end{proof}
\begin{lem}
\label{lem:bernsteinoulli_moments}
Let $X_1,\ldots,X_n$ be independent $\textup{Bin}(d, p)$ random variables and let $\widehat p_i := X_i / d$. When $dp \geq 1$ and $n \geq 2$, we have 
\[
	\E \max_{i \in [n]} \widehat p_i \leq  10p \log n.
\]
\end{lem}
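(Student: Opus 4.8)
The plan is to control $\E\max_{i\in[n]}\widehat p_i = d^{-1}\E\max_{i\in[n]} X_i$ by a Chernoff-type argument applied directly to the maximum, rather than through a union bound. Fix $\lambda>0$. By Jensen's inequality,
\[
e^{\lambda\,\E\max_i X_i} \le \E e^{\lambda\max_i X_i} = \E\max_i e^{\lambda X_i} \le \sum_{i=1}^n \E e^{\lambda X_i} = n\,\E e^{\lambda X_1},
\]
where the final equality uses only that the $X_i$ are identically distributed (independence is not needed). Since $X_1\sim\mathrm{Bin}(d,p)$, its moment generating function is $\E e^{\lambda X_1} = (1-p+pe^\lambda)^d = \{1+p(e^\lambda-1)\}^d \le e^{dp(e^\lambda-1)}$, so taking logarithms gives
\[
\E\max_i X_i \le \frac{\log n + dp(e^\lambda-1)}{\lambda}, \qquad\text{hence}\qquad \E\max_i\widehat p_i \le \frac{\log n}{d\lambda} + \frac{p(e^\lambda-1)}{\lambda}.
\]

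The next step is simply to choose $\lambda$; optimising over $\lambda$ is possible but unnecessary, and $\lambda=1$ already suffices, giving $\E\max_i\widehat p_i \le \frac{\log n}{d} + (e-1)p$. Both hypotheses now enter: from $dp\ge 1$ we get $d^{-1}\le p$, so $\frac{\log n}{d}\le p\log n$; and from $n\ge 2$ we get $\log n\ge\log 2$, so $(e-1)p \le \frac{e-1}{\log 2}\,p\log n$. Adding the two contributions yields $\E\max_i\widehat p_i \le \bigl(1+\frac{e-1}{\log 2}\bigr)p\log n < 4\,p\log n \le 10\,p\log n$, which is the claim, with room to spare.

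There is no real obstacle here; the only point requiring a little care is that the raw estimate carries an additive $O(p)$ term that is not proportional to $\log n$, and one must invoke \emph{both} hypotheses — $dp\ge 1$ to turn $d^{-1}$ into $p$, and $n\ge 2$ to keep $\log n$ bounded away from $0$ — in order to absorb it into $10\,p\log n$. (An alternative route, via the union bound $\PP(\max_i\widehat p_i>u)\le \min\{1,n\,\PP(X_1\ge du)\}$, a multiplicative Chernoff tail bound for the binomial, and the identity $\E\max_i\widehat p_i = \int_0^\infty \PP(\max_i\widehat p_i>u)\,du$ with the integral split at $u\asymp p\log n$, also works but involves more bookkeeping with the binomial rate function, so the generating-function argument above is cleaner.)
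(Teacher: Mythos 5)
Your proof is correct, and it takes a genuinely different route from the paper's. The paper applies Bernstein's inequality to each $\widehat p_i$, takes a union bound to control $\PP(\max_i\widehat p_i \ge p+t)$, and then integrates the resulting tail bound from a carefully chosen threshold $t_0 := 2\sqrt{pd^{-1}\log n}+\tfrac{4}{3d}\log n$; the hypotheses $dp\ge1$ and $n\ge2$ are then used to absorb the assorted residual terms into $10p\log n$. You instead use the classical moment-generating-function maximal inequality: Jensen plus $\max_i e^{\lambda X_i}\le\sum_i e^{\lambda X_i}$ gives $\E\max_i X_i\le\lambda^{-1}\{\log n + dp(e^\lambda-1)\}$ directly, with no integration and no need for a sub-Gaussian or sub-gamma tail parametrisation. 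Your argument is shorter, does not even need independence (only a common marginal), and with $\lambda=1$ already yields the bound $(1+\tfrac{e-1}{\log 2})p\log n<4p\log n$, which is sharper in the constant than the $10p\log n$ that the paper's route delivers. Both proofs invoke the two hypotheses in the same role — $dp\ge1$ to turn $1/d$ into $p$, $n\ge2$ to bound $1\lesssim\log n$ — so the essential mechanism for why the result holds is the same; only the tail-control machinery differs.
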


\begin{proof}
	By Bernstein's inequality \citep[Lemma~2.2.9]{vanderVaartWellner1996} and a union bound, 
	\[
		\mathbb{P}\Bigl( \max_{i\in[n]} \widehat p_i \geq p+ t \Bigr) \leq  n\exp\biggl( - \frac{dt^2}{2(p + t/3)}\biggr). 
	\]
	Setting $t_0 := 2\sqrt{pd^{-1}\log n} + \frac{4}{3d}\log n$, we have
	\[
	\mathbb{E}\max_{i\in[n]} \widehat p_i = p+t_0+\int_{t_0}^\infty n\{e^{-dt^2/(4p)}+e^{-3dt/4}\}\,dt\leq  p+t_0+\sqrt\frac{\pi p}{d} + \frac{4}{3d} \leq  10p\log n,
	\]
	where we have used $\log n\geq \log 2$ and $1/d\leq  p$ in the final inequality.
\end{proof}

\begin{lem}
	\label{lem:7}
	Suppose that $ \bbeta, \betta \in \RR^d$ and $\ltwonorm{\betta} = \ltwonorm{\bbeta}$.  Let $\bSigma_1 := \bI_d + \bbeta\bbeta^\top$ and $\bSigma_2 := \bI_d + \betta\betta^\top$. Then 
	\[
		\mathrm{KL}\bigl(N_d(\bzero, \bSigma_1), N_d(\bzero, \bSigma_2)\bigr) = \frac{\|\betta\|_2^{4} - (\betta^\top\bbeta)^2}{2(1+ \|\betta\|_2^2)}.
	\]
\end{lem}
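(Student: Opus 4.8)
The plan is to invoke the standard closed-form expression for the Kullback--Leibler divergence between two centred multivariate Gaussians, namely
\[
\mathrm{KL}\bigl(N_d(\bzero,\bSigma_1), N_d(\bzero,\bSigma_2)\bigr) = \frac{1}{2}\Bigl\{\Tr(\bSigma_2^{-1}\bSigma_1) - d + \log\frac{\det\bSigma_2}{\det\bSigma_1}\Bigr\},
\]
and then to simplify the right-hand side by exploiting the rank-one structure of $\bSigma_1$ and $\bSigma_2$ together with the hypothesis $\ltwonorm{\betta} = \ltwonorm{\bbeta}$.

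First I would dispose of the log-determinant term. By the matrix determinant lemma, $\det(\bI_d + \bv\bv^\top) = 1 + \ltwonorm{\bv}^2$ for any $\bv \in \RR^d$, so $\det\bSigma_1 = 1 + \ltwonorm{\bbeta}^2 = 1 + \ltwonorm{\betta}^2 = \det\bSigma_2$, whence $\log(\det\bSigma_2/\det\bSigma_1) = 0$.

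Next I would compute the trace term via the Sherman--Morrison formula, which gives $\bSigma_2^{-1} = \bI_d - (1+\ltwonorm{\betta}^2)^{-1}\betta\betta^\top$. Expanding $\bSigma_2^{-1}\bSigma_1 = \bSigma_2^{-1}(\bI_d + \bbeta\bbeta^\top)$ and taking the trace, using $\Tr(\bu\bv^\top) = \bv^\top\bu$, yields
\[
\Tr(\bSigma_2^{-1}\bSigma_1) - d = \ltwonorm{\bbeta}^2 - \frac{\ltwonorm{\betta}^2 + (\betta^\top\bbeta)^2}{1 + \ltwonorm{\betta}^2}.
\]
Substituting $\ltwonorm{\bbeta}^2 = \ltwonorm{\betta}^2$ and combining over the common denominator $1 + \ltwonorm{\betta}^2$ gives $\bigl(\ltwonorm{\betta}^4 - (\betta^\top\bbeta)^2\bigr)/(1 + \ltwonorm{\betta}^2)$, and dividing by $2$ yields the claimed identity.

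There is no substantive obstacle here: the computation is routine. The only points requiring a little care are quoting the Gaussian KL formula with the arguments in the correct order and keeping track of where the equality $\ltwonorm{\bbeta} = \ltwonorm{\betta}$ is used --- it is precisely what makes the determinant term vanish and what permits the final algebraic simplification.
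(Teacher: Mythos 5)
Your proof is correct and follows essentially the same route as the paper's: both drop the log-determinant term (you via the matrix determinant lemma, the paper by observing that $\bSigma_1$ and $\bSigma_2$ have the same eigenvalues), then apply Sherman--Morrison and simplify the trace. The computations match in every substantive detail.
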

	
\begin{proof}
Since $\ltwonorm{\betta} = \ltwonorm{\bbeta}$, the matrices $\bSigma_1$ and $\bSigma_2$ share the same set of eigenvalues. Hence $|\bSigma_1| = |\bSigma_2|$ and we have
\[
	\KL(N_d(\bzero, \bSigma_1), N_d(\bzero, \bSigma_2)) = \frac{1}{2}\bigl\{\Tr\bigl(\bSigma_2^{-1}\bSigma_1\bigr) - d \bigr\}  = \frac{1}{2}\bigl\{ \Tr\bigl((\bI_d + \betta\betta^\top)^{-1}(\bI_d  + \bbeta\bbeta^\top)\bigr) - d\bigr\}.
\]
Now, by the Sherman--Morrison formula, 
\[
	(\bI_d + \betta\betta^\top)^{-1} = \bI_d - \frac{\betta\betta^\top}{1+ \|\betta\|_2^2}
\]
and thus we have
\[
	\begin{aligned}
		\KL(N(\bzero, \bSigma_1), N(\bzero, \bSigma_2)) & = \frac{1}{2} \biggl [\Tr\biggl(\biggl( \bI_d - \frac{\betta\betta^\top}{1+ \|\betta\|_2^2}\biggr) (\bI_d + \bbeta\bbeta^\top) \biggr) - d \biggr ] \\
		& = \frac{1}{2} \biggl(\|\bbeta\|_2^2 - \frac{\|\betta\|_2^2}{1 + \|\betta\|_2^2} - \frac{(\betta^\top\bbeta)^2}{1 + \|\betta\|_2^2} \biggr )= \frac{\|\betta\|_2^4 - (\betta^\top\bbeta)^2}{2(1 + \|\betta\|_2^2 )},
	\end{aligned}
\]
as required.
	\end{proof}

Theorem~\ref{thm:init_2toinf} and Proposition \ref{prop:init_f} exhibit bounds on $\mathcal{T}(\widetilde \bV_K, \bV_K)$ and $L(\widetilde \bV_K, \bV_K)$ given a deterministic observation scheme. To provide some intuition on the size of these bounds under the $p$-homogeneous missingness setting described in Section~\ref{Eq:HomogeneousTheory}, the following lemma derives probabilistic bounds for various norms of $\widetilde \bW$. 
\begin{lem}
\label{Lemma:LotsofNorms}
Assume (A5).  Then there exists an event of probability at least $1-d^2e^{-3np^2/32}$ on which each of the following inequalities hold:
\begin{enumerate}[noitemsep, label=\textrm{(\roman*)}]
\item  $\|\widetilde \bW\|_{\mathrm{op}} \leq 2dp^{-2}$;
\item $\|\widetilde \bW\|_{1\to 1} = \|\widetilde \bW\|_{\infty \to \infty} \leq 2dp^{-2}$;
\item $\|\widetilde \bW\|_{1} \leq 2d^2p^{-2}$;
\item $\|\widetilde \bW\|_{\mathrm{F}} \leq 2dp^{-2}$;
\item $\|\widetilde \bW\|_{2\to\infty} \leq 2d^{1/2}p^{-2}$.
\end{enumerate}
\end{lem}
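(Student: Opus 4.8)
The plan is to reduce the whole lemma to a single favourable event on which every entry of $\widetilde{\bW}$ is bounded by $2p^{-2}$. Concretely, set
\[
\mathcal{E} := \biggl\{ \sum_{i=1}^n \omega_{ij}\omega_{ik} \ge \frac{np^2}{2} \ \text{ for all } j, k \in [d]\biggr\}.
\]
On $\mathcal{E}$ every denominator appearing in the definition of $\widetilde{\bW}$ is strictly positive, so the ``otherwise'' branch is never triggered, and $0 \le \widetilde W_{jk} = n\big/\sum_{i=1}^n \omega_{ij}\omega_{ik} \le 2p^{-2}$ for all $j,k$. Since $\widetilde{\bW}$ is symmetric with nonnegative entries, each of (i)--(v) will follow from this entrywise bound together with textbook norm comparisons.

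To control $\mathbb{P}(\mathcal{E}^{\mathrm{c}})$, observe that under (A5), for $j \ne k$ the quantity $\sum_{i=1}^n \omega_{ij}\omega_{ik}$ is $\mathrm{Bin}(n,p^2)$-distributed with mean $np^2$, while for $j = k$ it equals $\sum_{i=1}^n \omega_{ij} \sim \mathrm{Bin}(n,p)$, which has mean $np \ge np^2$. A multiplicative Chernoff bound for the lower tail of a binomial (in the form $\mathbb{P}(\mathrm{Bin}(n,q) \le nq/2) \le e^{-nq/8}$, applied with $q = p^2$ for the off-diagonal pairs and with $q = p$ for the diagonal, using $p \ge p^2$) gives in all cases $\mathbb{P}\bigl(\sum_{i=1}^n \omega_{ij}\omega_{ik} < np^2/2\bigr) \le e^{-3np^2/32}$. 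A union bound over the at most $d^2$ pairs $(j,k)$ then yields $\mathbb{P}(\mathcal{E}^{\mathrm{c}}) \le d^2 e^{-3np^2/32}$, so $\mathcal{E}$ has probability at least $1 - d^2 e^{-3np^2/32}$, as required.

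Finally, work on $\mathcal{E}$ and write $w := 2p^{-2}$, so that $0 \le \widetilde W_{jk} \le w$ for all $j,k$. Then: (ii) $\|\widetilde \bW\|_{1\to 1}$ and $\|\widetilde \bW\|_{\infty\to\infty}$ coincide because $\widetilde{\bW}$ is symmetric, and each equals the maximal absolute row (equivalently column) sum, which is at most $dw = 2dp^{-2}$; (iii) $\|\widetilde \bW\|_1 = \sum_{j,k} \widetilde W_{jk} \le d^2 w = 2d^2 p^{-2}$; (iv) $\|\widetilde \bW\|_{\mathrm{F}} = \bigl(\sum_{j,k}\widetilde W_{jk}^2\bigr)^{1/2} \le (d^2 w^2)^{1/2} = 2dp^{-2}$; (i) $\|\widetilde \bW\|_{\mathrm{op}} \le \|\widetilde \bW\|_{\mathrm{F}} \le 2dp^{-2}$; and (v) $\|\widetilde \bW\|_{2\to\infty} = \max_j \|(\widetilde{\bW})_{j,\cdot}\|_2 \le (d w^2)^{1/2} = 2d^{1/2}p^{-2}$. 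This establishes all five inequalities on $\mathcal{E}$ simultaneously.

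I do not anticipate a genuine obstacle here: the argument is a routine concentration-plus-union-bound step followed by elementary norm inequalities. The only mild points of care are checking that the numerical constant in the Chernoff step is no larger than the claimed $3/32$ (which is comfortably true), and noting that it is precisely the event $\mathcal{E}$ that makes the degenerate case in the definition of $\widetilde W_{jk}$ vacuous, so that one really does control $n\big/\sum_i \omega_{ij}\omega_{ik}$ for every pair.
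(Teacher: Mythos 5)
Your proposal is correct and follows essentially the same approach as the paper: define the favourable event that every $\sum_{i} \omega_{ij}\omega_{ik}$ exceeds $np^2/2$ (the paper phrases this equivalently as $\|\widetilde\bW - p^{-2}\bone_d\bone_d^\top\|_\infty \leq p^{-2}$), control its complement by a union bound over $d^2$ pairs with a binomial lower-tail concentration inequality (the paper invokes Bernstein's inequality where you use a multiplicative Chernoff bound, both comfortably giving the $3/32$ constant), and then read off all five norm bounds from the entrywise bound $\widetilde W_{jk} \le 2p^{-2}$ via elementary comparisons.
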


\begin{proof}
Define an event
\[
\cA := \bigl\{\|\widetilde \bW - p^{-2}\bone_d \bone_d^\top\|_\infty \leq p^{-2}\bigr\}.
\]
For $j,k\in[d]$, write $\widehat P_{jk} := n^{-1} \sum_{i = 1}^n \omega_{ij}\omega_{ik}$. Then by a union bound and Bernstein's inequality \citep[][Proposition~2.14]{Wainwright2019},  we have
\[
\PP(\cA^{\mathrm{c}}) \leq \sum_{j,k\in [d]} \PP\bigl(\widehat P_{jk} < p^2/2\bigr) \leq d^2 e^{-3np^2/32}.
\]
Note that on $\cA$, we have $\|\widetilde \bW\|_\infty \leq 2p^{-2}$. The desired bounds then follow respectively from the following inequalities: $\|\widetilde \bW\|_{\mathrm{op}} \leq d\|\widetilde \bW\|_\infty$, $\|\widetilde \bW\|_{1\to 1} = \|\widetilde \bW\|_{\infty\to\infty} \leq d\|\widetilde \bW\|_\infty$, $\|\widetilde \bW\|_1 \leq d^2 \|\widetilde \bW\|_\infty$, $\|\widetilde \bW\|_{\mathrm{F}} \leq d\|\widetilde \bW\|_\infty$ and $\|\widetilde \bW\|_{2\to\infty} \leq d^{1/2}\|\widetilde \bW\|_\infty$. 
\end{proof}

\end{document}